\newtheoremstyle{plainbody}
{\topsep}{\topsep}
{\itshape\normalsize\setstretch{1.5}}
{0pt}{\bfseries}{.}{0.5em}{}
\newtheoremstyle{remarkbody}
{\topsep}{\topsep}
{\normalfont\normalsize\setstretch{1.5}}
{0pt}{\itshape}{.}{0.5em}{}
\theoremstyle{plainbody}
\newtheorem{theorem}{Theorem}
\newtheorem{lemma}{Lemma}
\newtheorem{proposition}{Proposition}
\newtheorem{corollary}{Corollary}
\newtheorem{defn}{Definition}
\newtheorem{assumption}{Assumption}
\theoremstyle{remarkbody}
\newtheorem{remark}{Remark}
\newcommand{\Comment}[1]{\(\triangleright\) #1}
\newcommand{\scc}{{\normalfont\CIRCLE}}
\newcommand{\occ}{{\normalfont\Circle}}
\newcommand{\lscc}{{\normalfont\LEFTcircle}}
\definecolor{blkA}{RGB}{255,228,225}
\definecolor{blkB}{RGB}{209,238,223}
\definecolor{blkC}{RGB}{221,235,247}
\definecolor{blkD}{RGB}{255,243,205}
\definecolor{blkE}{RGB}{237,224,255}
\definecolor{blkF}{RGB}{255,182,120}
\let\oldfootnote\footnote
\renewcommand{\footnote}[1]{\oldfootnote{\fontsize{9}{11}\selectfont #1}}
\title{Robust Spectral Watermark for Synthetic Tabular Data}
\author{
  Yizhou Zhao\textsuperscript{1}\thanks{Email: \texttt{yzzhao@sas.upenn.edu}},\;
  Xiang Li\textsuperscript{1},\;
  Peter X. K. Song\textsuperscript{2},\;
  Qi Long\textsuperscript{1},\;
  Weijie Su\textsuperscript{1}\thanks{Corresponding author} \\[0.6em]
  \textsuperscript{1}University of Pennsylvania \quad
  \textsuperscript{2}University of Michigan
}
\date{}
\begin{document}

\maketitle

\begin{abstract}
  The rise of generative AI has enabled the production of high-fidelity synthetic tabular data across fields such as healthcare, finance, and public policy, raising growing concerns about data provenance and misuse. Watermarking offers a promising solution to address these concerns by ensuring the traceability of synthetic data, but existing methods face many limitations: they are computationally expensive due to reliance on the inverse process of large diffusion models, struggle with mixed discrete-continuous data, or lack robustness to common post-processing attacks. To address these limitations, we propose \textsc{Tab-Drw}, an efficient and robust post-editing watermarking scheme for synthetic tabular data. \textsc{Tab-Drw} embeds watermark signals in the frequency domain: it normalizes heterogeneous features via the Yeo-Johnson transformation and standardization, applies the discrete Fourier transform (DFT), and adjusts the imaginary parts of adaptively selected entries according to precomputed pseudorandom bits. To further enhance robustness and efficiency, we introduce a novel rank-based pseudorandom bit generation method that enables row-wise retrieval without incurring storage overhead.
  Experiments on five benchmark tabular datasets show that \textsc{Tab-Drw} achieves strong detectability and robustness against post-processing and adaptive attacks, while preserving high data fidelity and fully supporting mixed-type features.
\end{abstract}

\textbf{Keywords:} Watermarking, Synthetic Tabular Data, Robustness, Generative AI

\textbf{Mathematics Subject Classification (2020):} Primary 62H15; Secondary 62F35, 62H12.

\section{Introduction} \label{sec:intro}

Tabular data is a predominant format for structured information in many fields such as healthcare, finance, and public policy \citep{borisov2022deep}. It facilitates tasks such as decision-making, risk assessment, and resource allocation. However, access to high-quality tabular data is often restricted by privacy concerns, regulatory constraints on data sharing, and the cost of human annotation. Recent advances in generative AI have revolutionized synthetic data generation \citep{2019Modeling,2021Ctab-gan,2021Ganblr,2023Goggle,2023Tabddpm,2024Mixed,2024Tabmt,2025TTVAE}, creating high-fidelity tabular datasets that closely match real-world data. Synthetic tabular data now offers a compelling alternative for data sharing and model training in various domains \citep{bauer2024comprehensive}.

Despite these benefits, synthetic tabular data introduces additional risks. Misuse may lead to civil disputes, regulatory violations, or broader societal harms \citep{2024generativeai}. For example, generating synthetic datasets from copyrighted sources without authorization may infringe intellectual property rights \citep{2023Provable}. In finance, synthetic transaction records can be used to facilitate fraud \citep{2020Finance}. In healthcare, biased or inaccurate synthetic patient data may misguide clinical decision-making and result in adverse outcomes \citep{2024Clinical}. As synthetic data becomes increasingly realistic and widespread, ensuring accountability and provenance has become critical \citep{liu2024best}.

To address concerns surrounding the misuse of synthetic tabular data, watermarking has emerged as a promising solution. The core idea is to embed invisible statistical signals into synthetic data before release, allowing reliable detection by a verifier with access to secretly shared information. An effective watermarking scheme should satisfy four key properties: 1) \textbf{fidelity}, preserving the quality and utility of the data, 2) \textbf{detectability}, allowing reliable identification through a private detection process, 3) \textbf{applicability}, enabling efficient watermark embedding even after data generation and for mixed discrete-continuous tabular data, and 4) \textbf{robustness}, ensuring resilience against post-processing attacks such as deletions or value modifications \citep{2001digital,2023Robust,li2025robust}. Although significant progress has been made in watermarking text data \citep{2023watermarking,2023Watermark,2023Robust,2023ProvableRobust,dathathri2024scalable,2025gaussmark,2025marktune} and images \citep{2023Tree,2024Gaussian,2024Attack-Resilient}, existing watermarking schemes for synthetic tabular data fail to simultaneously achieve these four properties.

\subsection{Existing Work}
Current approaches to watermarking synthetic tabular data mainly fall into two categories: sampling-phase watermarking \citep{2025tabwak,2025muse} and post-editing watermarking \citep{2024watermarkinggeneraive,2024tabularmark}. Sampling-phase methods typically change the sampling process of large diffusion models. Specifically, \citet{2025tabwak}~embeds watermark signals into structured latent noise and detects the structure by measuring correlations with noise reconstructed via the inverse process. While achieving high fidelity and robustness, it relies on reversible sampling strategies, such as DDIM~\citep{2020Denoising}, which is prone to reconstruction errors and computationally expensive. \citet{2025muse}~generates multiple samples at the same time and outputs the one with the highest pseudorandom score, which incurs higher computational cost though preserving generation quality.

In contrast, post-editing methods are lightweight: they often modify generated or existing datasets with pseudorandom operations, but don't change the sampling process or invoke large neural networks.
For example, \citet{2024watermarkinggeneraive} proposes a ``green list'' method that bins each tabular value into key-selected intervals and detects whether values fall into the designated sets. Although effective in preserving fidelity and detectability, it struggles with mixed-type (continuous and discrete) data and lacks robustness against noise attacks. \citet{2024tabularmark} embeds watermark signals by perturbing key cells with values randomly selected from the so-called ``green domain''. However, it requires storing the original dataset for perturbation recovery, leading to substantial space overhead in generative settings. 

Overall, existing methods offer valuable insights but fall short of providing a lightweight, robust, and broadly applicable solution for synthetic tabular data. See Appendix~\ref{app:related} for a broader discussion of related work on watermarking.

\begin{figure}[!t]
\centering
\includegraphics[width=\columnwidth]{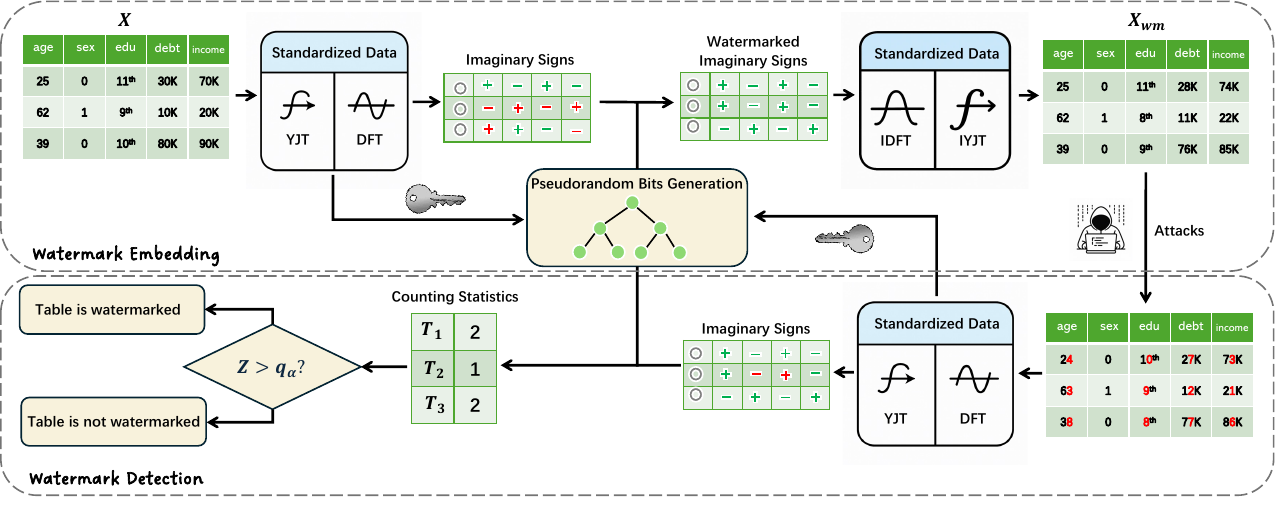}
\caption{Our proposed watermarking scheme, \textsc{Tab-Drw}, embeds watermarks into standardized tabular data by modifying the imaginary components of the frequency-domain representation to align with pseudorandom bits. Detection evaluates the degree of alignment: strong alignment indicates watermarked data, while weak alignment suggests non-watermarked data.}
\label{fig:algorithm}
\end{figure}

\begin{table}[!t]
\centering
\caption{Comparison of our method with existing works. \occ~indicates ``not satisfied'', \lscc~indicates ``partially satisfied'', and \scc~indicates ``satisfied''.}
\label{tab:comparison}
\resizebox{\linewidth}{!}{%
\begin{tabular}{@{}l|cccccc@{}}
\toprule
Methods
& Category
& Data type
& Fidelity
& Detectability
& Applicability
& Robustness \\
\midrule
\citet{2025tabwak}
& Sampling-phase & Continuous \& Discrete   & \scc & \scc & \lscc & \lscc \\
\citet{2025muse}
& Sampling-phase & Continuous \& Discrete   & \scc & \scc & \lscc & \lscc \\
\citet{2024watermarkinggeneraive}
& Post-editing & Only continuous & \scc & \scc & \lscc & \lscc \\
\citet{2024tabularmark}
& Post-editing & Continuous \& Discrete  & \scc & \scc & \lscc & \lscc \\
\textbf{\textsc{TAB-DRW} (Ours)}
& \textbf{Post-editing} & \textbf{Continuous \& Discrete} & \scc & \scc & \scc & \scc \\
\bottomrule
\end{tabular}}

\end{table}

\subsection{Our Contribution}
In this work, we propose a new watermarking method that simultaneously satisfies the four desired properties above.
Our contributions are summarized below.

\begin{enumerate}[leftmargin=*]
\item \textbf{A new watermarking scheme.} We propose \textsc{Tab-Drw}, a lightweight post-editing watermarking method that embeds robust watermark signals in the frequency domain. Specifically, it modifies the discrete Fourier transform (DFT) representation of tabular data to align with a precomputed pseudorandom bit sequence (see the first row of Figure~\ref{fig:algorithm}, detailed in Section~\ref{sec:method}). Detection then evaluates the degree of alignment: strong alignment indicates watermarked data, while weak alignment suggests non-watermarked data (see the second row of Figure~\ref{fig:algorithm}, detailed in Section~\ref{sec:method}).

\textsc{Tab-Drw} offers several practical advantages. First, it is computationally efficient and requires no access to the data-generating model. Second, it applies to both existing and synthetic tabular data while preserving data fidelity with minimal distortion. Third, it is versatile and can be readily extended to other settings. Optional rounding and outlier clipping are applied to maintain semantic validity. To support real-world multi-key deployment and mitigate informative attacks, we further introduce a privacy-enhanced variant.

\item \textbf{A new pseudorandom bit generation method.} In \textsc{Tab-Drw}, we propose a rank-based pseudorandom bit generation scheme for embedding a secret yet detectable signal. The key idea is to leverage rank information to enhance robustness against post-processing attacks, while avoiding explicit storage of the pseudorandom bits (see Figures~\ref{fig:random_bits} and~\ref{fig:al2_ill}, and Section~\ref{sec:method} for details). Under moderate attacks, the scheme guarantees that the recovered pseudorandom bits change only minimally. The method is also computationally efficient: during detection, the pseudorandom bits are reconstructed by querying an implicit storage structure induced by robust statistics computed on key-selected columns.

\item \textbf{Theoretical analysis.}
We provide theoretical insights into the bias and robustness of \textsc{Tab-Drw}.
We characterize the bias through closed-form expressions for the resulting entry-wise and column-wise distortion, and establish robustness by deriving a lower bound on the Z-score under noise-corrupted watermarked tabular data in realistic settings.

\item \textbf{Empirical validation and robustness evaluation.}
We show that \textsc{Tab-Drw} consistently achieves superior detectability and robustness while preserving high data fidelity compared to existing methods. This is validated through comprehensive experiments on five benchmark datasets with mixed feature types. We further evaluate robustness against informed adversaries who know the full watermarking pipeline but not the secret key, and demonstrate that targeted watermark scrubbing or spoofing cannot be performed reliably without incurring substantial loss of data fidelity. Finally, through a case study on a real-world dataset, we show that \textsc{Tab-Drw} is entropy-aware, with embedding strength implicitly adapting to feature entropy to preserve semantics for low-entropy features (e.g., \texttt{gender}).

\end{enumerate}

\subsection{Organization of the Paper}
The remainder of the paper is organized as follows. In Section~\ref{sec:method}, we introduce the proposed \textsc{Tab-Drw}, including the frequency-domain watermark embedding and detection procedures, and the rank-based pseudorandom bit generation scheme. In Section~\ref{sec:theoretial_analysis}, we provide a theoretical analysis of watermark distortion and robustness under additive Gaussian noise. In Section~\ref{sec:exper}, we present empirical evaluations on five benchmark datasets, covering fidelity, detectability, robustness to post-processing and adaptive attacks, watermark embedding and detection runtime, and a case study on the impact of watermark embedding on low-entropy categorical variables. Finally, we conclude in Section~\ref{sec:conclusion} with a discussion and directions for future research. Most experimental implementation details, supplementary results, and technical proofs are provided in the appendix.

\section{Method}
\label{sec:method}

\subsection{Watermark Embedding}

\paragraph{High-level description.}
As shown in Figure~\ref{fig:algorithm}, the embedding process can be divided into three steps. It begins by preprocessing the given tabular data through two transformations: first, a column-wise Yeo-Johnson transformation (YJT)~\citep{2000new} with standardization to reduce heterogeneity and unify the scale; second, a row-wise discrete Fourier transform (DFT)~\citep{1999Discrete} to map the data into the frequency domain. Next, the algorithm modifies the imaginary components of the frequency-domain representation to align with a precomputed pseudorandom bit sequence. Finally, it applies the inverse transformations to reconstruct the modified data in the original domain, which is then released for public use and future detection. We provide a detailed explanation of these steps below. \footnote{Throughout this paper, we denote the imaginary unit by $\mathrm{i}$, to avoid confusion with the index~$i$.}

\begin{defn}[YJT]
\label{def:YJT}
For a real-valued input $x \in \mathbb{R}$, the YJT $\Psi(\lambda, x)$ is defined as
\begin{equation*}
\Psi(\lambda, x) =
\begin{cases}
\frac{(x + 1)^\lambda - 1}{\lambda}, & \text{if } x \geq 0,\ \lambda \neq 0, \\
\ln(x + 1), & \text{if } x \geq 0,\ \lambda = 0, \\
-\frac{(-x + 1)^{2 - \lambda} - 1}{2 - \lambda}, & \text{if } x < 0,\ \lambda \neq 2, \\
-\ln(-x + 1), & \text{if } x < 0,\ \lambda = 2.
\end{cases}
\end{equation*}
Here, $\lambda$ is a transformation parameter that is automatically selected to reduce heterogeneity~\citep{2025scipy}.
\end{defn}

\begin{defn}[DFT and IDFT]
\label{def:DFT}
Given a row of tabular data $\bm{x} = (x_0, x_1, \dots, x_{p-1}) \in \mathbb{R}^{p}$, its DFT is defined as $\bm{y} = \texttt{DFT}(\bm{x}) := (y_0,y_1,\ldots,y_{p-1}) \in \mathbb{C}^p$, where for each $t=0, \ldots, p-1$,
\[
y_t = \frac{1}{\sqrt{p}} \sum_{n=0}^{p-1} x_n e^{-\mathrm{i} \frac{2\pi}{p} tn}.
\]
The inverse DFT (IDFT) is given by $\bm{x} = \texttt{IDFT}(\bm{y}) \in \mathbb{R}^p$, where for each $n=0, \ldots, p-1$,
\[
x_n = \frac{1}{\sqrt{p}} \sum_{k=0}^{p-1} y_k e^{\mathrm{i} \frac{2\pi}{p} kn}
\]
Since $\texttt{IDFT}\circ \texttt{DFT} = \texttt{Id}$, their composition implies an exact recovery of the original input.
\end{defn}

\paragraph{Step 1: Column-wise and row-wise transformations.}
In general, features in a tabular dataset exhibit heterogeneous scales and types (continuous or discrete). This heterogeneity would prevent a uniform watermarking process among features, as features with larger magnitudes could dominate others. To address this, we first apply a column-wise YJT defined in Def.~\ref{def:YJT}, and then standardize each transformed column.
A crucial property of YJT is that it is monotonic and invertible, allowing for exact recovery of the original data through its inverse.
After the YJT followed by standardization, each row becomes a real-valued sequence with unified scale. We then apply a row-wise DFT to obtain the frequency-domain representation $\bm{y} = \texttt{DFT}(\bm{x}) \in \mathbb{C}^p$, as defined in Def. \ref{def:DFT}.

\paragraph{Step 2: Modification on the imaginary parts of the DFT.}
Let $\bm{x} := (x_0, x_1, \dots, x_{p-1})$ denote a row of tabular data $\mathbf{X} \in \mathbb{R}^{N \times p}$ after applying the YJT and standardization. 
Let $\bm{y} := \texttt{DFT}(\bm{x}) = (y_0, y_1, \dots, y_{p-1}) \in \mathbb{C}^p$ be the frequency-domain representation of $\bm{x}$ obtained via the DFT. Since $\bm{x}$ is real-valued, its DFT coefficients satisfy conjugate symmetry, that is, $y_t = \overline{y_{p-t}}$ for $t = 1,\dots,p-1$. As a result, the coefficients with indices $t$ and $p-t$ form conjugate pairs. The coefficient $y_0$, corresponding to the DC component, is real-valued and therefore does not form a pair with any other index. In addition, when $p$ is even, the coefficient $y_{p/2}$ corresponds to the Nyquist frequency and is also real-valued, making it self-conjugate. Consequently, it suffices to modify only the entries of $\bm{y}$ with indices $\{1,\dots,m\}$, where $m = \left\lfloor\frac{p-1}{2}\right\rfloor$, as all remaining coefficients are uniquely determined by conjugate symmetry. We refer to the entries $\{y_t\}_{t=1}^m$ as the effective entries. For each entry $y_t \in \mathbb{C} $, we denote its real and imaginary parts by $\Re(y_t)$ and $\Im(y_t)$ respectively. For each effective entry, we generate a 0-1 pseudorandom bit $\zeta_t \sim \text{Bernoulli}(0.5)$ and modify $\Im(y_t)$ to align with the corresponding $\zeta_t$. We consider two modification strategies, described below.

\paragraph{Initial idea: Hard sign flip.}
The most natural strategy is to force the sign of $\Im(y_t)$ to match $\zeta_t$. Specifically, for each $t=1, \dots, m$, we define:
\begin{equation} \label{eq:1} y_t^{\mathrm{wm}} = \Re(y_t) + (2\zeta_t - 1) \mathrm{i} \cdot |\Im(y_t)|, \quad \text{and} \quad y_{p-t}^{\mathrm{wm}} = \overline{y_t^{\mathrm{wm}}}.
\end{equation} Under this rule, if $\Im(y_t)$ already matches the sign of $\zeta_t$, no change is made and $y_t^{\mathrm{wm}} = y_t$. Otherwise, the sign of $\Im(y_t)$ is flipped so that $\Im(y_t^{\mathrm{wm}}) = -\Im(y_t)$.

\paragraph{Refinement: Soft variant.}
The hard sign flipping may potentially introduce large distortions, degrading data fidelity. As a refinement, we introduce a softer modification controlled by two soft hyperparameters $(\gamma, \delta)$. Specifically, we modify $y_t$ only if $|\Im(y_t)|$ is among the $\gamma$-smallest values in $\{|\Im(y_t)|\}_{t=1}^m$ and the sign of $\Im(y_t)$ differs from $2\zeta_t-1$. Furthermore, we shrink the imaginary part by a factor $\delta \in [-1, 1]$ to further limit the distortion:
\begin{equation} \label{eq:2}
y_t^{\mathrm{wm}} =
\begin{cases} \Re(y_t) - \mathrm{i} \delta \cdot \Im(y_t),
& \text{if}~\Im(y_t) \cdot (2\zeta_t-1)<0
~\text{and}~|\Im(y_t)| \le \mathrm{Quantile}_{\gamma}(\{|\Im(y_t)|\}_{t=1}^m),\\
y_t, & \text{otherwise},
\end{cases}
\end{equation}
and $y_{p-t}^{\mathrm{wm}} = \overline{y_t^{\mathrm{wm}}}$.
When $\gamma=\delta=1$, the soft variant in \eqref{eq:2} reduces to the hard sign flip in \eqref{eq:1}. 
When $\gamma=0$ or $\delta=-1$, it reduces to no watermarking. A more fine-grained theoretical analysis of how $(\gamma,\delta)$ influences the resulting watermark distortion is provided in Section~\ref{sec:theoretial_analysis}. In practice, varying $(\gamma,\delta)$ enables flexible control over the trade-off between watermark strength and data fidelity. For example, one can tune $(\gamma,\delta)$ to maximize detectability under a prescribed distortion budget by performing a lightweight grid search on synthetic samples. Detailed runtime evaluations are reported in Section~\ref{sec:runtime}.

\begin{algorithm}[!t]
\caption{Watermark embedding of \textsc{Tab-Drw}}\label{alg:1}
\begin{algorithmic}[1]
\STATE \textbf{Input}: Tabular data $\mathbf{X} \in \mathbb{R}^{N \times p}$, parameters $\gamma \in [0,1]$ and $\delta \in [-1,1]$.
\STATE \textbf{Initial}: Transform $\mathbf{X}$ using YJT and standardization (still denoted as $\mathbf{X}$ for simplicity).
\FOR{each row $\bm{x}$ in $\mathbf{X}$} \STATE Compute $\bm{y} \leftarrow \texttt{DFT}(\bm{x})$ and generate pseudorandom bits $\{\zeta_t\}_{t=1}^m$ via Algorithm~\ref{alg:2}. \STATE Modify $\bm{y}$ according to soft variant \eqref{eq:2} to obtain $\bm{y}^{\mathrm{wm}}$. \STATE Recover $\bm{x}^{\mathrm{wm}} \leftarrow \texttt{IDFT}(\bm{y}^{\mathrm{wm}})$. \ENDFOR
\STATE Collect each $\bm{x}^{\mathrm{wm}}$ to form a matrix $\mathbf{X}^{\mathrm{wm}}$.
\STATE  Apply inverse standardization and inverse YJT to $\mathbf{X}^{\mathrm{wm}}$, round and clip if needed, and release.
\end{algorithmic}
\end{algorithm}

\paragraph{Step 3: Inverse steps to return to the original data domain.}
In the final step, we apply the inverse DFT to each modified $\bm{y}^{\mathrm{wm}}$. Since $\bm{y}^{\mathrm{wm}}$ preserves conjugate symmetry, the inverse DFT yields a real-valued vector. We then collect the resulting vectors to form a matrix $\mathbf{X}^{\mathrm{wm}}$ and apply the inverse standardization followed by the inverse YJT to map the data back to the original domain. For discrete features, we round values to the nearest valid entry. For example, a value of 0.4 for the \texttt{gender} entry would be rounded to 0 (\texttt{Female}), while 0.6 would be rounded to 1 (\texttt{Male}).
For bounded features, we clip values to stay within the valid range. The full procedure is summarized in Algorithm~\ref{alg:1}.\footnote{We also introduce a privacy-enhanced variant to support multi-key scenarios. See Appendix~\ref{app:privacy_tabdrw}~\&~\ref{app:multi-key_eva} for details.} In Appendix~\ref{app:ablation}, we present an ablation study examining the impact of necessary rounding and clipping on watermark detectability. In Section~\ref{sec:case_study}, we provide a case study to illustrate how \textsc{Tab-Drw} handles low-entropy categorical variables, such as \texttt{gender}, in a conservative and adaptive manner that preserves their semantic validity.

\begin{remark}[Related work]
Modifying the frequency-domain representation to embed watermarks has also been explored for diffusion models, such as Tree-Ring Watermarking~\citep{2023Tree}. However, this method typically applies deterministic, structured modifications (e.g., zeroing subregions), which are unsuitable for tabular data where each feature has distinct semantic meanings. In contrast, \textsc{Tab-Drw} performs fine-grained, row-wise perturbations guided by pseudorandom bits, preserving feature fidelity while enabling robust detection through a rank-based pseudorandom bit generation scheme.
\end{remark}

\begin{remark}[Column selection for watermarking] \label{remark:col_sel}
Since \textsc{Tab-Drw} is a lightweight post-editing watermark, model providers or dataset owners can flexibly choose any subset of columns to watermark. For example, the watermark can be applied only to columns containing sensitive or high-value information that attackers are unlikely to modify. In Section~\ref{sec:exper}, we do not use any specialized column-selection strategy and exclude only columns with extreme distributions, which contribute little to the watermark signal and can cause scaling issues in a small number of rows even after YJT.
\end{remark}

\subsection{Watermark Detection}

During detection, we can recover the pseudorandom bits using secret keys. Recall that \textsc{Tab-Drw} embeds watermark signals by flipping the imaginary signs in the frequency domain via the alignment with these pseudorandom bits. As a result, watermarked rows are expected to exhibit stronger alignment with the recovered pseudorandom bits. A natural detection strategy is to count the number of aligned entries in the suspect data under investigation: if the alignment is significantly higher than the expected without watermarking, we declare the data watermarked; otherwise we do not. Statistically speaking, we solve the following hypothesis testing problem \citep{2024watermarkinggeneraive}:
\[
H_0:\ \text{The table is not watermarked} \quad \text{vs.} \quad H_1:\ \text{The table is watermarked}.
\]
Given a suspect tabular data $\mathbf{X} \in \mathbb{R}^{N \times p}$, we first apply \textbf{Step 1} of the watermark embedding procedure to obtain its frequency-domain representation $\mathbf{Y} = \{y_{i,j}\}_{i,j} \in \mathbb{C}^{N \times p}$
, and denote the corresponding pseudorandom bits by $\{\zeta_{i,j}\}_{i,j}$. For each row $i$, we define the alignment count
\[
T_i = \sum_{j=1}^m \mathbb{I}\left[\Im(y_{i,j}) \cdot (2\zeta_{i,j} - 1) > 0\right].
\]
Then we compute a one-sided Z-score to measure deviation from the expected alignment under $H_0$:
\begin{equation}
\label{eq:Z-score}
Z = \frac{\frac{1}{N} \sum_{i=1}^N T_i - \mu_{\mathrm{nwm}}}{\frac{\sigma_{\mathrm{nwm}}}{\sqrt{N}}},
\end{equation}
where $\mu_{\mathrm{nwm}}$ and $\sigma_{\mathrm{nwm}}$ denote the mean and standard deviation of $T_i$ under $H_0$. Given a critical value $q_\alpha$ for a significance level $\alpha$, we reject $H_0$ and declare the table watermarked if $Z > q_\alpha$.
In practice, we approximate $\mu_{\mathrm{nwm}}, \sigma_{\mathrm{nwm}}$ and $q_{\alpha}$ using Monte Carlo simulation.

\subsection{Pseudorandom Bits Generation} \label{pseudo_gen}

\begin{algorithm}[!t]
\caption{Row-wise Pseudorandom Bits Generation}
\label{alg:2}
\begin{algorithmic}[1]
\STATE \textbf{Input}: Standardized tabular data $\mathbf{X} \in \mathbb{R}^{N \times p}$, target row $\bm{x}^{\ast} \in \mathbb{R}^{1 \times p}$, and watermark key $\kappa$.
\STATE \textbf{Initial}: An empty pseudorandom bit list $\mathbf{S}$, $m$ = $\lfloor (p-1)/2\rfloor$.
\STATE Sample a subset of column indices $\mathcal{I} \subset \{0, 1, \dots, p-1\}$ using $\kappa$. \label{step:1}
\STATE Compute the sum of selected entries for each row of $\mathbf{X}$. \label{step:2}
\STATE Compute the rank of the target row among all rows in $\mathbf{X}$ to obtain $x^{\ast}_{\mathrm{rank}}$. \label{step:3}
\STATE Normalize $x^{\ast}_{\mathrm{rank}}$ to lie in $[0,1]$: $x^{\ast}_{\mathrm{rank}} \leftarrow x^{\ast}_{\mathrm{rank}} / (N-1)$. \label{step:4}
\STATE \textbf{for} $j \leftarrow 1$ \TO $\lceil m/2\rceil$ \textbf{do} \hfill \Comment{Traverse the path from the root to the leaf.} \label{step:5}
\STATE \quad Locate the underlying node in the path: $k \leftarrow \min\left(2^j-1,\ \lfloor 2^j \cdot x^{\ast}_{\mathrm{rank}} \rfloor\right).
$\label{step:6}
\STATE \quad Append $[1,0]$ to $\mathbf{S}$ if $k \% 4 = 0$ or $3$, else $[0,1]$. \label{step:7}
\STATE \textbf{end for}
\STATE Truncate $\mathbf{S}$ to its first $m$ entries, and release. \label{step:8}
\end{algorithmic}
\end{algorithm}

The remaining issue is how to construct the pseudorandom bits $\{\zeta_{i,j}\}_{i,j}$ used for 
watermark embedding and detection. The design must satisfy two key requirements: 1) \textbf{robustness}, ensuring that recovered pseudorandom bits remain stable under post-processing attacks, and 2) \textbf{memory efficiency}, avoiding the impractical cost of explicitly storing pseudorandom bits for each generated table.

\paragraph{Scheme description.}

To achieve the two goals, we propose a new pseudorandom bit generation scheme with two key components: 
1) an implicit storage structure based on a binary tree, and 2) a retrieval mechanism using rank statistics. 
Algorithm~\ref{alg:2} presents the procedure for generating the pseudorandom bit sequence for each row. 
Below, we describe the scheme step by step.

Given a standardized tabular data $\mathbf{X} \in \mathbb{R}^{N\times p}$ with $m= \lfloor\frac{p-1}{2}\rfloor$ 
effective entries, we first sample a subset of column indices $\mathcal{I} \subset \{0,1,\dots,p-1\}$ using the 
secret watermark key $\kappa$ (Line~\ref{step:1}). For each row $\bm{x}^{\ast} \in \mathbb{R}^{1 \times p}$, we 
compute the sum of its entries over $\mathcal{I}$ and use this value to determine $x^{\ast}_{\mathrm{rank}}$, the 
rank of the row among all rows in $\mathbf{X}$. The resulting rank is then normalized to the interval $[0,1]$ 
(Lines~\ref{step:2}–\ref{step:4}). For example, if a row has rank $x^{\ast}_{\mathrm{rank}}=1$ 
among $N=3$ rows (i.e., the second-largest), its normalized rank is $x^{\ast}_{\mathrm{rank}} / (N-1) = 0.5$. 
Next, we partition $[0,1]$ into $2^{\lceil m/2 \rceil}$ equal-sized bins and construct a binary tree of depth 
$\lceil m/2 \rceil$, where each node is deterministically assigned a pseudorandom bit pair and each leaf node 
corresponds to one bin. The normalized rank of each row determines its assigned bin, and the path from the root to 
the corresponding leaf encodes the pseudorandom bit sequence for that row (see Figure~\ref{fig:random_bits} for 
an illustration through a simplified example with standardized tabular data 
$X \in \mathbb{R}^{3 \times 13}$ and $m=6$). Specifically, at each level $j$ of the tree, we use $x^{\ast}_{\mathrm{rank}}$ to 
identify the associated node and append its bit pair to the sequence list $\mathbf{S}$ 
(Lines~\ref{step:5}–\ref{step:7}; see Figure~\ref{fig:al2_ill} for an illustration). 
Lines~\ref{step:6}–\ref{step:7} jointly specify the node–bit assignment policy and the coupling between 
the $2^{\lceil m/2 \rceil}$ bins and their corresponding leaf nodes. Finally, we truncate $\mathbf{S}$ to its 
first $m$ entries to obtain the pseudorandom bit sequence for the target row. A concrete example illustrating each step of the procedure is provided in Appendix~\ref{app:alg}.

\paragraph{Robustness of the pseudorandom bits.}

\begin{figure}[!t]
\centering
\includegraphics[width=\columnwidth]{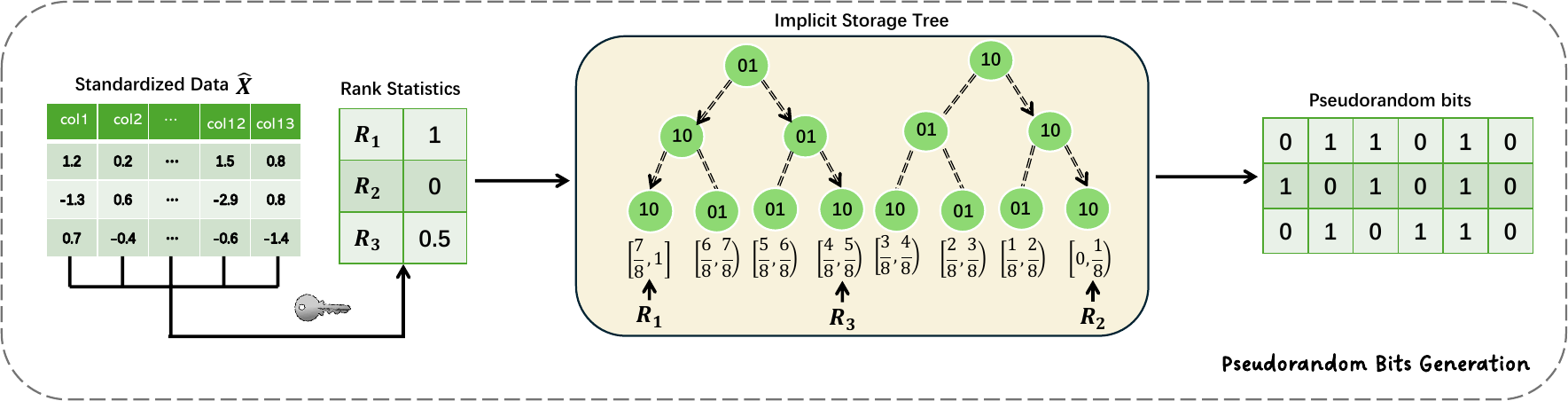}
\caption{In the proposed pseudorandom bit generation scheme, bit sequence for each row is generated by mapping a normalized row-wise rank statistic to a leaf node in a binary tree.}
\label{fig:random_bits}
\end{figure}

\begin{figure}[!t]
\centering
\includegraphics[width=\columnwidth]{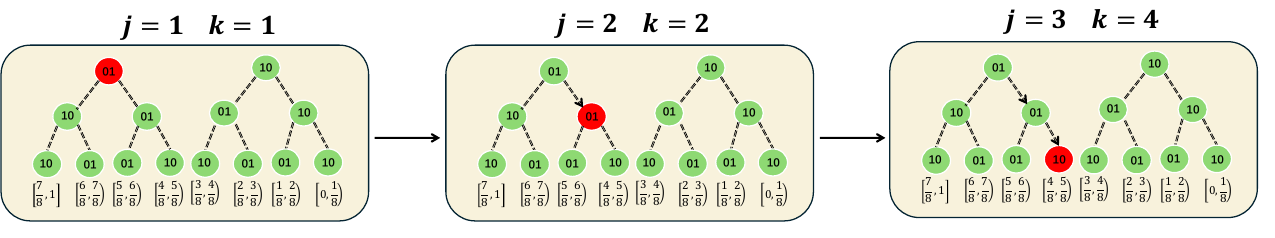}
\caption{Illustration of Lines~\ref{step:5}–\ref{step:7} in Algorithm~\ref{alg:2} for the case $x^{\ast}_{\mathrm{rank}} = 0.5$ and $m=6$. The red circle highlights the $k$-th node at level $j$.}
\label{fig:al2_ill}
\end{figure}

Our pseudorandom bit generation is robust against perturbations owing to two mechanisms. First, the subset $\mathcal{I}$ of columns used for score computation is determined by the secret key. An adversary without it cannot targetedly modify the specific entries that contribute to pseudorandom bit generation. Second, the sum-based rank statistic is highly stable, so small perturbations to a subset of columns (even those within $\mathcal{I}$) often do not change the bin to which the row belongs. As a result, the recovered pseudorandom bits typically remain correct. Even when the statistic shifts noticeably, it usually moves only to an adjacent bin. Our node–bit mapping policy ensures that adjacent bins differ by only a single bit pair, which limits the effect of such shifts on the recovered bit sequence. The tree-based structure enables deterministic computation of these mappings without requiring explicit storage.

\section{Analysis on Distortion and Robustness} \label{sec:theoretial_analysis}

This section provides a theoretical analysis of the distortion and robustness properties of our watermarking scheme. Throughout, we work in the transformed domain after YJT and standardization, where each entry is approximately normalized (zero mean and unit variance). This setting reduces feature heterogeneity and allows for a tractable analysis, which we formalize below.

\begin{assumption}[Centered and standardized transformed data]
\label{assump:standardized}
Let the tabular data $\mathbf{X} = \{x_{i,j}\}_{i,j} \in \mathbb{R}^{N \times p}$ be the output of the YJT and standardization pipeline. In particular, each column is centered and standardized:
\[
\sum_{i=1}^{N} x_{i,j} = 0~\text{for all}~j,~~\text{and}~~
\mathbf{\Sigma} = \frac{1}{N} \mathbf{X}^\top \mathbf{X}
~~\text{with}~~\mathrm{diag}(\mathbf{\Sigma}) = \mathbb{I}_{p\times p}.
\]
\end{assumption}

\begin{remark}[Soundness of Assumption~\ref{assump:standardized}]
\label{rem:soundness_transformed}
Our analysis is conducted entirely in the transformed domain. Accordingly, we 1) omit the inverse YJT/standardization steps during watermark embedding and 2) do not refit transformation parameters (i.e., the YJT parameter $\lambda$ and the mean/variance used for standardization) during watermark detection. This idealization ignores the small distribution shifts in the watermarked frequency-domain representation induced by parameter refitting. This simplification is nevertheless sound in our setting because \textsc{Tab-Drw} preserves data fidelity well. Consequently, refitting the transformation parameters introduces only negligible distribution shift and has limited impact on the sign--bit alignment statistic used by our detector. Appendix~\ref{app:mis_sec3} provides a detailed empirical justification (Tables~\ref{tab:yjt_params} and~\ref{tab:dis_shift}) and additional discussion.
\end{remark}

While the above assumption streamlines the theoretical development, all experiments in Section~\ref{sec:exper} use the full watermark embedding and detection pipeline shown in Figure~\ref{fig:algorithm}, consistent with realistic deployment. All technical proofs for this section are provided in Appendix~\ref{app:proof}.

\subsection{Watermark Distortion}

Recall that each row vector $\bm{x}_i := (x_{i,0}, \ldots, x_{i,p-1}) \in \mathbb{R}^{1 \times p}$ is first mapped to the frequency domain via the DFT, then modified by adjusting a subset of imaginary components, and finally transformed back to the original domain. Under Assumption~\ref{assump:standardized}, Proposition~\ref{prop:perturbation} characterizes the resulting entry-wise distortion, i.e., the difference between the unwatermarked and watermarked table entries.

\begin{proposition}[Entry-wise differences]
\label{prop:perturbation}
Under Assumption~\ref{assump:standardized}, let $S \subseteq \{1, \dots, m\}$ with $m = \lfloor \frac{p-1}{2} \rfloor$ denote the set of frequency coordinates whose imaginary signs are modified by our watermarking method. Let $\Delta x_{i,j} = x_{i,j}^{\mathrm{wm}} - x_{i,j}$ denote the entry-wise difference. Then
\[
\Delta x_{i,j} = -\alpha \, \boldsymbol{\beta}_j^\top \bm{x}_i,
\quad
\alpha = \frac{2(1+\delta)}{p},
\]
where $\boldsymbol{\beta}_j = \big(\beta_S(0,j), \dots, \beta_S(p-1,j)\big)^\top$, and $\beta_S(n,j) = \sum_{k \in S} \sin\left(\frac{2\pi k n}{p}\right) \sin\left(\frac{2\pi k j}{p}\right)$.
\end{proposition}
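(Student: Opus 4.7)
The plan is to track how the imaginary-part modification in the frequency domain propagates back to the data domain through the IDFT. Since both the DFT and the IDFT are linear, it suffices to compute the coordinate-wise change $\Delta y_t := y_t^{\mathrm{wm}} - y_t$ induced by the soft variant~\eqref{eq:2}, push it through the IDFT, and then substitute the DFT definition of $\Im(y_t)$ back in terms of $\bm{x}_i$.

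First, I would identify the change at each frequency. For $t \in S$, the rule $y_t^{\mathrm{wm}} = \Re(y_t) - \mathtt{i}\delta\,\Im(y_t)$ yields $\Delta y_t = -\mathtt{i}(1+\delta)\,\Im(y_t)$, while the enforced conjugate symmetry $y_{p-t}^{\mathrm{wm}} = \overline{y_t^{\mathrm{wm}}}$ combined with $y_{p-t} = \overline{y_t}$ gives $\Delta y_{p-t} = \mathtt{i}(1+\delta)\,\Im(y_t)$. All other coefficients satisfy $\Delta y_k = 0$. Applying the IDFT and pairing the contributions from $k = t$ and $k = p-t$ for each $t \in S$ produces two conjugate exponentials weighted with opposite imaginary units. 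Euler's identity $e^{\mathtt{i}\theta} - e^{-\mathtt{i}\theta} = 2\mathtt{i}\sin\theta$ collapses each pair into a real term, so
\[
\Delta x_{i,n} \;=\; \frac{2(1+\delta)}{\sqrt{p}}\sum_{t \in S}\Im(y_t)\,\sin\!\tfrac{2\pi t n}{p}.
\]

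Next, I would substitute the DFT expansion $\Im(y_t) = -\frac{1}{\sqrt{p}}\sum_{j=0}^{p-1} x_{i,j}\sin(2\pi t j / p)$, swap the order of summation over $j$ and $t \in S$, and recognize the inner sum as $\beta_S(n,j)$. After sign cancellation this gives $\Delta x_{i,n} = -\frac{2(1+\delta)}{p}\sum_{j=0}^{p-1}\beta_S(n,j)\,x_{i,j}$, which matches the claimed expression upon relabeling the free index and noting the trivial symmetry $\beta_S(n,j) = \beta_S(j,n)$ directly from the definition.

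The main obstacle is not analytical but bookkeeping: carefully tracking imaginary units and signs across the conjugate-symmetry pairing and applying Euler's identity without sign errors. The one subtlety worth flagging is that the self-conjugate indices $t = 0$ and (for even $p$) $t = p/2$ are automatically excluded from $S$, since the embedding only touches the first $m = \lfloor (p-1)/2 \rfloor$ effective entries; this ensures every modified frequency has a distinct conjugate partner, so the pairing argument incurs no boundary corrections and the result is exact.
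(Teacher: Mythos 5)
Your proposal is correct and follows essentially the same route as the paper's proof: compute the frequency-domain increment $\Delta y_t = -\mathtt{i}(1+\delta)\Im(y_t)$ for $t\in S$ (with the conjugate-symmetric counterpart at $p-t$), push it through the linear IDFT so the conjugate pair collapses to a real sine term, and substitute the DFT expression for $\Im(y_t)$ to obtain $-\alpha\,\boldsymbol{\beta}_j^\top\bm{x}_i$. Your added remark that the self-conjugate indices $t=0$ and $t=p/2$ are excluded from $S$ is a correct and worthwhile clarification that the paper leaves implicit.
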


Since \textsc{Tab-Drw} is primarily designed to watermark synthetic tabular data, we focus less on strict entry-wise similarity and more on preserving key statistical properties. In particular, we study how watermark embedding affects column-wise quantities, including the mean, inter-column correlations, and marginal distributions.
\begin{theorem}[Column-wise differences]
\label{thm:column-wise-difference}
Under Assumption~\ref{assump:standardized}, our watermark affects column-wise quantities as follows:

\begin{enumerate}[leftmargin=*]
\item \textbf{Mean.}
For each column $j$, the column mean is preserved:
\[
\frac{1}{N} \sum_{i=1}^{N} \Delta x_{i,j} = 0.
\]
\item \textbf{Pearson correlation coefficients (PCC).}
Let $r_{j\ell}$ and ${r}_{j\ell}^{\mathrm{wm}}$ be the PCCs between columns $j$ and $\ell$ before and after watermarking, respectively. Define $\Delta r_{j\ell} := {r}_{j\ell}^{\mathrm{wm}} - r_{j\ell}$. Then,
\[
\Delta r_{j\ell} = -\alpha\left( [\mathbf{\Sigma}\boldsymbol{\beta}_\ell]_j + [\mathbf{\Sigma}\boldsymbol{\beta}_j]_\ell \right)
+ \alpha^2 \boldsymbol{\beta}_j^\top \mathbf{\Sigma} \boldsymbol{\beta}_\ell.
\]
\item \textbf{Empirical distribution.}
For each column $j$, let $\rho_j = \frac{1}{N} \sum_{i=1}^{N} \delta_{x_{i,j}}$ denote the empirical distribution of the unwatermarked entries, and let $\rho_j^{\mathrm{wm}} = \frac{1}{N} \sum_{i=1}^{N} \delta_{x_{i,j}^{\mathrm{wm}}}$ denote the corresponding distribution after watermarking. Let $\mathcal{W}_2(\cdot, \cdot)$ denote the Wasserstein-2 distance. Then,
\[
\mathcal{W}_2(\rho_j, \rho_j^{\mathrm{wm}}) \leq \alpha \sqrt{ \boldsymbol{\beta}_j^\top \mathbf{\Sigma} \boldsymbol{\beta}_j }.
\]
\end{enumerate}
\end{theorem}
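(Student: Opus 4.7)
The plan is to derive all three claims from the entry-wise identity $\Delta x_{i,j} = -\alpha\,\boldsymbol{\beta}_j^\top \bm{x}_i$ of Proposition~\ref{prop:perturbation}, combined with the standing column-centering and unit-variance assumptions on $\mathbf{X}$. These two ingredients collapse every row-average I will need into an entry of $\mathbf{\Sigma}$, so each part reduces to a short algebraic computation.

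For the mean, I would simply average the Proposition~\ref{prop:perturbation} identity over $i$: $\frac{1}{N}\sum_i \Delta x_{i,j} = -\alpha\,\boldsymbol{\beta}_j^\top\bigl(\frac{1}{N}\sum_i \bm{x}_i\bigr) = 0$, using that every column of $\mathbf{X}$ has zero mean. For the Pearson coefficients, I would start from the standardized identity $r_{j\ell} = \Sigma_{j\ell}$ and expand the watermarked covariance as $\frac{1}{N}\sum_i (x_{i,j} - \alpha\,\boldsymbol{\beta}_j^\top \bm{x}_i)(x_{i,\ell} - \alpha\,\boldsymbol{\beta}_\ell^\top \bm{x}_i)$; recognizing that $\frac{1}{N}\sum_i x_{i,j}\bm{x}_i$ is the $j$-th column of $\mathbf{\Sigma}$ and that $\frac{1}{N}\sum_i \bm{x}_i \bm{x}_i^\top = \mathbf{\Sigma}$, the two linear-in-$\alpha$ terms collapse (by symmetry of $\mathbf{\Sigma}$) to $-\alpha\bigl([\mathbf{\Sigma}\boldsymbol{\beta}_\ell]_j + [\mathbf{\Sigma}\boldsymbol{\beta}_j]_\ell\bigr)$, while the quadratic term yields $\alpha^2\,\boldsymbol{\beta}_j^\top \mathbf{\Sigma}\boldsymbol{\beta}_\ell$. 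For the Wasserstein-2 bound, I would use the identity coupling that pairs $x_{i,j}$ with $x^{\mathrm{wm}}_{i,j}$; because both marginals are uniform on $N$ atoms, this is a valid transport plan, so $\mathcal{W}_2^2(\rho_j, \rho^{\mathrm{wm}}_j) \le \frac{1}{N}\sum_i (\Delta x_{i,j})^2 = \frac{\alpha^2}{N}\sum_i (\boldsymbol{\beta}_j^\top \bm{x}_i)^2 = \alpha^2\,\boldsymbol{\beta}_j^\top \mathbf{\Sigma}\boldsymbol{\beta}_j$, and taking square roots yields the stated inequality.

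The only non-routine issue is the PCC step: an exact Pearson coefficient on the watermarked data should also reflect changes in the column variances, which would contribute additional $\alpha$ and $\alpha^2$ terms in the denominator that are absent from the stated formula. The cleanest reading, which I would adopt, is that $r^{\mathrm{wm}}_{j\ell}$ denotes the watermarked covariance normalized by the pre-modification standard deviations (which equal one by the standing assumption); under this convention the identity is exact, and everything else is direct algebra from Proposition~\ref{prop:perturbation}.
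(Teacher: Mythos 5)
Your proposal is correct and matches the paper's proof essentially line for line: averaging the entry-wise identity for the mean, expanding $\frac{1}{N}\sum_i(x_{i,j}\Delta x_{i,\ell}+x_{i,\ell}\Delta x_{i,j}+\Delta x_{i,j}\Delta x_{i,\ell})$ for the PCC, and the identity coupling for the Wasserstein bound. The normalization caveat you flag for the PCC is real but is exactly the convention the paper also adopts implicitly, since its proof likewise treats $\Delta r_{j\ell}$ as the change in the (unrenormalized) empirical covariance.
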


The above theorem characterizes column-wise differences and highlights how the parameters $(\gamma, \delta)$ 
influence them. When $\delta = -1$, no imaginary components are changed, so $\alpha = 0$ and all three 
column-wise quantities remain unchanged. Similarly, when $\gamma = 0$, the selected frequency set $S$ is empty, 
implying $\boldsymbol{\beta}_j = \mathbf{0}$ for all $j$, and again, no distortion occurs. In contrast, when $\gamma \neq 0$ and $\delta \neq -1$, the bounds quantify the extent of distortion, revealing how the watermark 
parameters affect the data.

\subsection{Watermark Robustness}
\label{sec:robustness}
We now study the robustness of \textsc{Tab-Drw} under additive Gaussian noise. Our analysis is based on the Z-score in \eqref{eq:Z-score}. Under the null hypothesis $H_0$ (unwatermarked data) and the standard independence assumption, the Z-score converges in distribution to $\mathcal{N}(0,1)$ (see Lemma~\ref{lemma:z-score} in the appendix). In contrast, watermark embedding induces nontrivial sign--bit alignment in the frequency domain, leading to an elevated Z-score.

To make this intuition precise, we begin with a multivariate Gaussian model for the transformed data, formalized in Assumption~\ref{assump:gaussian}. Theorem~\ref{thm:robustness} then provides a closed-form lower bound on the expected Z-score computed from the noise-corrupted table, showing that the watermark signal remains detectable even under moderate Gaussian perturbations.

\begin{assumption}[Gaussian samples]\label{assump:gaussian}
Let the unwatermarked tabular data $\mathbf{X} \in \mathbb{R}^{N \times p}$ have rows $\bm x_i \overset{\mathrm{i.i.d.}}{\scalebox{1.2}[1]{$\sim$}} \mathcal{N}(0,\mathbf{\Sigma})$, where \(\mathbf{\Sigma}\in\mathbb R^{p\times p}\) is positive definite. Under an additive Gaussian-noise attack, the released table is $\mathbf{X}_{\mathrm{wm}} + \boldsymbol\varepsilon$, where $\mathbf{X}_{\mathrm{wm}}$ is watermarked with soft hyperparameters $(\gamma,\delta)$ and $\varepsilon_{i,j} \overset{\mathrm{i.i.d.}}{\scalebox{1.2}[1]{$\sim$}} \mathcal{N}(0,\sigma^2)$. Let the smallest and largest eigenvalues of~\(\mathbf{\Sigma}\) be \(\lambda_{\min}\) and \(\lambda_{\max}\), respectively, and denote $m = \lfloor \frac{p-1}{2}\rfloor$.
\end{assumption}

\begin{theorem}[Robustness under Gaussian sample setting]\label{thm:robustness}
Under Assumption~\ref{assump:gaussian}, define $Z(\gamma, \delta, \sigma)$ as the standard Z-score (as in \eqref{eq:Z-score}) computed on $\mathbf{X}_{\mathrm{wm}} + \boldsymbol\varepsilon$, then for any $\gamma \in [0, 1]$ and $\delta, \sigma > 0$,
\begin{equation}
\mathbb{E}\left[Z(\gamma, \delta, \sigma)\right] \ge \sqrt{mN}
\gamma\left[1-\mathcal{I}\left(\sigma\right)-\mathcal{I}\left(\frac{\sigma}{\delta}\right)\right],
\label{eq:robust}
\end{equation}
where the function $\mathcal{I} : (0,\infty) \to \mathbb{R}$ is defined as
\[
\resizebox{\linewidth}{!}{$%
\displaystyle
\mathcal{I}(s):=\frac{s}{\sqrt{s^{2}+\lambda_{\min}}} \Bigl[\Phi\!\Bigl(\sqrt{1+\frac{\lambda_{\min}}{s^{2}}}\Bigr)-\frac12\Bigr]
+\frac{s}{\sqrt{s^{2}+\lambda_{\max}}}
\Bigl[1-\Phi\!\Bigl(\sqrt{1+\frac{\lambda_{\max}}{s^{2}}}\Bigr)\Bigr]
+\frac{1}{\sqrt{8\pi e}}
\Bigl[E_{1}\!\Bigl(\frac{\lambda_{\min}}{2s^{2}}\Bigr)
-E_{1}\!\Bigl(\frac{\lambda_{\max}}{2s^{2}}\Bigr)\Bigr].
$}
\]
with $\Phi(\cdot)$ denoting the standard normal CDF and $E_1(u) = \int_u^\infty \frac{e^{-t}}{t} \, dt$ the exponential integral.
\end{theorem}

Based on the Assumption~\ref{assump:gaussian}, we also
derived a closed-form lower bound on the number of rows $N$ required to achieve a one-sided statistical test with power $1-\beta$ at significance level $\alpha$. Corollary~\ref{cor:sample_size} below gives a formal description.

\begin{corollary}[Sample-size lower bound]\label{cor:sample_size}
Under Assumption~\ref{assump:gaussian}, define $N_{\alpha,\beta}(\gamma, \delta, \sigma)$ as the number of rows required for the Gaussian noise-corrupted table $\mathbf{X}_{\mathrm{wm}} + \boldsymbol\varepsilon$ to achieve a one-sided test with power $1-\beta$ at significance level $\alpha$. Then, for any $\gamma, \alpha, \beta \in [0, 1]$ and $\delta, \sigma > 0$,
\begin{equation}
N_{\alpha,\beta}(\gamma, \delta, \sigma) \geq \frac{\left[ q_\alpha + \sqrt{2m \ln{\left(\frac{1}{\beta}\right)}}\right]^2}{m \gamma^2 \left[ 1-\mathcal{I}(\sigma) - \mathcal{I} (\frac{\sigma}{\delta})\right]^2},
\label{eq:robust_2}
\end{equation}
where $q_\alpha$ is the critical value for a one-sided test at level $\alpha$.
\end{corollary}

\begin{remark}[Numerical illustration] \label{rem:2}
To illustrate the lower bound in Theorem~\ref{thm:robustness}, we present a simple numerical example. Consider $(N, p) = (1000, 11)$ with $\mathbf{\Sigma} = \mathbb{I}_{p\times p}$. Then the right-hand side of \eqref{eq:robust} yields
\begin{equation*}
\mathbb{E}\left[Z(0.5, 0.5, \sigma)\right]
\ge
\begin{cases}
30.13, & \text{if } \sigma = 0.1, \\
14.95, & \text{if } \sigma = 0.5, \\
7.04,  & \text{if } \sigma = 1.0.
\end{cases}
\end{equation*}
Since the $0.99$ quantile of a standard normal is $2.32$, these values indicate that the watermark signal remains significant even under moderate noise corruption.

Moreover, Corollary~\ref{cor:sample_size} provides an explicit sample-size requirement. In the same setting with $p=11$ (so $m=5$), achieving a one-sided test with power $1-\beta=0.99$ at level $\alpha=0.001$ (i.e., 0.99 TPR@0.1\%FPR) requires
\begin{equation*}
N_{0.001,0.01}(0.5, 0.5, \sigma)
\ge
\begin{cases}
108, & \text{if } \sigma = 0.1, \\
153, & \text{if } \sigma = 0.2, \\
437,  & \text{if } \sigma = 0.5.
\end{cases}
\end{equation*}
\end{remark}

Since real-world tabular data are highly heterogeneous and rarely follow a strict multivariate Gaussian distribution, even after YJT and standardization, we further extend the robustness analysis beyond the Gaussian sample setting to a sub-Gaussian setting in Theorem~\ref{thm:sub-gaussian}. This setting accommodates light-tailed, non-Gaussian features, including bounded or discrete categorical variables.

\begin{assumption}[Sub-Gaussian samples]\label{assump:subgaussian}
Let the unwatermarked tabular data $\mathbf{X}\in\mathbb{R}^{N\times p}$ have i.i.d.\ rows $\bm{x_i}\in\mathbb{R}^p$ with zero mean and covariance $\mathbf{\Sigma}\in\mathbb{R}^{p\times p}$. Moreover, $\bm{x_i}$ is \emph{$\mathbf{\Sigma}$--sub-Gaussian}: there exists $\kappa\ge 1$ such that for every $u\in\mathbb{R}^p$, $\|\langle u,\bm{x_i}\rangle\|_{\psi_2}\le \kappa\,\sqrt{u^\top \mathbf{\Sigma} u}$~\citep{vershynin2018high}. Under an additive Gaussian-noise attack, the released table is $\mathbf{X}_{\mathrm{wm}} + \boldsymbol\varepsilon$, where $\mathbf{X}_{\mathrm{wm}}$ is watermarked with soft hyperparameters $(\gamma,\delta)$ and $\varepsilon_{i,j} \overset{\mathrm{i.i.d.}}{\scalebox{1.2}[1]{$\sim$}} \mathcal{N}(0,\sigma^2)$. Denote the smallest eigenvalue of~\(\mathbf{\Sigma}\) by \(\lambda_{\min}\), and let $m = \lfloor \frac{p-1}{2}\rfloor$.
\end{assumption}

\begin{theorem}[Robustness under sub-Gaussian sample setting] \label{thm:sub-gaussian}
Under Assumption~\ref{assump:subgaussian}, define $Z(\gamma, \delta, \sigma)$ as the standard Z-score (as in \eqref{eq:Z-score}) computed on $\mathbf{X}_{\mathrm{wm}} + \boldsymbol\varepsilon$. Fix any $\theta\in(0,1)$ and let $C_4>0$ denote a constant such that for every real sub-Gaussian $U$ with variance $v$ we have $\mathbb{E}U^4\le C_4\,\kappa^4\,v^2$. Define $\rho(\kappa,\theta)\ :=\ \frac{(1-\theta)^2}{2\,C_4\,\kappa^4}$, then for any $\gamma\in[0,1]$ and $\delta,\sigma>0$,
\[
\mathbb{E}\big[Z(\gamma,\delta,\sigma)\big]
\ \ge\ \sqrt{mN}\,\gamma\,\sup_{\theta \in (0,1)} \left\{ \rho(\kappa,\theta)\,
\bigg[
2-\exp\!\Big(-\frac{\theta\,\lambda_{\min}}{2\,\sigma^2}\Big)
-\exp\!\Big(-\frac{\theta\,\lambda_{\min}\,\delta^2}{2\,\sigma^2}\Big)
\bigg]\right\}.
\]
\end{theorem}

\begin{remark}[What the sub-Gaussian setting covers]
The $\mathbf{\Sigma}$-sub-Gaussian setting strictly generalizes the Gaussian setting used in
Theorem~\ref{thm:robustness} and many non-Gaussian settings that are common in tabular data: 1) bounded/quantized/discrete features (e.g., \texttt{gender} or \texttt{education} features), and 2) finite mixtures of light-tailed distributions with uniformly bounded covariances (a finite mixture of sub-Gaussians is sub-Gaussian with the worst-component parameter). In conclusion, non-Gaussian distribution with light tails are covered.
\end{remark}

\section{Experiments} \label{sec:exper}

In this section, we empirically evaluate \textsc{Tab-Drw} on five benchmark tabular datasets. We begin with the experimental setup in Section~\ref{sec:exper_set}, then study 1) data fidelity and watermark detectability, including the trade-off induced by the choice of soft hyperparameters, in Section~\ref{sec:result_1}, 2) robustness against post-processing attacks in Section~\ref{sec:result_2}, 3) robustness against adaptive attacks in Section~\ref{sec:result_3}, and 4) watermark embedding and detection runtime in Section~\ref{sec:runtime}. Finally, We present a case study examining the impact of watermark embedding on low-entropy categorical variables in Section~\ref{sec:case_study}. Code is available at \href{https://github.com/zhyzmath/TAB-DRW-Tabular-Data-Watermarking}{\texttt{https://github.com/zhyzmath/TAB-DRW-Tabular-Data-Watermarking}}.

\subsection{Experimental Setup} \label{sec:exper_set}

\paragraph{Datasets.}
Experiments are conducted on five real-world tabular datasets with both continuous and discrete feature types: {Adult}~\citep{1996_adult_2}, {Magic}~\citep{2004_magic_gamma_telescope_159}, {Shoppers}~\citep{2018_online_shoppers_purchasing_intention_dataset_468}, {Default}~\citep{2009_default_of_credit_card_clients_350}, and {Drybean}~\citep{2020MulticlassCO}. See details in Appendix~\ref{app:datasets}.

\paragraph{Baselines.}
We consider four baselines: two post-editing watermarking methods, GLW~\citep{2024watermarkinggeneraive} and TabularMark~\citep{2024tabularmark}, and two sampling-phase methods, TabWak$^*$ with valid bit mechanism~\citep{2025tabwak} and MUSE~\citep{2025muse}. We reproduce TabWak$^*$ using the official open-source code and implement other baselines according to the authors' specifications. To ensure fair comparison, we follow \citet{2025tabwak} to synthesize tabular data using {TabSyn}~\citep{2024Mixed} with DDIM sampling. Further implementation details are provided in Appendix~\ref{app:models}~\&~\ref{app:baselines}.

\paragraph{Metrics.}
We evaluate data fidelity using four metrics introduced in~\cite{2024Mixed}:
1) \textbf{Density}, which quantifies column-wise distributional similarity between synthetic and real data,
2) \textbf{Corr}, which measures how well inter-column correlations are preserved,
3) \textbf{C2ST}, the classifier two-sample test score measuring distinguishability between synthetic and real data,
4) \textbf{MLE}, the performance of downstream models trained on synthetic data. For watermark detection, we report two statistical metrics: the one-sided \textbf{Z-score} defined in~(\ref{eq:Z-score}), which quantifies the distributional shift of a pivotal statistic induced by watermarking, and \textbf{FPR / TPR}, the false and true positive rates under the critical value $q_{\alpha} = 6$, chosen to better differentiate detection performance. For fair comparison across baselines, we convert each method’s detector into the unified one-sided Z-score based on its row-level statistic. The only exception is TabularMark, for which we retain its original cell-based detector. See Appendix~\ref{app:metrics} for further details.

\subsection{Data Fidelity vs. Watermark Detectability} \label{sec:result_1}

\begin{table}[!t]
\centering
\small
\caption{Data fidelity and watermark detectability.
No watermarking is denoted as ``W/O''. Our proposed \textsc{Tab-Drw} is evaluated with the hyperparameter $(\gamma,\delta) = (0.5, 0.5)$. Best performances are shown in \textbf{bold}, and second-best are \underline{underlined}.}
\resizebox{\textwidth}{!}{%
\begin{tabular}{@{}cccccccccc@{}}
\toprule
\multirow{2}{*}{\textbf{Datasets}} & \multirow{2}{*}{\textbf{Method}}
& \multicolumn{4}{c}{\textbf{Fidelity Metric}}
& \multicolumn{2}{c}{\textbf{Z-score}}
& \multicolumn{2}{c}{\textbf{FPR / TPR}} \\
\cmidrule(lr){3-6} \cmidrule(lr){7-8} \cmidrule(lr){9-10}
& & \textbf{Density $\uparrow$} & \textbf{Corr $\uparrow$} & \textbf{C2ST $\uparrow$} & \textbf{MLE $\uparrow$}
& \textbf{1K rows $\uparrow$} & \textbf{5K rows $\uparrow$}
& \textbf{1K rows $\uparrow$} & \textbf{5K rows $\uparrow$} \\
\midrule
\multirow{6}{*}{Adult}
& W/O            & 0.922±0.001 & 0.872±0.001 & 0.611±0.004 & 0.824±0.005 & – & – & – & – \\
& GLW            & 0.912±0.002 & 0.871±0.002 & 0.604±0.015 & 0.821±0.017 &  7.293±0.96 &  16.54±1.05 & 0.00/0.91 & 0.00/1.00 \\
& MUSE            & \underline{0.921±0.001} & \textbf{0.877±0.003} & 0.599±0.007 & \underline{0.823±0.005} &  6.712±0.89 &  14.81±1.23 & 0.00/0.78 & 0.00/1.00 \\
& TabWak$^*$         & 0.912±0.002 & 0.863±0.004 & \underline{0.604±0.009} & 0.793±0.009 & 6.796±1.03 & 15.67±0.97 & 0.00/0.78 & 0.00/1.00 \\
& TabularMark    & \textbf{0.922±0.001} & \underline{0.872±0.001} & 0.598±0.006 & \textbf{0.823±0.003} & \underline{9.674±3.00} & \underline{22.53±3.05} & 0.00/0.89 & 0.00/1.00 \\
& \cellcolor{gray!20} \textbf{\textsc{TAB-DRW}} &   \cellcolor{gray!20} 0.915±0.005 &   \cellcolor{gray!20} 0.864±0.004 &   \cellcolor{gray!20} \textbf{0.604±0.008} &   \cellcolor{gray!20} 0.816±0.009 &   \cellcolor{gray!20} \textbf{12.81±1.17} &   \cellcolor{gray!20} \textbf{29.55±1.12} &   \cellcolor{gray!20} 0.00/1.00 &   \cellcolor{gray!20} 0.00/1.00 \\
\midrule
\multirow{6}{*}{Magic}
& W/O            & 0.917±0.001 & 0.945±0.003 & 0.672±0.004 & 0.823±0.007 & – & – & – & – \\
& GLW            & 0.915±0.001 & \textbf{0.944±0.002} & 0.669±0.013 & 0.816±0.006 & \textbf{77.05±0.55} & \textbf{172.2±0.51} & 0.00/1.00 & 0.00/1.00 \\
& MUSE            & 0.912±0.002 & 0.943±0.006 & 0.672±0.008 & \underline{0.824±0.017} & 15.84±0.86 & 35.31±0.70 & 0.00/1.00 & 0.00/1.00 \\
& TabWak$^*$         & 0.912±0.007 & 0.921±0.003 & 0.671±0.006 & \textbf{0.827±0.029} & 8.608±0.98 & 19.83±1.01 & 0.00/1.00 & 0.00/1.00 \\
& TabularMark    & \textbf{0.917±0.001} & \underline{0.943±0.003} & \underline{0.674±0.005} & 0.822±0.009 & 9.666±2.82 & 22.02±3.62 & 0.00/0.90 & 0.00/1.00 \\
& \cellcolor{gray!20} \textbf{\textsc{TAB-DRW}} &  \cellcolor{gray!20} \underline{0.917±0.005} &  \cellcolor{gray!20} 0.937±0.003 &  \cellcolor{gray!20} \textbf{0.676±0.009} &  \cellcolor{gray!20} 0.818±0.014 &  \cellcolor{gray!20} \underline{27.34±0.93} &  \cellcolor{gray!20} \underline{61.42±1.02} &  \cellcolor{gray!20} 0.00/1.00 &  \cellcolor{gray!20} 0.00/1.00 \\
\midrule
\multirow{6}{*}{Shoppers}
& W/O            & 0.919±0.002 & 0.910±0.001 & 0.704±0.005 & 0.902±0.012 & – & – & – & – \\
& GLW            & 0.903±0.001 & \textbf{0.908±0.001} & 0.706±0.018 & 0.893±0.009 & \underline{17.84±1.12} & \underline{39.08±1.05} & 0.00/1.00 & 0.00/1.00 \\
& MUSE            & 0.911±0.001 & 0.908±0.002 & \underline{0.710±0.009} & 0.895±0.012 & 12.80±0.88 & 28.83±0.87 & 0.00/1.00 & 0.00/1.00 \\
& TabWak$^*$         & \textbf{0.916±0.009} & 0.906±0.001 & 0.674±0.021 & \textbf{0.905±0.047} & 4.071±1.06 & 10.38±1.02 & 0.00/0.04 & 0.00/1.00 \\
& TabularMark    & \underline{0.914±0.003} & \underline{0.908±0.001} & 0.704±0.005 & \underline{0.897±0.018} & 10.28±3.24 & 22.94±3.36 & 0.00/0.91 & 0.00/1.00 \\
& \cellcolor{gray!20} \textbf{\textsc{TAB-DRW}} &  \cellcolor{gray!20} 0.909±0.006 &  \cellcolor{gray!20} 0.902±0.003 &  \cellcolor{gray!20} \textbf{0.712±0.013} &  \cellcolor{gray!20} 0.891±0.014 &  \cellcolor{gray!20} \textbf{18.18±1.28} &  \cellcolor{gray!20} \textbf{40.74±1.26} &  \cellcolor{gray!20} 0.00/1.00 &  \cellcolor{gray!20} 0.00/1.00 \\
\midrule
\multirow{6}{*}{Default}
& W/O            & 0.930±0.001 & 0.907±0.001 & 0.717±0.003 & 0.797±0.009 & – & – & – & – \\
& GLW            & 0.926±0.002 & 0.906±0.003 & 0.710±0.027 & 0.787±0.011 & 12.10±1.09 & 27.08±0.99 & 0.00/1.00 & 0.00/1.00 \\
& MUSE            & 0.928±0.001 & \underline{0.907±0.002} & 0.714±0.008 & 0.790±0.007 & \underline{15.50±0.97} & \underline{34.42±0.95} & 0.00/1.00 & 0.00/1.00 \\
& TabWak$^*$         & \textbf{0.934±0.011} & \textbf{0.912±0.014} & \textbf{0.723±0.048} & 0.775±0.009 & 10.49±1.03 & 23.60±1.00 & 0.00/1.00 & 0.00/1.00 \\
& TabularMark    & 0.927±0.005 & 0.902±0.006 & \underline{0.718±0.007} & \textbf{0.796±0.009} & 9.526±2.91 & 23.94±3.18 & 0.00/0.89 & 0.00/1.00 \\
& \cellcolor{gray!20} \textbf{\textsc{TAB-DRW}} &  \cellcolor{gray!20} \underline{0.929±0.010} &  \cellcolor{gray!20} 0.907±0.011 &  \cellcolor{gray!20} 0.717±0.018 &  \cellcolor{gray!20} \underline{0.791±0.013} &   \cellcolor{gray!20} \textbf{15.98±0.92} &   \cellcolor{gray!20} \textbf{35.84±0.91} &  \cellcolor{gray!20} 0.00/1.00 &  \cellcolor{gray!20} 0.00/1.00 \\
\midrule
\multirow{6}{*}{Drybean}
& W/O            & 0.932±0.001 & 0.935±0.001 & 0.640±0.003 & 0.878±0.009 & – & – & – & – \\
& GLW            & 0.929±0.002 & 0.933±0.004 & 0.637±0.017 & 0.872±0.013 & \textbf{55.14±0.66} & \textbf{123.3±0.68} & 0.00/1.00 & 0.00/1.00 \\
& MUSE            & 0.930±0.003 & \underline{0.934±0.005} & 0.649±0.011 & 0.878±0.014 & 14.14±1.03 & 31.43±0.91 & 0.00/1.00 & 0.00/1.00 \\
& TabWak$^*$         & 0.924±0.014 & 0.925±0.008 & \textbf{0.659±0.032} & 0.875±0.015 & 7.999±0.92 & 17.80±0.97 & 0.00/0.99 & 0.00/1.00 \\
& TabularMark    & \textbf{0.932±0.002} & \textbf{0.935±0.001} & 0.641±0.005 & \underline{0.878±0.011} & 7.760±3.14 & 17.28±2.73 & 0.00/0.71 & 0.00/1.00 \\
& \cellcolor{gray!20} \textbf{\textsc{TAB-DRW}} &  \cellcolor{gray!20} \underline{0.931±0.013} &  \cellcolor{gray!20} 0.928±0.007 &  \cellcolor{gray!20} \underline{0.655±0.029} &  \cellcolor{gray!20} \textbf{0.880±0.019} &  \cellcolor{gray!20} \underline{38.03±1.03} &  \cellcolor{gray!20} \underline{85.05±0.67} &  \cellcolor{gray!20} 0.00/1.00 &  \cellcolor{gray!20} 0.00/1.00 \\
\bottomrule
\end{tabular}%
}
\label{tab:results1}
\end{table}

To evaluate data fidelity, we generate synthetic tabular datasets with the same number of rows as the original for each of the five datasets. For watermark detectability, we compute both Z-scores and FPR/TPR using two batch sizes: 1K and 5K rows. Table~\ref{tab:results1} reports the mean and standard deviation of each fidelity metric over 10 independent trials, and each detectability metric over 100 trials.

On one hand, our watermarking scheme introduces \textbf{minimal distortion}. Across all datasets, its fidelity scores closely match those of the unwatermarked data (degrading by no more than 0.01) and are comparable to existing baselines. While TabWak$^*$ and TabularMark often achieve the highest fidelity, our method with $(\gamma, \delta) = (0.5, 0.5)$ performs similarly well. Overall, all baseline methods yield comparable fidelity scores, indicating that our approach preserves data quality on par with prior work. On the other hand, in terms of watermark detectability, our method achieves the \textbf{highest Z-scores} on the Adult, Shoppers, and Default datasets. GLW performs best on Magic and Drybean, likely due to the predominance of continuous attributes. Most methods---including ours---successfully control false positive rates and show nontrivial true positive rates under the critical threshold $q_\alpha = 6$, demonstrating reliable detectability.

The results in Figure~\ref{fig:tradeoffs_all} also reveal a trade-off between data fidelity and watermark strength for our method. Increasing both $\gamma$ and $\delta$ enhances the Z-score, but also increases distortion. This trade-off is inherent to post-editing watermarking: stronger signals inevitably introduce more distortion. Similar additional results on TabSyn with score-based diffusion process and two other tabular generators, together with broader ablation study, are presented in Appendix~\ref{app:ablation}.

\begin{figure}[!t]
\centering
\includegraphics[width=\columnwidth]{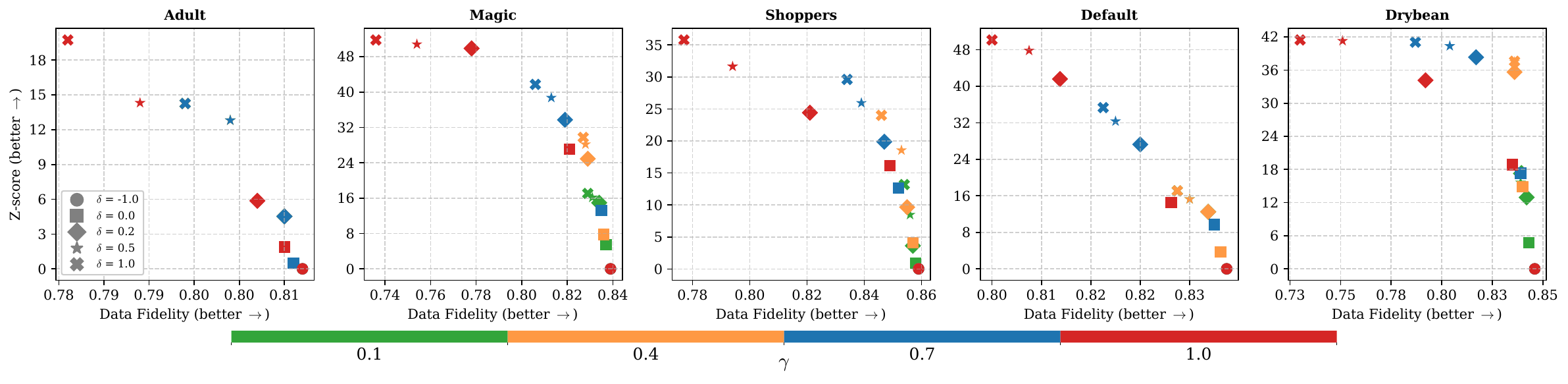}
\caption{Trade-off between average Z-score on 1K-rows tables and data fidelity under varying $(\gamma, \delta)$.}
\label{fig:tradeoffs_all}
\end{figure}

\subsection{Robustness against Common Post-processing Attacks} \label{sec:result_2}

We next evaluate the robustness of watermarking methods against ten post-processing attacks, which can be grouped into four categories: 1) deletion attacks, which remove or replace information at different granularities (\textbf{Row Del.}, \textbf{Col Del.}, \textbf{Cell Del.}), 2) noise attacks, which perturb numerical or categorical values with Gaussian, categorical, or adaptive noise (\textbf{G-Noise}, \textbf{C-Noise}, \textbf{A-Noise}), 3) discretization attacks, which reduce numeric precision through truncation or quantization (\textbf{Truncation}, \textbf{Quantization}), and 4) structural attacks, which alter table structure by resampling label distributions or randomly shuffling rows (\textbf{Resample}, \textbf{Shuffle}). We do not include column-level shuffling in our attack suite because the original column order can typically be accurately recovered by using headers, basic statistical properties (e.g., mean and standard deviation), or semantic features of each column. These characteristics are usually distinctive enough to make column permutation easily reversible. This setting is also standard in prior works~\citep{2025tabwak,2025muse}. Detailed definition and implementations of the attacks are provided in Appendix~\ref{app:attaks}, and additional results of stronger attack intensities are reported in Appendix~\ref{app:add_robust}.

Table~\ref{tab:results2} reports the average one-sided $Z$-score over 5K rows. Our watermarking method \textbf{consistently demonstrates superior robustness across all attack types and datasets}, ranking either first or second. In contrast, GLW and TabularMark remain resilient to deletion and structural attacks but often fail under noise and discretization. TabWak$^*$ and MUSE exhibit some robustness to noise and discretization on certain datasets, but they are vulnerable to deletion attacks due to their reliance on complete column information.

Figure~\ref{fig:bar} further highlights the sample efficiency of our method by plotting TPR@0.1\%FPR against the number of rows under three representative and strong attacks. In ten out of fifteen cases, our method achieves perfect detection (TPR = 1.0) using only 300 rows, while the remaining five cases require fewer than 500 rows. In contrast, baseline methods often experience substantial drops in true positive rate or lose detectability entirely under the same conditions.

\begin{table}[!t]
\centering
\caption{Watermark robustness against common attacks. Average Z-score on 5K rows under ten post-processing attacks.
Each value is obtained by repeating the attacks 100 times (10 times for ``TabWak$^*$'') and averaging the results. Our proposed \textsc{Tab-Drw} is evaluated with the hyperparameter $(\gamma,\delta) = (0.5, 0.5)$. Best performances are shown in \textbf{bold}, and second-best are \underline{underlined}.}
\resizebox{\textwidth}{!}{%
\begin{tabular}{@{}c c c c c| c c c| c c| c c@{}}
\toprule
\multicolumn{1}{c}{\multirow{2}{*}{\textbf{Datasets}}}
& \multicolumn{1}{c}{\multirow{2}{*}{\textbf{Methods}}}
& \multicolumn{10}{c}{\textbf{Attacks}} \\
\cmidrule(lr){3-12}
& & \textbf{Row Del.}
& \textbf{Col Del.}
& \textbf{Cell Del.}
& \textbf{G-Noise}
& \textbf{C-Noise}
& \textbf{A-Noise}
& \textbf{Truncation}
& \textbf{Quantization}
& \textbf{Resample}
& \textbf{Shuffle}
\\
\midrule
\multirow{5}{*}{Adult}
& \multicolumn{1}{c}{GLW}
& 15.69 & 14.55 & 14.88
& 0.00  & \underline{16.54} & 7.26
& \underline{16.54} & 8.63  & 16.89
& 16.54 \\
& \multicolumn{1}{c}{MUSE}
& 14.00 & 6.26  & 9.16
& \underline{12.83} & 10.91 & 4.53
& 14.81 & 10.96  & \underline{20.15}
& 14.81 \\
& \multicolumn{1}{c}{TabWak$^*$}
& 14.98 & 11.32 & 11.08
& 0.91  & 15.67 & \underline{14.50}
& 5.09  & \underline{11.27} & 16.37
& 15.67 \\
& \multicolumn{1}{c}{TabularMark}
& \underline{21.65} & \underline{15.56} & \underline{16.42}
& 2.83  & 6.90  & 1.29
& 2.72  & 0.00  & 4.62
& \underline{17.44} \\
& \cellcolor{gray!20}\textbf{\textsc{TAB-DRW}}
& \cellcolor{gray!20}\textbf{27.98} & \cellcolor{gray!20}\textbf{17.78} & \cellcolor{gray!20}\textbf{20.46}
& \cellcolor{gray!20}\textbf{20.36} & \cellcolor{gray!20}\textbf{24.59} & \cellcolor{gray!20}\textbf{23.72}
& \cellcolor{gray!20}\textbf{29.55} & \cellcolor{gray!20}\textbf{20.95} & \cellcolor{gray!20}\textbf{28.15}
& \cellcolor{gray!20}\textbf{29.55} \\
\midrule
\multirow{5}{*}{Magic}
& \multicolumn{1}{c}{GLW}
& \textbf{163.33}& \textbf{140.15}& \textbf{154.89}
& 0.03  & \textbf{172.20}& 1.09
& 14.71 & \textbf{47.28} & \textbf{170.81}
& \textbf{172.20} \\
& \multicolumn{1}{c}{MUSE}
& 33.55 & 7.16  & 17.91
& 15.99 & 34.33 & 9.09
& \underline{20.06} & 14.54 & 33.59
& 35.31 \\
& \multicolumn{1}{c}{TabWak$^*$}
& 18.92 & 13.18 & 16.33
& \underline{17.99} & 19.76 & \underline{13.86}
& 16.87 & 16.50 & 17.62
& 19.76 \\
& \multicolumn{1}{c}{TabularMark}
& 20.65 & 12.05 & 13.80
& 0.00  & 20.05 & 0.35
& 13.66 & 0.65  & 21.04
& 15.26 \\
& \cellcolor{gray!20}\textbf{\textsc{TAB-DRW}}
& \cellcolor{gray!20}\underline{58.28} & \cellcolor{gray!20}\underline{17.45} & \cellcolor{gray!20}\underline{35.78}
& \cellcolor{gray!20}\textbf{46.18} & \cellcolor{gray!20}\underline{54.48} & \cellcolor{gray!20}\textbf{40.72}
& \cellcolor{gray!20}\textbf{52.62} & \cellcolor{gray!20}\underline{45.14} & \cellcolor{gray!20}\underline{37.61}
& \cellcolor{gray!20}\underline{61.42} \\
\midrule
\multirow{5}{*}{Shoppers}
& \multicolumn{1}{c}{GLW}
& \underline{36.82} & \textbf{36.02} & \textbf{36.15}
& 0.00  & \textbf{39.08} & 1.11
& \underline{24.61} & 7.20  & \textbf{31.92}
& \underline{39.08} \\
& \multicolumn{1}{c}{MUSE}
& 27.34 & 11.37 & 15.34
& \underline{23.56} & 21.58 & \underline{16.74}
& 23.14 & \underline{19.84}  & 28.63
& 28.83 \\
& \multicolumn{1}{c}{TabWak$^*$}
& 9.55  & 3.33  & 4.68
& 0.02  & 10.47 & 5.08
& 9.05  & 7.82  & 10.71
& 10.47 \\
& \multicolumn{1}{c}{TabularMark}
& 15.48 & 11.17 & 14.66
& 1.46  & 18.43 & 0.13
& 11.40 & 3.02  & 18.23
& 18.62 \\
& \cellcolor{gray!20}\textbf{\textsc{TAB-DRW}}
& \cellcolor{gray!20}\textbf{38.43} & \cellcolor{gray!20}\underline{19.35} & \cellcolor{gray!20}\underline{22.99}
& \cellcolor{gray!20}\textbf{39.66} & \cellcolor{gray!20}\underline{36.26} & \cellcolor{gray!20}\textbf{20.66}
& \cellcolor{gray!20}\textbf{30.28} & \cellcolor{gray!20}\textbf{32.46} & \cellcolor{gray!20}\underline{29.28}
& \cellcolor{gray!20}\textbf{40.74} \\
\midrule
\multirow{5}{*}{Default}
& \multicolumn{1}{c}{GLW}
& 25.67 & \underline{24.59} & \textbf{23.88}
& 0.00  & \underline{27.08} & 9.08
& 27.08 & 11.94 & 15.82
& 27.08 \\
& \multicolumn{1}{c}{MUSE}
& \underline{32.75} & 11.38 & 13.11
& 19.81 & 24.25 & 4.98
& \underline{34.42} & 11.17 & \textbf{36.60}
& \underline{34.42} \\
& \multicolumn{1}{c}{TabWak$^*$}
& 22.91 & 18.36 & 18.96
& \underline{22.95} & 23.70 & \textbf{21.79}
& 22.80 & \underline{19.83} & 31.49
& 23.70 \\
& \multicolumn{1}{c}{TabularMark}
& 21.66 & 15.79 & 12.46
& 0.00  & 21.53 & 0.23
& 12.36 & 0.86  & 20.94
& 23.33 \\
& \cellcolor{gray!20}\textbf{\textsc{TAB-DRW}}
& \cellcolor{gray!20}\textbf{33.92} & \cellcolor{gray!20}\textbf{25.03} & \cellcolor{gray!20}\underline{22.56}
& \cellcolor{gray!20}\textbf{30.03} & \cellcolor{gray!20}\textbf{32.22} & \cellcolor{gray!20}\underline{21.55}
& \cellcolor{gray!20}\textbf{35.84} & \cellcolor{gray!20}\textbf{21.93} & \cellcolor{gray!20}\underline{32.36}
& \cellcolor{gray!20}\textbf{35.84} \\
\midrule
\multirow{5}{*}{Drybean}
& \multicolumn{1}{c}{GLW}
& \textbf{116.96}& \textbf{112.05}& \textbf{110.89}
& 0.00  & \textbf{123.28}& 13.29
& \underline{28.02} & \underline{29.37} & \textbf{123.68}
& \textbf{123.28} \\
& \multicolumn{1}{c}{MUSE}
& 29.78 & 7.76  & 12.32
& 6.22  & 29.56 & 6.28
& 10.43 & 1.89  & 31.19
& 31.43 \\
& \multicolumn{1}{c}{TabWak$^*$}
& 17.16 & 0.00  & 2.86
& \underline{14.11}  & 17.53 & \underline{13.59}
& 16.80 & 6.56  & 15.38
& 17.53 \\
& \multicolumn{1}{c}{TabularMark}
& 13.79 & 7.18  & 9.55
& 0.00  & 13.56 & 0.00
& 7.57  & 0.66  & 12.88
& 10.46 \\
& \cellcolor{gray!20}\textbf{\textsc{TAB-DRW}}
& \cellcolor{gray!20}\underline{80.62} & \cellcolor{gray!20}\underline{42.82} & \cellcolor{gray!20}\underline{50.99}
& \cellcolor{gray!20}\textbf{31.12} & \cellcolor{gray!20}\underline{80.43} & \cellcolor{gray!20}\textbf{58.50}
& \cellcolor{gray!20}\textbf{42.14} & \cellcolor{gray!20}\textbf{61.23} & \cellcolor{gray!20}\underline{68.69}
& \cellcolor{gray!20}\underline{85.05} \\
\bottomrule
\end{tabular}%
}
\label{tab:results2}
\end{table}

\begin{figure}[!t]
\centering
\includegraphics[width=\columnwidth]{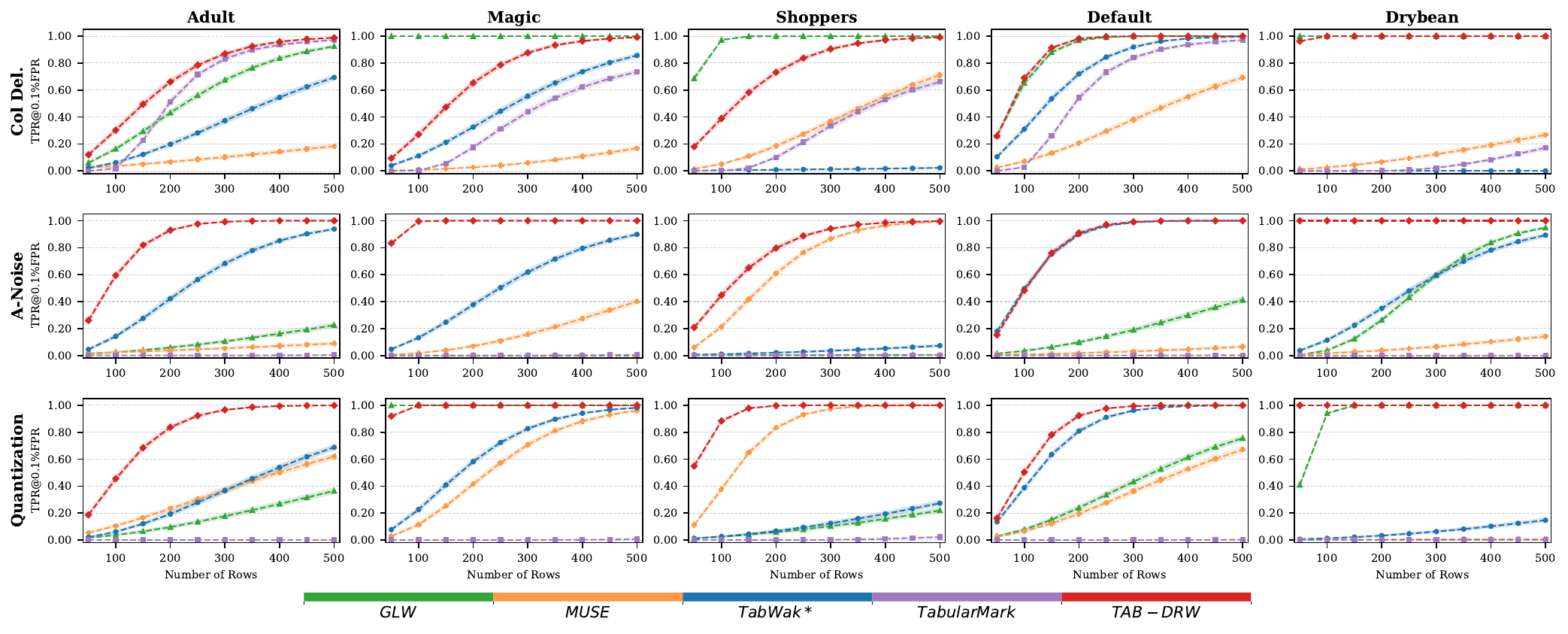}
\caption{TPR@0.1\%FPR versus row count under three representative attacks. Dashed lines show the bootstrap mean estimate (500 resamples), and shaded regions indicate the 90\% confidence interval.}
\label{fig:bar}
\end{figure}

\subsection{Robustness against Adaptive Attacks} \label{sec:result_3}

In this section, we implement two adaptive attacks specifically targeting \textsc{Tab-Drw} to evaluate robustness under stronger adversarial settings. In both cases, we assume adversaries have full knowledge of the watermarking pipeline, including the privacy-enhanced variant used for real-world deployment (see Appendix~\ref{app:privacy_tabdrw}), but do not have access to the secret key.

The first attack, \textbf{Adaptive Row Deletion}, corrupts the row ranking to impair rank-based bit retrieval. The attacker samples a random key, computes normalized row ranks (following lines 3–6 of Algorithm \ref{alg:2}), and deletes a contiguous block of rows in rank space. For example, with strength 0.1, the adversary removes rows whose normalized ranks lie in a random interval of length 0.1 in $[0,1]$, which disrupts ranking far more than random deletion. The second attack, \textbf{Rewatermarking}, aims to erase sign-bit alignment in the frequency domain. It leverages two properties of \textsc{Tab-Drw}: strong fidelity preservation and the fact that a watermark embedded with one key is not detectable by another (See Table~\ref{tab:multi-key} in Appendix~\ref{app:multi-key_eva} for empirical justification). An informed attacker can repeatedly rewatermark the table with different keys to perturb the original alignment and make detection under the original key fail.

The results in Table~\ref{tab:adv_robustness} show that \textsc{Tab-Drw} remains \textbf{highly detectable even under substantial adaptive row-deletion attacks}. Although detectability decreases slightly relative to random row deletion, the use of a secret key and stable tree-based bit storage makes \textsc{Tab-Drw} resilient to attacks specifically designed to disrupt the row-ranking process. We also observe that \textsc{Tab-Drw} remains highly detectable even after ten rounds of rewatermarking, by which point tabular fidelity has already degraded noticeably (cf.\ Table~\ref{tab:rewatermark}). These findings demonstrate that, without knowledge of the key used in Algorithm~\ref{alg:2} and~\ref{alg:1_privacy}, an attacker---despite understanding the \textsc{Tab-Drw} pipeline---cannot substantially disrupt sign-bit alignment while preserving data fidelity.

\begin{table}[!t]
\centering
\caption{Robustness of \textsc{Tab-Drw} against adaptive row deletion and rewatermarking attacks of varying strength. Z-scores are computed on tables with 5K rows and averaged over 100 independent trials. ``Rewatermarking@${n}$'' denotes rewatermarking the table using $n$ randomly sampled keys.}
\resizebox{\textwidth}{!}{%
\begin{tabular}{@{}c c c c c c c c@{}}
\toprule
\multirow{2}{*}{\textbf{Datasets}}
& \multirow{2}{*}{\textbf{No-attack}}
& \multicolumn{3}{c}{\textbf{Adaptive Row Deletion}}
& \multicolumn{3}{c}{\textbf{Rewatermarking}} \\
\cmidrule(lr){3-5} \cmidrule(lr){6-8}
&
& \textbf{@0.1}
& \textbf{@0.2}
& \textbf{@0.5}
& \textbf{@1}
& \textbf{@3}
& \textbf{@10} \\
\midrule
Adult
& $29.12 \pm 1.12$
& $28.55 \pm 1.44$
& $26.35 \pm 2.61$
& $18.79 \pm 4.47$
& $23.66 \pm 1.17$
& $16.26 \pm 1.09$
& $17.26 \pm 1.34$ \\
Magic
& $61.42 \pm 1.02$
& $56.71 \pm 2.98$
& $49.36 \pm 5.77$
& $28.41 \pm 7.28$
& $53.23 \pm 0.91$
& $34.32 \pm 0.93$
& $29.17 \pm 1.00$ \\
Shoppers
& $40.74 \pm 1.26$
& $36.47 \pm 2.62$
& $30.40 \pm 3.71$
& $17.12 \pm 4.18$
& $31.97 \pm 1.15$
& $20.14 \pm 1.09$
& $16.67 \pm 1.09$ \\
Default
& $35.84 \pm 0.91$
& $31.91 \pm 1.90$
& $27.13 \pm 3.23$
& $15.36 \pm 4.54$
& $32.85 \pm 1.00$
& $19.40 \pm 1.07$
& $26.28 \pm 1.18$ \\
Drybean
& $85.05 \pm 0.67$
& $79.27 \pm 2.67$
& $71.67 \pm 5.04$
& $52.39 \pm 9.99$
& $44.79 \pm 0.81$
& $29.47 \pm 0.83$
& $33.77 \pm 0.95$ \\
\bottomrule
\end{tabular}%
}
\label{tab:adv_robustness}
\end{table}

\subsection{Runtime Evaluation} \label{sec:runtime}

Table~\ref{tab:runtime} reports the average runtimes for watermark embedding and detection on five benchmark datasets. Each result is averaged over 100 independent trials on synthetic tabular data with 1K rows. All experiments are conducted on an M1 Pro CPU and a 40GB NVIDIA A100 GPU.

From a deployment perspective, two complementary facts are worth emphasizing. First, watermark embedding is a one-time post-processing step associated with releasing a watermarked table. At the 1K-row scale of Table~\ref{tab:runtime}, producing the underlying synthetic table in our pipeline takes roughly 2 seconds GPU time, so the additional embedding overhead of GLW, TabularMark, TabWak$^*$, and \textsc{Tab-Drw} is negligible relative to generation. The only notable exception is MUSE, which is more expensive because it must generate multiple candidate rows and then select among them. Second, detection is often performed repeatedly in practice. For example, auditors or downstream users may examine multiple suspect tables, and this becomes even more frequent in multi-key deployment scenarios where the same table is checked against several candidate keys. In such settings, detection efficiency becomes the more critical factor. This further highlights the practical advantage of \textsc{Tab-Drw}, particularly compared to TabWak$^*$, whose detector relies on computationally expensive DDIM inversion, and TabularMark, which consistently incurs higher detection costs due to tuple matching.

\begin{table}[!t]
\centering
\caption{Runtime comparison across watermarking methods on 1K-row tables (in seconds). Each cell reports \textbf{Embedding / Detection} time. Values for TabWak$^*$ denote GPU runtime, whereas all other methods are measured in CPU time.}
\label{tab:runtime}
\resizebox{\textwidth}{!}{%
\begin{tabular}{c cc cc cc cc cc}
\toprule
\multirow{2}{*}{\textbf{Dataset}} & \multicolumn{2}{c}{\textbf{GLW}} & \multicolumn{2}{c}{\textbf{MUSE}} & \multicolumn{2}{c}{\textbf{TabWak$^*$}} & \multicolumn{2}{c}{\textbf{TabularMark}} & \multicolumn{2}{c}{\textbf{\textsc{TAB-DRW}}} \\
\cmidrule(lr){2-3} \cmidrule(lr){4-5} \cmidrule(lr){6-7} \cmidrule(lr){8-9} \cmidrule(lr){10-11}
& \textbf{Embed} & \textbf{Detect} & \textbf{Embed} & \textbf{Detect} & \textbf{Embed} & \textbf{Detect} & \textbf{Embed} & \textbf{Detect} & \textbf{Embed} & \textbf{Detect} \\
\midrule
Adult    & 0.031 & 0.003 & 0.593 & 0.177 & 0.008 & 27.96 & 0.008 & 0.717 & 0.112 & 0.100 \\
Magic    & 0.026 & 0.001 & 0.555 & 0.161 & 0.004 & 21.78 & 0.007 & 0.729 & 0.106 & 0.076 \\
Shoppers & 0.026 & 0.005 & 0.629 & 0.188 & 0.020 & 30.27 & 0.008 & 0.682 & 0.142 & 0.106 \\
Default  & 0.052 & 0.004 & 0.688 & 0.234 & 0.020 & 35.49 & 0.009 & 0.713 & 0.205 & 0.152 \\
Drybean  & 0.025 & 0.003 & 0.601 & 0.193 & 0.016 & 26.03 & 0.009 & 0.582 & 0.152 & 0.120 \\
\bottomrule
\end{tabular}%
}
\end{table}

Finally, we note that once unwatermarked samples are generated, the grid-search tuning procedure described in Section~\ref{sec:method} can be applied to the synthetic data with negligible overhead---on the order of seconds for datasets of comparable scale to our benchmarks---and can also be carried out entirely on CPU.

\subsection{A Case Study on Low-Entropy Categorical Variable} \label{sec:case_study}

\textsc{Tab-Drw} handles low-entropy categorical variables (e.g., \texttt{gender}) in a conservative and adaptive manner. Because watermark embedding is performed in the frequency domain on a row-wise DFT representation, each standardized sample is mapped to a joint representation whose components are linear combinations of all features in the row. Consequently, watermark embedding via imaginary sign-bit alignment operates on this joint representation rather than on any individual attribute. Whether a sample’s \texttt{gender} value remains unchanged or flips therefore depends on its overall position relative to the distribution of other features, rather than on the \texttt{gender} attribute alone.

To illustrate this behavior empirically, we conduct a case study on the Adult dataset, focusing on potential flips of the \texttt{gender} variable. We randomly select three pairs of synthetic 5K-row tables (unwatermarked \& watermarked). After applying the YJT to standardize feature scales on the unwatermarked data, we perform principal component analysis (PCA) on all non-\texttt{gender} variables and retain the first two principal components. Based on these components, we apply $k$-means clustering with $k=2$. We label the cluster containing a higher proportion of samples originally labeled as \texttt{Male} as the \texttt{Male} cluster, and assign the remaining cluster the \texttt{Female} label.

We then examine samples whose \texttt{gender} value flips after watermarking and partition them into two groups: \emph{aligned} samples, whose original \texttt{gender} label agrees with the assigned cluster label, and \emph{misclustered} samples, whose original label conflicts with the cluster assignment. To further evaluate whether flips concentrate on samples that are intrinsically ambiguous with respect to the two clusters, and thus avoid semantic distortion, we quantify each sample’s proximity to the separation between clusters. Specifically, we define the \emph{cluster boundary} as the set of points in the PCA space that are equidistant (in Euclidean distance) from the two cluster centroids, and define the \emph{boundary distance} of a sample as its shortest Euclidean distance to this boundary. We use this metric to compare flipped and non-flipped samples and to localize where flips occur in the PCA space. Combined with the visualization in Figure~\ref{fig:case}, it reveals several consistent patterns:

\begin{figure}[!t]
\centering
\begin{subfigure}{0.32\textwidth}
\centering
\includegraphics[width=\linewidth]{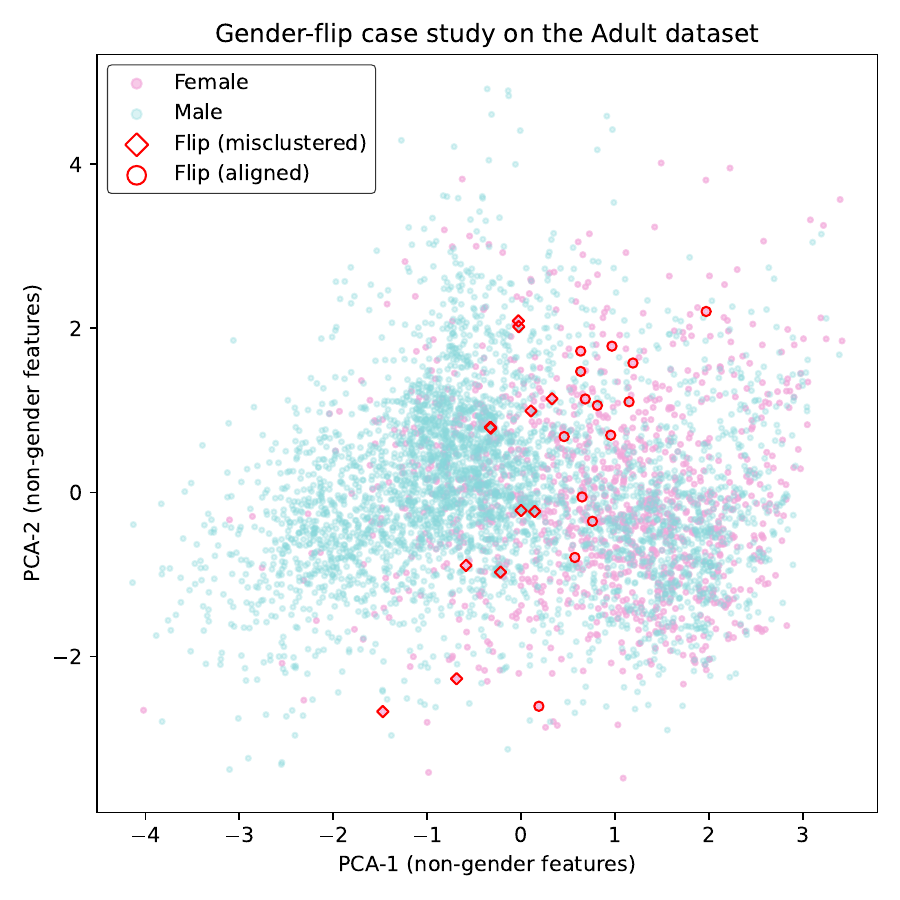}
\caption{}
\end{subfigure}
\hfill
\begin{subfigure}{0.32\textwidth}
\centering
\includegraphics[width=\linewidth]{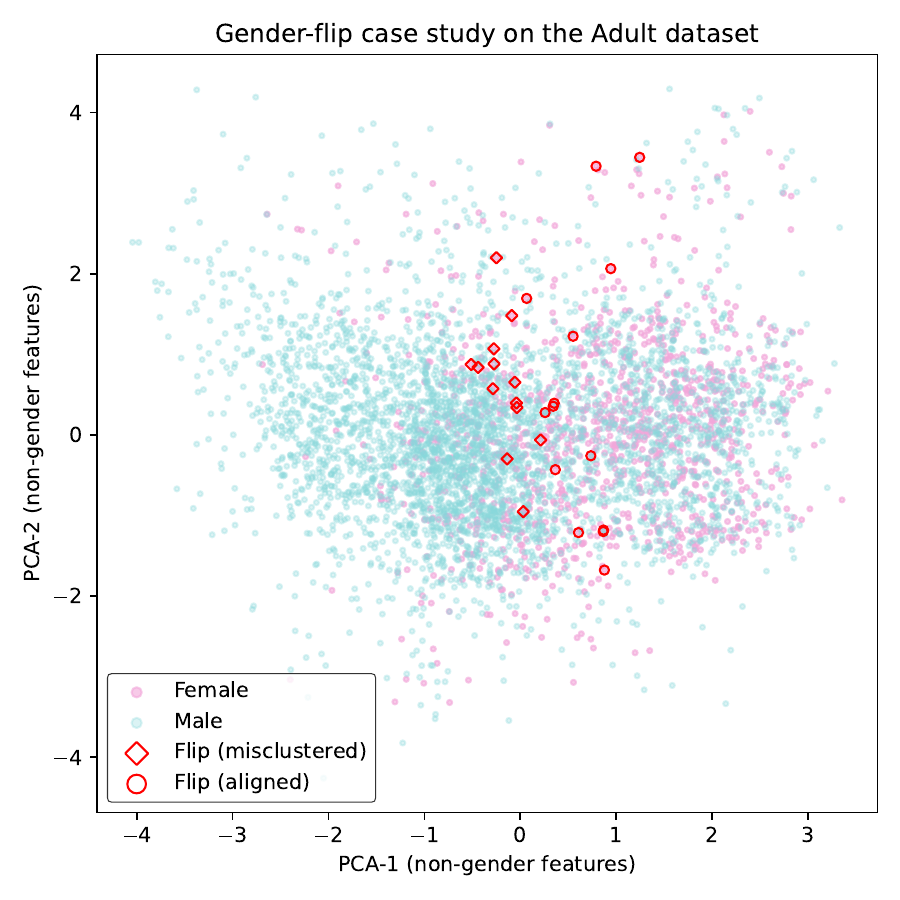}
\caption{}
\end{subfigure}
\hfill
\begin{subfigure}{0.32\textwidth}
\centering
\includegraphics[width=\linewidth]{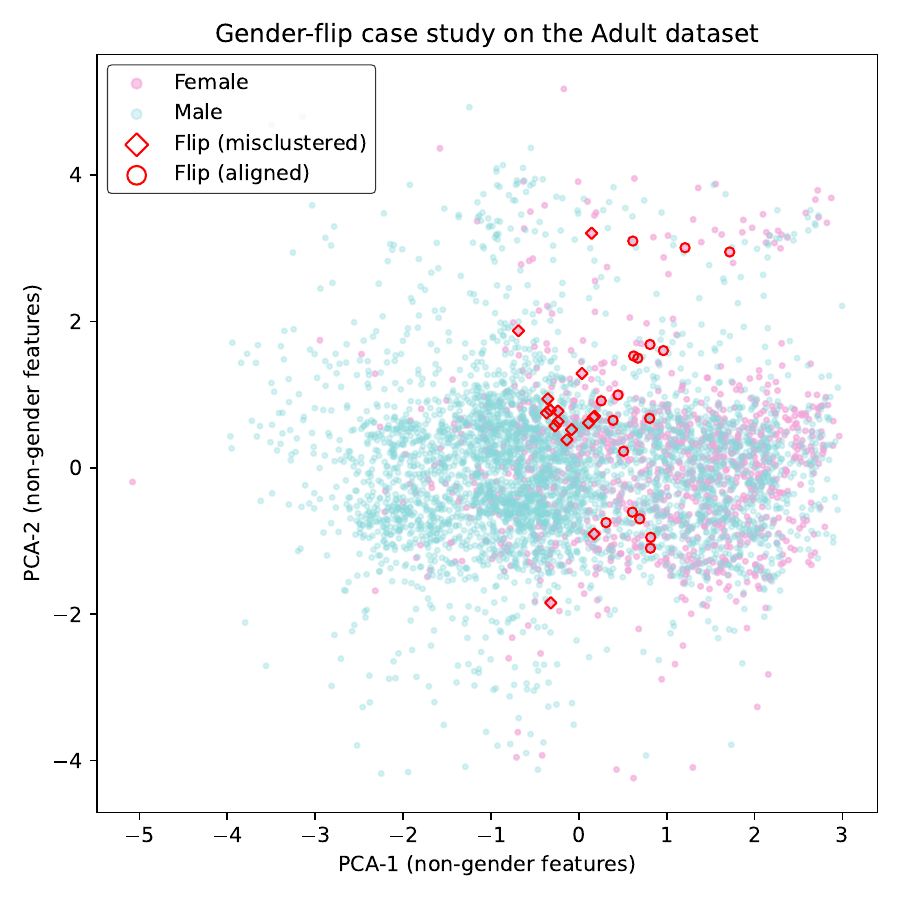}
\caption{}
\end{subfigure}

\caption{Visualization of the \texttt{gender}-flipping case study on the Adult dataset. Each subfigure corresponds to a synthetic 5K-row table pair (unwatermarked \& watermarked). Samples whose \texttt{gender} value flips after watermarking are highlighted. ``Flip (misclustered)'' denotes samples whose original \texttt{gender} label conflicts with their cluster label prior to watermarking, while ``Flip (aligned)'' denotes samples whose original label agrees with the cluster label but still flips. In subfigure~(a), 26 out of 5K samples exhibit a \texttt{gender} flip (46.2\% misclustered), with an average distance to the cluster boundary of 0.31, compared with 0.78 for the remaining samples. In subfigure~(b), 27 out of 5K samples flip (48.1\% misclustered), with an average boundary distance of 0.26 versus 0.75 for the remaining samples. In subfigure~(c), 33 out of 5K samples flip (48.5\% misclustered), with an average boundary distance of 0.29 versus 0.77 for the remaining samples.}

\label{fig:case}
\end{figure}

\begin{enumerate}[leftmargin=*]
\item Flips in low-entropy categorical variables are rare under \textsc{Tab-Drw}. In a 5K-row table, we typically observe only around 30 such cases.
\item Nearly half of the flipped samples are already misclustered before watermarking, indicating that the embedding process does not introduce semantic inconsistency and may even improve alignment with other features.
\item For the remaining flipped samples whose original labels align with their assigned cluster labels, the samples tend to lie close to the cluster boundary. Across the three table pairs, their average boundary distance is $0.29$, compared with $0.77$ for the remaining samples, suggesting that the resulting flips are confined to semantically ambiguous regions.
\end{enumerate}

Overall, this case study demonstrates that \textsc{Tab-Drw} \textbf{adapts its effective embedding behavior to feature entropy, producing conservative and semantics-preserving modifications for low-entropy categorical variables}.

\section{Discussion} \label{sec:conclusion}

In this paper, we present \textsc{Tab-Drw}, a lightweight and robust post-editing watermarking scheme for tabular data. \textsc{Tab-Drw} normalizes heterogeneous features via the Yeo–Johnson transformation and standardization, and embeds watermarks by adjusting the imaginary parts of adaptively selected frequency-domain entries. It achieves strong detectability and high fidelity across mixed-type datasets---without relying on large diffusion models or explicitly storing unwatermarked data.
Our proposed rank-based pseudorandom bit generation method enables efficient row-wise retrieval via robust rank statistics, further enhancing resilience to post-processing attacks. We provide theoretical analysis of watermark distortion and robustness against noise perturbations, and validate our approach on five benchmark datasets, demonstrating broad applicability, high fidelity, strong detectability, and great robustness against common post-processing attacks and stronger adaptive attacks. We believe that \textsc{Tab-Drw} offers a solid foundation for advancing secure data sharing and the development of privacy-preserving generative AI.

Several directions remain open for future work. First, it is worth exploring whether there exists a provably optimal strategy for modifying the frequency-domain representation to maximize detectability while minimizing distortion---and if so, in what sense. Second, integrating \textsc{Tab-Drw} with differential privacy or membership inference protections could provide unified mechanisms for data traceability and privacy preservation. Third, adapting the watermark strength based on feature importance or downstream task performance could further improve the fidelity-detectability trade-off. We hope these directions inspire further research toward robust and responsible use of synthetic tabular data.

{
\bibliographystyle{plainnat}
\bibliography{slads_reference}
}

\clearpage\phantomsection
\tableofcontents

\newpage
\appendix

\section{Related Work}
\label{app:related}

\paragraph{Watermarking LLM-generated text.}
Watermarking in large language models (LLMs) can be broadly categorized into unbiased and biased approaches, both aiming to embed detectable signals without substantially degrading text quality. Unbiased watermarks preserve the original next-token distribution exactly. \citet{2023watermarking} draws independent pseudorandom variables and samples next token using a deterministic decoder, preserving the multinomial distribution via the Gumbel-max trick. Similarly, \citet{2023Robust} generates the next token based on inverse transform sampling of the multinomial distribution. Optimal detection rules for these two unbiased LLM watermarks are derived under the statistical framework~\citep{2025statistical}. In contrast, biased watermarks perturb the token distribution to embed watermark signals. The KGW watermarking scheme~\citep{2023Watermark} randomly partitions the token vocabulary into green and red lists and then increases the sampling probability of green tokens to create detectable deviations in green-token frequency. Subsequent works have focused on improving robustness~\citep{2023ProvableRobust} and optimizing the trade-off between detectability and text quality~\citep{2024Optimizing,2025waterpool}. Additionally, \citet{2024debiasing} and \citet{2023dipmark} proposed unbiased variants of the KGW watermark by introducing decoding algorithm based on maximal coupling and reweighting strategy, respectively. However, due to their reliance on the order of tokens, these text watermarking methods can not be directly applied to structural tabular data, where attacks like row reordering are so common.

\paragraph{Watermarking generated images.}
Image watermarking methods embed invisible signals either during training or sampling phase. \citet{2023ptw} proposes a method to watermark pre-trained GANs without access to training data. \citet{2023Tree} exploits DDIM invertibility to embed structural patterns into the frequency domain of the initial noise vector during sampling, achieving an effective and invisible image watermarking. \citet{2024Gaussian} implements a performance-preserving watermarking for diffusion models by incorporating diffused bit information into Gaussian noise in the latent space.
\citet{2025tabwak} has explored generalizing image watermarking techniques to structural tabular data. However, these methods still either fail to support row-wise detection or suffer from poor data fidelity and limited robustness.

\paragraph{Watermarking synthetic tabular data.}
Tabular watermarking methods primarily fall into two categories: sampling-phase watermarking and post-editing watermarking. The former embeds watermark into the latent space during the denoising generation phase or modifies the generative workflow. For instance, \citet{2025tabwak} implements row-wise watermark embedding using self-cloning and seeded shuffling techniques, ensuring close approximation to the standard Gaussian distribution. However, due to its reliance on large diffusion models using DDIM sampling strategy, this method is unsuitable for scenarios with limited GPU resources. \citet{2025muse} proposes MUSE, a model-agnostic method that selects watermarked rows via a pseudorandom scoring mechanism across multiple candidates, preserving fidelity but increasing generation cost. The latter offers lightweight watermarking by modifying synthesized tabular data after generation. \citet{2024watermarkinggeneraive} bins continuous feature values into predefined `green' intervals and \citet{2024tabularmark} extends post-editing watermarking to tabular datasets with mixed-type features by selectively perturbing cells within a designated value range. While model-agnostic and computationally efficient, these approaches are either restricted by the feature type or vulnerable to noise and deletion attacks. The limitations of existing tabular watermarking methods highlight opportunities for improvement in four key dimensions: fidelity, detectability, applicability, and robustness. In this work, our proposed \textsc{Tab-Drw} addresses these limitations, achieving superior performance across all four dimensions.

\section{Privacy-Enhanced TAB-DRW} \label{app:privacy_tabdrw}

\begin{algorithm}[H]
\caption{Watermarking embedding of privacy-enhanced \textsc{Tab-Drw}}\label{alg:1_privacy}
\begin{algorithmic}[1]
\STATE \textbf{Input}: Tabular data $\mathbf{X} \in \mathbb{R}^{N \times p}$, parameters $\gamma \in [0,1]$ and $\delta \in [-1,1]$, watermark key $\kappa$.
\STATE Shuffle $\mathbf{X}$ using key-derived permutation $P_\kappa$ to obtain $\mathbf{X}P_\kappa$.
\STATE Transform $\mathbf{X}P_\kappa$ using YJT and standardization (still denoted as $\mathbf{X}P_\kappa$ for simplicity).
\FOR{each row $\bm{x}$ in $\mathbf{X}P_\kappa$} \STATE Compute $\bm{y} \leftarrow \texttt{DFT}(\bm{x})$ and generate pseudorandom bits $\{\zeta_t\}_{t=1}^m$ via Algorithm~\ref{alg:2}. \STATE Modify $\bm{y}$ according to soft variant \eqref{eq:2} to obtain $\bm{y}^{\mathrm{wm}}$. \STATE Recover $\bm{x}^{\mathrm{wm}} \leftarrow \texttt{IDFT}(\bm{y}^{\mathrm{wm}})$. \ENDFOR
\STATE Collect each $\bm{x}^{\mathrm{wm}}$ to form a matrix $\mathbf{X}^{\mathrm{wm}}$.
\STATE Apply inverse standardization and YJT to $\mathbf{X}^{\mathrm{wm}}$, then unshuffle it to obtain $\mathbf{X}^{\mathrm{wm}}P^{-1}_\kappa$.
\STATE Perform rounding and clipping if needed, and release.
\end{algorithmic}
\end{algorithm}

Inspired by the KGW watermark~\citep{2023Watermark}, we extend \textsc{Tab-Drw} with a privacy-enhanced variant designed to increase the difficulty of watermark removal. The modification is straightforward: prior to \textbf{Step~1} of watermark embedding, the columns of the tabular data are shuffled according to a pseudorandom permutation determined by the watermark key $\kappa$ (which can be shared with the key in Algorithm~\ref{alg:2}), and after \textbf{Step~3}, the columns are reshuffled back to their original order. During detection, a verifier holding the correct key can reproduce the same column permutation to obtain the watermarked frequency-domain representation. See Figure~\ref{fig:algorithm_privacy} and Algorithm~\ref{alg:1_privacy} for the complete procedure of the privacy-enhanced \textsc{Tab-Drw}.

Privacy-enhanced \textsc{Tab-Drw} satisfies two crucial requirements:
\begin{enumerate}[leftmargin=*]
\item \textbf{The key-dependent variability in the frequency-domain representation does not substantially affect watermark distortion or detectability}. In other words, given randomly selected watermark keys, the privacy-enhanced \textsc{Tab-Drw} should exhibit nearly consistent performance in terms of data fidelity and watermark detectability. From a theoretical perspective, since $P_\kappa$ is an orthogonal permutation matrix and the DFT/IDFT are unitary, inserting $P_\kappa$ before and $P_\kappa^{-1}$ after the frequency-domain transformation amounts to a norm-preserving change of entries in the original domain, meaning the total $\ell_2$ distortion remains unchanged. Moreover, because detection evaluates sign alignment in the same keyed frequency coordinates, the resulting $Z$-score is invariant up to index relabeling. Empirical evidence supporting this claim is provided in Table~\ref{tab:multi-key_evaluation} of Appendix~\ref{app:multi-key_eva}.

\begin{figure}[!t]
\centering
\includegraphics[width=\columnwidth]{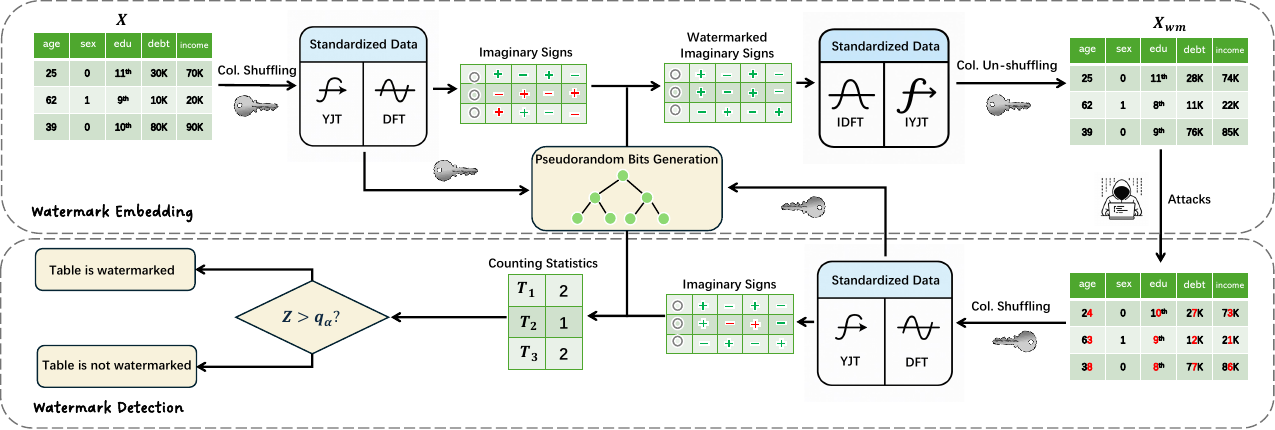}
\caption{Work flow of privacy-enhanced \textsc{Tab-Drw}.}
\label{fig:algorithm_privacy}
\end{figure}

\item \textbf{The approach supports multi-key scenarios, as a watermark embedded with one key cannot be detected using another, thereby effectively avoiding false positives.} Furthermore, the collision-free key space must be sufficiently large to support large-scale deployment. The motivation behind this design lies in the sensitivity of the row-wise DFT to column order: frequency-domain representations derived from different pseudorandom permutations are nearly independent and exhibit nontrivial discrepancies. As a result, the watermark key is effectively encoded into the frequency-domain watermark signal as a unique, secret pattern. Furthermore, this sensitivity induces a combinatorially large key space of size $\mathcal{O}(p!)$ for a tabular dataset with $p$ columns. Comprehensive empirical evaluations are presented in Table~\ref{tab:multi-key} of Appendix~\ref{app:multi-key_eva}.
\end{enumerate}

\section{Missing Details from Section~\ref{sec:method}} \label{app:alg}

\paragraph{A toy example.} We illustrate Algorithm~\ref{alg:2} with the toy setting used in Figure~\ref{fig:random_bits}: let $N=3$, $p=13$, and hence $m=\lfloor (p-1)/2 \rfloor = 6$. Suppose the secret key selects the column subset $\mathcal{I}=\{1,5,9\}$, and the sums of the three rows over $\mathcal{I}$ are $-0.8$, $0.3$, and $1.1$, respectively. Consider the second row as the target row $\bm{x}^{\ast}$. Since its sum is the second-largest among the three rows, its rank is $x^{\ast}_{\mathrm{rank}}=1$. After normalization in Line~\ref{step:4}, we obtain $x^{\ast}_{\mathrm{rank}} = 1/(3-1)=0.5$. Because $m=6$, the interval $[0,1]$ is partitioned into $2^{\lceil m/2 \rceil}=2^3=8$ equal-sized bins, and the tree depth is $\lceil m/2 \rceil = 3$. We then traverse the path determined by $x^{\ast}_{\mathrm{rank}}=0.5$. At level $j=1$, Line~\ref{step:6} gives $k=\min(2^1-1,\lfloor 2^1 \cdot 0.5 \rfloor)=1$, so Line~\ref{step:7} appends $[0,1]$ to $\mathbf{S}$. At level $j=2$, we have $k=\min(2^2-1,\lfloor 2^2 \cdot 0.5 \rfloor)=2$, so we append another $[0,1]$. At level $j=3$, we have $k=\min(2^3-1,\lfloor 2^3 \cdot 0.5 \rfloor)=4$, so we append $[1,0]$. Therefore, after the three levels, $\mathbf{S}=[0,1,0,1,1,0]$. Since $|\mathbf{S}|=m=6$, the truncation step does not remove any entry, and the final pseudorandom bit sequence for the target row is $[0,1,0,1,1,0]$. Figure~\ref{fig:al2_ill} visualizes the nodes visited in Lines~\ref{step:5}--\ref{step:7} for this example.

\paragraph{Connection to Gray codes.}From the perspective of Gray codes, our pseudorandom bit generation scheme can be viewed as a special case of a 2-Gray code. We present the formal construction process below.

Let $n \in \mathbb{N}$. Denote by $\{0,1\}^n$ the set of all binary strings of length $n$, and by $d_H(\cdot,\cdot)$ the Hamming distance on $\{0,1\}^n$. Let $G_n = (g_0^n,g_1^n,\ldots,g_{2^n-1}^n)$ with $g_i^n \in \{0,1\}^n$ be a standard $n$-bit 1-Gray code, specified up to cyclic permutation. By definition, $d_H(g_i^n,g_{i+1}^n) = 1\ \text{for all\ } i \in \{0,\ldots,2^n-1\}$. We define the pair-encoding map $\varphi : \{0,1\} \to \{0,1\}^2, \varphi(0) = 10, \varphi(1) = 01$ and extend $\varphi$ to a map $\phi: \{0,1\}^n \to \{0,1\}^{2n}$ by applying it coordinatewise: for $g^n = b_1 b_2 \cdots b_n \in \{0,1\}^n$, $b_j \in \{0,1\}$, let $\phi(g^n) = \varphi(b_1)\,\varphi(b_2)\,\cdots\,\varphi(b_n)$. For $n \in \mathbb{N}$, we then can define the set $H_{2n} = (h_0^{2n},h_1^{2n},\ldots,h_{2^n-1}^{2n})$ and $H_{2n-1} = (\pi(h_0^{2n}),\pi(h_1^{2n}),\ldots,\pi(h_{2^n-1}^{2n}))$, where $h_i^{2n} := \phi(g_i^{n}) \in \{0,1\}^{2n}$ and $\pi : \{0,1\}^{2n} \to \{0,1\}^{2n-1}$ be the projection that deletes the last coordinate: $\pi(x_1,\ldots,x_{2n}) = (x_1,\ldots,x_{2n-1})$.

By construction, the family $\{H_n\}$ forms a collection of 2-Gray codes, and any two consecutive codewords $h_i^n$ and $h_{i+1}^n$ satisfy $d_H(h_i^n, h_{i+1}^n) \leq 2$. In our paper, a tree of depth $N$ stores the sequence $\{H_n\}_{n=1}^{2N}$, and each rank bin corresponds to a distinct $h_i^{2N}$ or $\pi(h_i^{2N})$. Compared with using a standard 1-Gray code, this construction has the advantage of slowing the growth rate of rank bins as the number of table columns increases, since $|H_n| = 2^{\lceil n/2 \rceil}$ whereas $|G_n| = 2^{n}$. This reduction leads to greater robustness of bit retrieval against rank shifts. In Appendix~\ref{app:ablation}, we provide additional evaluations on using a 1-Gray code for bit sequence generation .

\section{Missing Details from Section~\ref{sec:theoretial_analysis}} \label{app:mis_sec3}

\paragraph{Soundness of robustness analysis in the transformed domain.}

This appendix provides detailed empirical evidence supporting Remark~\ref{rem:soundness_transformed}.

We first provide a controlled case study to quantify the impact of YJT parameter refitting. We generate multivariate Gaussian tables with row counts $N \in \{100, 1000, 10000\}$ and column counts $p \in \{10, 50, 100\}$. Two covariance structures are considered: the identity matrix and an AR(1) matrix $\Sigma_{ij} = \rho^{|i-j|}$ with $\rho = 0.4$. Each table is watermarked using \textsc{Tab-Drw} with $(\gamma, \delta) = (0.5, 0.5)$. The YJT parameters $\lambda$, mean $\mu$, and standard deviation $\sigma$ are recorded before and after watermarking and averaged across all columns. The results in Table~\ref{tab:yjt_params} show that \textsc{Tab-Drw} introduces negligible perturbations to these parameters across different dimensions and covariance structures.res.

\begin{table}[!t]
\centering
\caption{YJT parameters summary (before vs. after watermarking) across varying dimensions and covariance structures.}
\label{tab:yjt_params}
\resizebox{0.9\textwidth}{!}{%
\begin{tabular}{@{}c c c c c c c c c@{}}
\toprule
$\bm{\Sigma}$ & \textbf{N} & \textbf{p}
& $\lambda$ \textbf{ Before} & $\lambda$ \textbf{ After}
& $\mu$ \textbf{ Before} & $\mu$ \textbf{ After}
& $\sigma$ \textbf{ Before} & $\sigma$ \textbf{ After} \\
\midrule
Identity & 100   & 10   & 1.0164  & 1.0117 & -0.0230 & -0.0229 & 0.9792 & 0.9592 \\
Identity & 100   & 50   & 1.0065  & 1.0030 & -0.0017 & -0.0027 & 0.9957 & 0.9835 \\
Identity & 100   & 100  & 0.9743  & 0.9807 & -0.0190 & -0.0166 & 0.9917 & 0.9817 \\
Identity & 1000  & 10   & 1.0113  & 1.0113 & 0.0418  & 0.0420  & 1.0002 & 0.9883 \\
Identity & 1000  & 50   & 0.9971  & 0.9940 & -0.0070 & -0.0081 & 1.0020 & 0.9939 \\
Identity & 1000  & 100  & 0.9885  & 0.9861 & -0.0056 & -0.0064 & 1.0014 & 0.9943 \\
Identity & 10000 & 10   & 0.9970  & 0.9999 & 0.0017  & 0.0026  & 0.9988 & 0.9889 \\
Identity & 10000 & 50   & 1.0029  & 1.0027 & 0.0000  & -0.0001 & 1.0000 & 0.9925 \\
Identity & 10000 & 100  & 1.0003  & 1.0011 & 0.0017  & 0.0020  & 0.9997 & 0.9930 \\
\midrule
AR(1)    & 100   & 10   & 0.9849  & 0.9735 & 0.0075  & 0.0028  & 1.0367 & 1.0231 \\
AR(1)    & 100   & 50   & 1.0158  & 1.0128 & 0.0072  & 0.0064  & 0.9766 & 0.9680 \\
AR(1)    & 100   & 100  & 0.9917  & 0.9921 & -0.0222 & -0.0223 & 0.9811 & 0.9724 \\
AR(1)    & 1000  & 10   & 0.9866  & 0.9892 & -0.0024 & -0.0015 & 0.9963 & 0.9873 \\
AR(1)    & 1000  & 50   & 0.9926  & 0.9911 & -0.0046 & -0.0051 & 0.9988 & 0.9926 \\
AR(1)    & 1000  & 100  & 0.9972  & 0.9967 & 0.0010  & 0.0009  & 0.9954 & 0.9898 \\
AR(1)    & 10000 & 10   & 1.0037  & 1.0051 & 0.0008  & 0.0013  & 0.9958 & 0.9873 \\
AR(1)    & 10000 & 50   & 1.0023  & 1.0032 & -0.0051 & -0.0048 & 0.9998 & 0.9938 \\
AR(1)    & 10000 & 100  & 0.9998  & 0.9998 & -0.0001 & -0.0000 & 0.9999 & 0.9946 \\
\bottomrule
\end{tabular}%
}
\end{table}

We also provide additional empirical evaluations by comparing detection performance under 1) \textbf{idealized setting:} No parameter refitting, as assumed in Section~\ref{sec:theoretial_analysis}, and 2)
\textbf{practical setting:} With parameter refitting, as used in experiments. As shown in Table~\ref{tab:dis_shift}, the impact of distribution shifts on detection performance is negligible relative to post-processing attacks, and the embedded watermark remains highly detectable even under such shifts. These results demonstrate both the robustness of our approach and the validity of conducting robustness analysis in the transformed domain.

\begin{table}[htbp]
\centering
\caption{Detection performance under different parameter-refitting settings. Each entry reports the average Z-score over 1K rows, evaluated using \textsc{Tab-Drw} with $(\gamma, \delta) = (0.5, 0.5)$ under 100 trials.}
\label{tab:dis_shift}
\resizebox{0.5\textwidth}{!}{%
\begin{tabular}{c c c}
\toprule
\textbf{Dataset} & \textbf{Idealized setting} & \textbf{Practical setting}  \\
\midrule
Adult    & 15.13±1.04 & 12.81±1.17 \\
Magic    & 30.47±0.85 & 27.34±0.93 \\
Shoppers & 21.14±0.84 & 18.18±1.28 \\
Default   & 19.02±0.85 & 15.98±0.92 \\
Drybean & 41.36±0.98 & 38.03±1.03 \\
\bottomrule
\end{tabular}%
}
\end{table}

\paragraph{Soundness of robustness analysis under Gaussian setting.} Real-world tabular data are highly heterogeneous and rarely follow a strict multivariate Gaussian distribution. However, after applying the Yeo–Johnson transformation (YJT), the data typically become much closer to Gaussian. As discussed earlier, YJT standardizes heterogeneous feature scales and reduces skewness, enabling more tractable analysis in the transformed space.

Although the Gaussian setting does not fully capture the complexity of real-world data, deriving closed-form robustness guarantees under arbitrary, non-Gaussian distributions is generally intractable. Our aim is not to provide universal theoretical guarantees, but to clarify the underlying robustness mechanism of our method. In particular, we show how sign alignment in the frequency domain, together with the hyperparameters $(\gamma, \delta)$, preserves the watermark signal under perturbations. Therefore, we adopt the multivariate Gaussian model as a simplified yet widely used analytical tool to make this intuition concrete. As the saying goes, ``All models are wrong, but some are useful.'' Our analysis is intended to shed light on why our method is robust---not to claim robustness under all possible data distributions.

\section{Theoretical Analysis and Proofs} \label{app:proof}

\subsection{Proof of Proposition~\ref{prop:perturbation}}
\begin{proof}[Proof of Proposition~\ref{prop:perturbation}]
Let $\Delta y_{i,j} = y_{i,j}^{\mathrm{wm}} - y_{i,j}$ denote the entry-wise difference in the frequency domain, then by Def.~\ref{def:DFT} and the Algorithm~\ref{alg:1} with soft parameters $(\gamma, \delta)$, we have
\begin{equation*}
\Delta y_{i,k}=
\begin{cases}
-\mathtt{i}(1+\delta)\cdot \Im(y_{i,k}), & k\in S,\\
\mathtt{i}(1+\delta)\cdot \Im(y_{i,p-k}), & p-k\in S,\\
0, & \text{otherwise}.
\end{cases}
\end{equation*}

By the inverse DFT as defined in Def.~\ref{def:DFT}, the entry-wise difference $\Delta x_{i,j}$ is given by
\begin{align*}
\Delta x_{i,j} & = \frac{1}{\sqrt{p}} \left[ \sum_{k \in S} \Delta y_{i,k} e^{\mathtt{i} \frac{2\pi}{p} kj} + \sum_{p-k \in S} \Delta y_{i,k} e^{\mathtt{i} \frac{2\pi}{p} kj}\right] \\
& = \frac{2(1+\delta)}{\sqrt{p}} \left[ \sum_{k \in S} \Im (y_{i,k}) \sin{\frac{2\pi}{p}kj}\right].
\end{align*}
Plugging in $\Im(y_{i,k}) =-\frac{1}{\sqrt p}\sum_{n=0}^{p-1}x_{in}\sin\!\frac{2\pi kn}{p}$ leads to
\begin{equation*}
\Delta x_{i,j} =-\frac{2(1+\delta)}{p}
\sum_{n=0}^{p-1}\!
\left[\sum_{k\in S}
\sin\!\frac{2\pi kn}{p}\sin\!\frac{2\pi kj}{p}\right]
x_{in},
\end{equation*}
which is precisely
$\Delta x_{i,j}=-\alpha\,\boldsymbol\beta_j^{\!\top}\bm{x}_i$
with
$\alpha=\frac{2(1+\delta)}{p}$ and the stated $\boldsymbol{\beta}_j$ and $\bm{x}_i$.
\end{proof}

\subsection{Proof of Theorem~\ref{thm:column-wise-difference}}
We prove items one by one.
\begin{enumerate}[leftmargin=*]
\item \textbf{Mean.} For each column $j = 0,\dots, p-1$, we have
\[
\frac1N\sum_{i=1}^{N}\Delta x_{i,j}
=-\alpha\,\boldsymbol\beta_j^{\!\top}
\left(\frac1N\sum_{i=1}^{N}\bm{x}_i\right)=0,
\]
since each column is centered.

\item \textbf{Pearson correlation coefficients (PCC).} Given that each column is centered and standardized, the difference of PCC between each column pair $(j,\ell)$ is given by
\[
\Delta r_{j\ell}
=\frac1N\sum_{i=1}^N(x_{i,j}\Delta x_{i,\ell}
+x_{i,\ell}\Delta x_{i,j}
+\Delta x_{i,j}\Delta x_{i,\ell}).
\]
Plugging $\Delta x_{i,j}=-\alpha\,\boldsymbol\beta_j^{\!\top}\bm{x}_i$ leads to
\[
\Delta r_{j\ell} =
-\alpha\left(\,[\mathbf\Sigma\boldsymbol\beta_\ell]_j+
\,[\mathbf\Sigma\boldsymbol\beta_j]_\ell\right)
+\alpha^{2}\,\boldsymbol\beta_j^{\!\top}\mathbf\Sigma\boldsymbol\beta_\ell,
\]
where $\mathbf{\Sigma} = \frac{1}{N} \mathbf{X}^\top \mathbf{X} ~~\text{with}~~\mathrm{diag}(\mathbf{\Sigma}) = \mathbb{I}$.

\item \textbf{Empirical distribution.} Consider the coupling matching $x_{i,j}$ to $x_{i,j} + \Delta x_{i,j}$ for each $i$, we bound the transport cost as below:
\[
\mathcal{W}_2^2(\rho_j, \nu_j) \leq \frac{1}{N} \sum_{i=1}^{N} (\Delta x_{i,j})^2 = \alpha^2 \boldsymbol{\beta}_j^\top \mathbf{\Sigma} \boldsymbol{\beta}_j,
\]
which leads to the claimed inequality.
\end{enumerate}

\subsection{Proof of Theorem~\ref{thm:robustness}}

\begin{lemma}[Gaussian tail bound] \label{lem:upbound}
Let $\Phi(u)$ denote the standard normal CDF and $Q(u) := 1 - \Phi(u)$. For any $u > 0$,
\begin{equation*}
Q(u) \leq \frac{1}{2}e^{-u^2/2}.
\end{equation*}
\end{lemma}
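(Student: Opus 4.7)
The plan is to prove this classical Gaussian tail inequality by a substitution-and-domination argument, starting from the integral representation $Q(u) = \int_u^\infty \tfrac{1}{\sqrt{2\pi}} e^{-t^2/2}\,dt$. The idea is to shift the variable of integration so that the exponential factor $e^{-u^2/2}$ can be pulled out cleanly, and then bound what remains by the tail mass of a standard normal on $[0,\infty)$, which equals $1/2$.

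Concretely, I would first substitute $t = u + s$ with $s \geq 0$, turning the integral into $\int_0^\infty \tfrac{1}{\sqrt{2\pi}} e^{-(u+s)^2/2}\,ds$. Expanding the exponent as $(u+s)^2/2 = u^2/2 + us + s^2/2$ lets me factor out $e^{-u^2/2}$:
\[
Q(u) = e^{-u^2/2} \int_0^\infty \frac{1}{\sqrt{2\pi}} e^{-us} e^{-s^2/2}\,ds.
\]
Then, since $u > 0$ and $s \geq 0$, I would use the pointwise bound $e^{-us} \leq 1$ to dominate the integrand by $\tfrac{1}{\sqrt{2\pi}} e^{-s^2/2}$, which integrates to $1/2$ on the half-line by symmetry of the standard normal density. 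This yields the desired inequality $Q(u) \leq \tfrac{1}{2} e^{-u^2/2}$.

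There is no real obstacle here; the only subtlety is choosing the shift $t = u + s$ (rather than, say, the Chernoff-type estimate, which produces the looser bound $Q(u) \leq e^{-u^2/2}$ and misses the factor $1/2$) so that the cross term $-us$ is non-positive and can be discarded by monotonicity. This makes the proof a short three-line computation once the substitution is in place.
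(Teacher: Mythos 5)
Your proof is correct, and it takes a genuinely different route from the paper's. You use a single substitution $t = u+s$, factor out $e^{-u^2/2}$, and discard the cross term $e^{-us}\le 1$, leaving exactly the half-line Gaussian mass $\int_0^\infty \tfrac{1}{\sqrt{2\pi}}e^{-s^2/2}\,ds = \tfrac12$; this works uniformly for all $u>0$ with no case analysis. The paper instead splits at $u=\sqrt{2/\pi}$: for large $u$ it integrates by parts to get the Mills-ratio-type bound $Q(u)\le \tfrac{1}{u\sqrt{2\pi}}e^{-u^2/2}$, which is at most $\tfrac12 e^{-u^2/2}$ precisely when $u\ge\sqrt{2/\pi}$; for small $u$ it compares derivatives of $Q(u)$ and $\tfrac12 e^{-u^2/2}$ and integrates from the common value $\tfrac12$ at $u=0$. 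Your argument is shorter and avoids the case split entirely; the paper's version incidentally records the sharper $1/(u\sqrt{2\pi})$ prefactor in the large-$u$ regime, though that refinement is not used in the stated lemma. Either proof is complete and acceptable.
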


\begin{proof}[Proof of Lemma~\ref{lem:upbound}] We discuss the bound under two cases.\\
\textbf{Case 1}: When $u > \sqrt{\frac{2}{\pi}}$, through integration by parts, we have
\begin{equation*}
Q(u) = \frac{1}{\sqrt{2\pi}} \int_u^\infty e^{-t^2/2} dt \leq \frac{1}{\sqrt{2\pi}} \left[\frac{e^{-u^2/2}}{u} - \int_u^\infty \frac{e^{-t^2/2}}{t^2} dt\right].
\end{equation*}
Dropping the negative integral preserves the inequality, yielding
\begin{equation*}
Q(u) \leq \frac{1}{u\sqrt{2\pi}} e^{-u^2/2} \leq \frac{1}{2}e^{-u^2/2}.
\end{equation*}

\noindent \textbf{Case 2}: When $0 < u \leq \sqrt{\frac{2}{\pi}}$, we have
\begin{equation*}
\frac{d}{du}\left(\frac{1}{2}e^{-u^2/2}\right) = -\frac{u}{2}e^{-u^2/2} \geq -\frac{1}{\sqrt{2\pi}}e^{-u^2/2} = \frac{d}{du}Q(u),
\end{equation*}
where the inequality follows from $u \leq \sqrt{\frac{2}{\pi}}$. Integrating from 0 to $u$ gives
\begin{equation*}
\int_0^u d\left(\frac{1}{2}e^{-t^2/2}\right) \geq \int_0^u dQ(t) \Rightarrow Q(u) \leq \frac{1}{2}e^{-u^2/2}.
\end{equation*}

Combining the two cases yields the stated bound.
\end{proof}

\begin{lemma}[Noise in the frequency domain]\label{lem:noise}
Let $\boldsymbol{\varepsilon}=(\varepsilon_{0},\dots,\varepsilon_{p-1})^{\top}\sim\mathcal{N}(\mathbf{0},\sigma^{2}\mathbf{I}_{p})$ be a real-valued Gaussian noise vector of length $p$.  Apply the DFT in Def.~\ref{def:DFT} to obtain $\hat{\boldsymbol{\varepsilon}}=(\hat\varepsilon_{0},\dots,\hat\varepsilon_{p-1})^{\top}$.  Denote by
\begin{equation*}
z_{t} = \Im(\hat\varepsilon_{t}),
\quad t=1,\dots,m,
\end{equation*}
the imaginary part of the noise component at the $t$-th effective entry.  Then
\begin{equation*}
z_{t} \sim \mathcal{N}\!\left(0,\frac{\sigma^{2}}{2}\right).
\end{equation*}
\end{lemma}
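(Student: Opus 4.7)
The plan is to derive $z_t$ as an explicit real-linear combination of the i.i.d.\ Gaussian entries $\varepsilon_0, \ldots, \varepsilon_{p-1}$, verify that this combination is centered Gaussian, and then compute its variance in closed form using a standard trigonometric identity.

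First, I would expand the DFT defined in Def.~\ref{def:DFT}: $\hat\varepsilon_t = \tfrac{1}{\sqrt{p}} \sum_{n=0}^{p-1} \varepsilon_n e^{-\mathtt{i} 2\pi t n/p}$. Taking imaginary parts gives
\[
z_t = \Im(\hat\varepsilon_t) = -\frac{1}{\sqrt{p}} \sum_{n=0}^{p-1} \varepsilon_n \sin\!\left(\frac{2\pi t n}{p}\right).
\]
Since each $\varepsilon_n$ is an independent $\mathcal{N}(0,\sigma^2)$ random variable and $z_t$ is a real linear combination of them, $z_t$ is Gaussian with mean zero, and its variance is
\[
\mathrm{Var}(z_t) = \frac{\sigma^2}{p} \sum_{n=0}^{p-1} \sin^2\!\left(\frac{2\pi t n}{p}\right).
\]

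Next, I would evaluate the sum by the identity $\sin^2(x) = \tfrac{1-\cos(2x)}{2}$, obtaining
\[
\sum_{n=0}^{p-1} \sin^2\!\left(\frac{2\pi t n}{p}\right) = \frac{p}{2} - \frac{1}{2}\sum_{n=0}^{p-1} \cos\!\left(\frac{2\pi (2t) n}{p}\right).
\]
The remaining cosine sum is the real part of a geometric series $\sum_{n=0}^{p-1} e^{\mathtt{i} 2\pi (2t) n/p}$, which vanishes whenever $2t \not\equiv 0 \pmod{p}$.

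The one point that needs a brief check is the range of $t$. Since $t \in \{1,\dots,m\}$ with $m = \lfloor (p-1)/2 \rfloor$, we have $1 \le t \le (p-1)/2$, so $2 \le 2t \le p-1 < p$, which guarantees $2t \not\equiv 0 \pmod{p}$. This makes the cosine sum exactly zero, leaving $\sum_n \sin^2(2\pi t n/p) = p/2$, and thus $\mathrm{Var}(z_t) = \sigma^2/2$. There is no real obstacle here beyond being careful about the edge case $2t = p$, which is ruled out by the definition of $m$; the result then follows immediately.
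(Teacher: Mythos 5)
Your proposal is correct and follows essentially the same route as the paper's proof: express $z_t$ as a real linear combination of the i.i.d.\ Gaussian entries, apply $\sin^2 u = \tfrac{1}{2}(1-\cos 2u)$, and kill the cosine sum as a vanishing geometric series. Your explicit check that $2t \le p-1$ (ruling out $2t \equiv 0 \pmod p$) is slightly more careful than the paper, which only notes the sum vanishes for $t \notin \{0, p/2\}$ without spelling out why the range of $t$ excludes those cases.
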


\begin{proof}[Proof of Lemma~\ref{lem:noise}]
Denote $\theta_{t,n}:=\frac{2\pi tn}{p}$, we have
\begin{equation*}
z_t =-\frac1{\sqrt p}\sum_{n=0}^{p-1}\varepsilon_n \sin \left(\theta_{t,n}\right).
\end{equation*}

Note that $z_t$ is a linear combination of independent Gaussian variables, hence still be a Gaussian with zero mean. Since $\operatorname{Var}(\varepsilon_n)=\sigma^{2}$ and the $\varepsilon_n$'s are independent,
\begin{equation*}
\operatorname{Var}[z_t]=\frac{1}{p}\sigma^{2}\sum_{n=0}^{p-1}\sin^{2} \left(\theta_{t,n}\right).
\end{equation*}
Using the trigonometric identity $\sin^2 u=\frac12\left(1-\cos 2u \right)$,
\begin{equation*}
\sum_{n=0}^{p-1}\sin^{2}\!\left(\theta_{t,n}\right)
=\frac p2-\frac12\sum_{n=0}^{p-1}\cos\!\left(2\theta_{t,n}\right).
\end{equation*}
The second sum is a geometric series whose value is
$0$ whenever $t\notin\{0,\frac{p}{2}\}$:
\begin{equation*}
\sum_{n=0}^{p-1}e^{i\frac{4\pi tn}{p}}
=\frac{1-e^{i4\pi t}}{1-e^{i\frac{4\pi t}{p}}}=0.
\end{equation*}
Hence $\sum_{n=0}^{p-1}\sin^{2}(\theta_{t,n})=\frac{p}{2}$ for each $t=1, \dots, m$.
Substituting back,
\begin{equation*}
\operatorname{Var}[z_t]=\frac{\sigma^{2}}{p}\cdot\frac p2=\frac{\sigma^{2}}{2}.
\end{equation*}
\end{proof}

\begin{lemma}[Standard Z-score] \label{lemma:z-score}
If pseudorandom bits $\zeta_{i,j} \overset{\mathrm{i.i.d.}}{\scalebox{1.2}[1]{$\sim$}} \text{Bernoulli}(0.5)$ and are independent of effective entries $\{y_{i,j}\}$, then $\{T_i\}_{i=1}^N$, as defined in Section~\ref{sec:method}, $i.i.d.$ follows $B(m,\frac{1}{2})$ under $H_0$, thus has expected value $\frac{m}{2}$ and variance $\frac{m}{4}$. By Central Limit Theorem, the Z-score under $H_0$ follows
\begin{equation*}
Z = \frac{\sum_{i=1}^N T_i - \frac{mN}{2}}{\sqrt{\frac{mN}{4}}} \overset{d}{\to} \mathcal{N}(0,1) \text{\ \ as\ \ } N \rightarrow \infty.
\end{equation*}
\end{lemma}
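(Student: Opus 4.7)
The plan is to decompose the claim into two elementary steps: first, establish that each row statistic $T_i$ is distributed as $B(m, 1/2)$ with the $T_i$ mutually independent across rows; second, invoke the classical Lindeberg-L\'evy central limit theorem on the resulting i.i.d.\ bounded sum. The substantive part is the first step, and the essential input is that under $H_0$ the pseudorandom bits $\zeta_{i,j}$ are independent of the frequency-domain entries $y_{i,j}$, combined with the perfect symmetry $2\zeta_{i,j}-1 \in \{\pm 1\}$ with equal probabilities.

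First I would fix an index pair $(i,j)$ and condition on $\Im(y_{i,j})$. Under the standing assumption that the unwatermarked data are continuous (so that $\Im(y_{i,j}) \neq 0$ almost surely after DFT), $\mathrm{sign}(\Im(y_{i,j}))$ is well defined. By the independence of $\zeta_{i,j}$ from $\Im(y_{i,j})$ together with $P(2\zeta_{i,j}-1 = +1) = P(2\zeta_{i,j}-1 = -1) = 1/2$, the product $\Im(y_{i,j})\cdot(2\zeta_{i,j}-1)$ is equally likely to be positive or negative conditional on $\Im(y_{i,j})$. Hence each indicator $\mathbb{I}[\Im(y_{i,j})\cdot(2\zeta_{i,j}-1) > 0]$ is marginally $\mathrm{Bernoulli}(1/2)$, independent of the value of $\Im(y_{i,j})$.

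Next I would upgrade marginal to joint independence across $j$ within a row. Conditioning on the whole row $\bm{y}_i$, the $m$ indicators are measurable functions of the independent bits $\{\zeta_{i,j}\}_{j=1}^m$, and hence are conditionally independent; by the previous step, each is conditionally $\mathrm{Bernoulli}(1/2)$ with a distribution that does not depend on $\bm{y}_i$. A routine tower-property computation (e.g., $P(A_1=1,A_2=1) = \mathbb{E}[P(A_1=1,A_2=1\mid \bm y_i)] = 1/4$) then yields unconditional joint independence, so $T_i = \sum_{j=1}^m A_{i,j} \sim B(m,1/2)$ with mean $m/2$ and variance $m/4$. Independence of the rows $\{T_i\}_{i=1}^N$ follows from the fact that the bits $\{\zeta_{i,j}\}$ are i.i.d.\ across $(i,j)$, making the $(\bm y_i, \zeta_{i,\cdot})$ tuples mutually independent.

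Finally, since $\{T_i\}_{i=1}^N$ is an i.i.d.\ sequence bounded in $[0,m]$ with mean $m/2$ and variance $m/4 > 0$, the Lindeberg-L\'evy CLT gives $(\sum_{i=1}^N T_i - Nm/2)/\sqrt{Nm/4} \xrightarrow{d} \mathcal{N}(0,1)$ as $N \to \infty$, which is exactly the stated Z-score limit. I do not anticipate any real obstacle; the only subtlety worth flagging explicitly is the almost-sure nonvanishing of $\Im(y_{i,j})$, which is needed to rule out the corner case where the strict inequality in the definition of $T_i$ produces a downward bias. For continuous features after YJT and standardization this holds automatically, and in the presence of discrete or rounded columns one could either adopt a tie-breaking convention (e.g., randomized sign) or treat the resulting $O(N^{-1})$ deviation as negligible relative to the $\sqrt{N}$ fluctuation scale.
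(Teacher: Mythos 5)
Your proposal is correct and follows essentially the same route as the paper: both arguments use the independence of the fair bits $\zeta_{i,j}$ from the data to show each alignment indicator is $\mathrm{Bernoulli}(1/2)$ regardless of the distribution of $\Im(y_{i,j})$, establish joint independence by a conditioning argument (your tower-property computation is the paper's explicit partition over sign configurations in disguise), and then apply the classical CLT to the i.i.d.\ $B(m,\tfrac12)$ row sums. Your explicit flagging of the tie case $\Im(y_{i,j})=0$ is a detail the paper's proof glosses over, but it does not change the argument.
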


\begin{proof}[Proof of Lemma~\ref{lemma:z-score}]
For each index pair $(i,j)$ of effective entries, define the events
\begin{equation*}
E_{i,j} := \left\{\operatorname{sign}(\Im(y_{i,j})) = 2\,\zeta_{i,j} - 1\right\},
\qquad
A_{i,j} := \left\{\operatorname{sign}(\Im(y_{i,j})) = 1\right\}.
\end{equation*}
We will show that the indicator variables $\{\mathbb{I}(E_{i,j})\}_{i,j}$ are independent and identically distributed as $\mathrm{Bernoulli}(0.5)$.

First, set
\begin{equation*}
p_{i,j} := \mathbb{P}\left(\operatorname{sign}(\Im(y_{i,j})) = 1\right).
\end{equation*}
By conditioning on $\zeta_{i,j}\in\{0,1\}$ and using the fact that $\mathbb{P}(\zeta_{i,j}=1)=\mathbb{P}(\zeta_{i,j}=0)=\frac{1}{2}$, we obtain
\begin{equation*}
\mathbb{P}(E_{i,j})
= p_{i,j}\,\mathbb{P}(\zeta_{i,j}=1)
+ (1-p_{i,j})\,\mathbb{P}(\zeta_{i,j}=0)
= \frac12\,p_{i,j} + \frac12\,(1-p_{i,j})
= \frac12.
\end{equation*}
Hence each $\mathbb{I}(E_{i,j})\sim\mathrm{Bernoulli}(0.5)$.

To verify independence, for any finite index set
$
\mathcal{I} \subseteq \{(i,j)\colon 1\le i\le N,1\le j\le m\}.
$
we consider a family of events
\begin{equation*}
\mathcal{B} =\left\{\,B_{\mathcal{I}_1,\mathcal{I}_2}\colon \mathcal{I}_1\cup \mathcal{I}_2=\mathcal{I},\mathcal{I}_1\cap \mathcal{I}_2=\emptyset\right\},
\quad
B_{\mathcal{I}_1,\mathcal{I}_2}
= \left(\bigcap_{(i,j)\in \mathcal{I}_1}A_{i,j}\right)
\cap\left(\bigcap_{(i,j)\in \mathcal{I}_2}A_{i,j}^c\right).
\end{equation*}
Clearly $\mathcal{B}$ is a partition of the sample space $\Omega$, hence we have
\begin{align*}
\mathbb{P}\left(\bigcap_{(i,j)\in \mathcal{I}}E_{i,j}\right)
&= \sum_{B_{\mathcal{I}_1,\mathcal{I}_2} \in \mathcal{B}}
\mathbb{P}(B_{\mathcal{I}_1,\mathcal{I}_2})
\prod_{(i,j)\in \mathcal{I}_1}\mathbb{P}(\zeta_{i,j}=1)
\prod_{(i,j)\in \mathcal{I}_2}\mathbb{P}(\zeta_{i,j}=0)\\
&= \sum_{B_{\mathcal{I}_1,\mathcal{I}_2} \in \mathcal{B}}\mathbb{P}(B_{\mathcal{I}_1,\mathcal{I}_2})\frac1{2^{|\mathcal{I}|}}
= \frac1{2^{|\mathcal{I}|}}
= \prod_{(i,j)\in \mathcal{I}}\mathbb{P}(E_{i,j}),
\end{align*}
This implies that the collection of events \( \{E_{i,j}\}_{i,j} \) is mutually independent. Together with the fact that \( \mathbb{P}(E_{i,j}) = \frac{1}{2} \), this completes the proof.

\end{proof}

\begin{proof}[Proof of Theorem~\ref{thm:robustness}]
We continue with the notations established in Lemmas~\ref{lem:noise} and~\ref{lemma:z-score}.  Let
\begin{equation*}
\mathbf{X} = \{x_{i,j}\}\in\mathbb{R}^{N\times p},
\quad
\bm{x}_{i}:=(x_{i,0},\dots,x_{i,p-1}) \overset{\mathrm{i.i.d.}}{\scalebox{1.2}[1]{$\sim$}} \mathcal{N}(0,\mathbf{\Sigma}),
\end{equation*}
and denote the frequency-domain representation by
$\mathbf{Y}\in\mathbb{C}^{N\times p}$. Then for each row, the effective entries satisfy:
\begin{equation*}
\Im(y_{t})
= -\frac{1}{\sqrt p}\sum_{n=0}^{p-1}x_{n}\sin(\theta_{t,n}),
\end{equation*}
where $\theta_{t,n} = \frac{2\pi tn}{p}$. Let $\bm{s}_t$ denotes $\left(\sin(\theta_{t,0}), \dots, \sin(\theta_{t,p-1})\right) \in \mathbb{R}^{1\times p}$. Since each \(\bm x_i\) is Gaussian, \(\Im(y_t)\)
is Gaussian with zero mean and
\[
\operatorname{Var}\left[\Im(y_t)\right] =\frac{1}{p}\bm s_t^{\!\top}\mathbf{\Sigma}\bm s_t
\in
\left[\frac{\lambda_{\min}}p\,\lVert\bm s_t\rVert_2^2, \frac{\lambda_{\max}}p\,\lVert\bm s_t\rVert_2^2\right]
=\left[\frac{\lambda_{\min}}2, \frac{\lambda_{\max}}2\right],
\]
where \(\lVert\bm s_t \rVert^2_2=\frac{p}{2}\) follows from Lemma~\ref{lem:noise} and $\lambda_{\min}$ and $\lambda_{\max}$ denote the smallest and largest eigenvalue of the covariance matrix $\mathbf{\Sigma}$, respectively.

Given a pseudorandom bit $\zeta_{t}\in\{0,1\}$, the process of watermark embedding in Algorithm~\ref{alg:1} replaces
$\Im(y_{t})$ by
\begin{equation*}
\Im\left(y_t^{\mathrm{wm}}\right) =
\begin{cases} -\delta \cdot \Im(y_t),
& \text{if}~\Im(y_t) \cdot (2\zeta_t-1) < 0~\text{and}~|\Im(y_t)| \le \mathrm{Quantile}_{\gamma}(\{|\Im(y_t)|\}_{t=1}^m),\\
\Im\left(y_t\right), & \text{otherwise},
\end{cases}
\end{equation*}
Let
\[
\alpha_t := \bigl|\Im(y_t^{\mathrm{wm}})\bigr|,
\quad
\frac\lambda2 :=\operatorname{Var}\!\bigl[\Im(y_t)\bigr]\in\Bigl[\frac{\lambda_{\min}}2,\frac{\lambda_{\max}}2\Bigr],
\]
and define
\[
F(x)
:= \frac{2}{\sqrt{\pi\lambda}}\,e^{-\frac{x^2}{\lambda}}\,\mathbb I(x\ge0),
\quad
F_\delta(x)
:= \frac{2}{\delta\sqrt{\pi\lambda}}\,\exp\!\Bigl(-\frac{x^2}{\delta^2\lambda}\Bigr)\,\mathbb I(x\ge0),
\]
which are the PDFs of \(\alpha_t\) and \(\delta\,\alpha_t\), respectively.
Under large $p$, there are three scenarios for each $t$:
\begin{itemize}[leftmargin=*]
\item \textbf{Case 1}: With probability $\frac12$, $\alpha_{t}\sim F$ and $\Im(y_{t}^{\mathrm{wm}})\cdot(2\zeta_t-1)>0$.
\item \textbf{Case 2}: With probability $\frac{\gamma}{2}$, $\alpha_{t}\sim F_{\delta}$ and $\Im(y_{t}^{\mathrm{wm}})\cdot(2\zeta_t-1)>0$.
\item \textbf{Case 3}: With probability $\frac{1-\gamma}{2}$, $\alpha_{t}\sim F$ and $\Im(y_{t}^{\mathrm{wm}})\cdot(2\zeta_t-1)<0$.
\end{itemize}

\paragraph{Sign‐flip probability under additive noise.}
Let $z_{t}\sim\mathcal{N}(0,\frac{\sigma^2}{2})$ be the imaginary‐part noise as derived in Lemma~\ref{lem:noise}.  Conditioned on $\alpha_{t}=x$, the probability that noise flips the sign of $\Im(y_{t}^{\mathrm{wm}})$ in \textbf{Case 1} and \textbf{Case 3} is
\begin{align*}
\mathbb{P}_{\mathrm{flip}}(\sigma) &= \mathbb{P}\left(z_t > \alpha_t | \alpha_t \sim F \right) \\
& = \int_0^{\infty} \mathbb{P}\left(z_t > \alpha_t\right|\alpha_t =x)F(x) dx\\
& = \int_{0}^{\infty}\frac{2}{\sqrt{\pi \lambda}}e^{-\frac{x^2}{\lambda}}
Q\left(\frac{\sqrt2\,x}{\sigma}\right)dx \\
&\le
\frac{1}{\sqrt{\pi\,\lambda_{\min}}}
\int_{0}^{\sqrt{\frac{\lambda_{\min}}{2}}}
e^{-x^2\!\left(\frac{1}{\lambda_{\min}} + \frac{1}{\sigma^2}\right)}
\,dx\\
& \quad + \frac{1}{\sqrt{\pi\,\lambda_{\max}}}
\int_{\sqrt{\frac{\lambda_{\max}}{2}}}^{\infty}
e^{-x^2\!\left(\frac{1}{\lambda_{\max}} + \frac{1}{\sigma^2}\right)}
\,dx \\
& \quad + \frac{e^{-\frac{1}{2}}}{\sqrt{2\pi}} \int_{\sqrt{\frac{\lambda_{\min}}{2}}}^{\sqrt{\frac{\lambda_{\max}}{2}}} \frac{e^{-\frac{x^2}{\sigma^2}}}{x} dx\\
& = \frac{\sigma}{\sqrt{\sigma^{2}+\lambda_{\min}}}
\left[
\Phi\!\left(\sqrt{\,1+\frac{\lambda_{\min}}{\sigma^{2}}}\right)-\frac12
\right] \\
& \quad + \frac{\sigma}{\sqrt{\sigma^{2}+\lambda_{\max}}}
\left[
1-\Phi\!\left(\sqrt{\,1+\frac{\lambda_{\max}}{\sigma^{2}}}\right)
\right] \\
& \quad +
\frac{1}{\sqrt{8\pi e}}
\left[
E_{1}\!\left(\frac{\lambda_{\min}}{2\sigma^{2}}\right)
-
E_{1}\!\left(\frac{\lambda_{\max}}{2\sigma^{2}}\right)
\right],
\end{align*}
where the inequality follows from Lemma~\ref{lem:upbound} and the local monotonicity of \(F(x)\). Similarly, if the amplitude of $\alpha_t$ is scaled by $\delta$, we obtains
$\mathbb{P}_{\mathrm{flip}}^{(\delta)}(\sigma) = \mathbb{P}\left(z_t > \alpha_t | \alpha_t \sim F_{\delta} \right) \le\mathcal{I}(\frac{\sigma}{\delta})$,
where
\[
\resizebox{\linewidth}{!}{$%
\displaystyle
\mathcal{I}(s):=\frac{s}{\sqrt{s^{2}+\lambda_{\min}}}
\Bigl[\Phi\!\Bigl(\sqrt{1+\frac{\lambda_{\min}}{s^{2}}}\Bigr)-\frac12\Bigr]
+\frac{s}{\sqrt{s^{2}+\lambda_{\max}}}
\Bigl[1-\Phi\!\Bigl(\sqrt{1+\frac{\lambda_{\max}}{s^{2}}}\Bigr)\Bigr]
+\frac{1}{\sqrt{8\pi e}}
\Bigl[E_{1}\!\Bigl(\frac{\lambda_{\min}}{2s^{2}}\Bigr)
-E_{1}\!\Bigl(\frac{\lambda_{\max}}{2s^{2}}\Bigr)\Bigr].
$}
\]

\paragraph{Alignment probability after attack.}
Let $p_{i,j}$ be the probability that the $j$-th effective entry in row $i$ maintains alignment with its corresponding pseudorandom bit under attack. Combining the three cases above, we have
\begin{equation}
\begin{aligned}
p_{i,j}
&= \frac12\left(1-\mathbb{P}_{\mathrm{flip}}(\sigma)\right)
+\frac{\gamma}{2}\left(1-\mathbb{P}_{\mathrm{flip}}^{(\delta)}(\sigma)\right)
+\frac{1-\gamma}{2}\,\mathbb{P}_{\mathrm{flip}}(\sigma)\\
&= \frac{1+\gamma}{2}-\frac{\gamma}{2}\left[\mathbb{P}_{\mathrm{flip}}(\sigma)
+\mathbb{P}_{\mathrm{flip}}^{(\delta)}(\sigma)\right]\\
&\ge \frac{1+\gamma}{2}-\frac{\gamma}{2}\left[\mathcal{I}(\sigma)
+\mathcal{I}\!\left(\frac{\sigma}{\delta}\right)\right]
\end{aligned}
\label{eq:the2_derive}
\end{equation}

\paragraph{Lower bound on the expected Z‐score.}
Under this setting, we recall Lemma~\ref{lemma:z-score} for the standard Z‐score
$\displaystyle Z=\frac{\sum_{i,j}\mathbb{I}\{E_{i,j}\}-\frac{mN}{2}}{\sqrt{\frac{mN}{4}}}$,
then we obtain
\begin{equation*}
\mathbb{E}\left[Z(\gamma, \delta, \sigma)\right]
=\frac{mN\,p_{i,j}-\frac{mN}{2}}{\sqrt{\frac{mN}{4}}}
\ge\sqrt{mN}\gamma\left[1-\mathcal{I}(\sigma)-\mathcal{I}(\frac{\sigma}{\delta})\right],
\end{equation*}
which completes the proof.
\end{proof}

\subsection{Proof of Corollary~\ref{cor:sample_size}}
\begin{proof}
Denote the random variables after noise perturbation as below: $I_{i,j}:=\mathbb{I}\{\Im(y_{i,j})(2\zeta_{i,j}-1)>0\}$, $T_i:=\sum_{j=1}^m I_{i,j}$, and $S_N:=\sum_{i=1}^N T_i$.
We omit their explicit dependence on hyperparameters $(\gamma, \delta, \sigma)$ here for simplicity.
From Lemma~\ref{lemma:z-score}, the Z-score follows
$$
Z= \frac{S_N - \frac{mN}{2}}{\sqrt{\frac{mN}{4}}}.
$$
By Eq.(\ref{eq:the2_derive}), we have
$$
\mathbb{E}_{H_1}[S_N] = \sum_{i=1}^N\sum_{j=1}^m p_{i,j} \geq mN \left(\frac{1+\gamma}{2}-\frac{\gamma}{2}\left[\mathcal{I}(\sigma)+\mathcal{I}(\frac{\sigma}{\delta})\right]\right).
$$
Then for a one-sided level-$\alpha$ test with threshold $q_\alpha$, we have
$$
\{Z \le q_\alpha\}\ \subseteq \{S_N-\mathbb{E}_{H_1}[S_N]\ \le\ -t_N\},\quad
t_N:= \Big(\sqrt{mN} \gamma\,\left[ 1-\mathcal{I}(\sigma) - \mathcal{I} (\frac{\sigma}{\delta})\right]-q_\alpha\Big)\,\frac{\sqrt{mN}}{2},
$$
Since $T_i$ are independent and bounded in $[0,m]$ (i.i.d.\ rows and row-wise watermarking), we apply Hoeffding's inequality:
$$
\mathbb{P}_{H_1}\{Z\le q_\alpha\}\le \exp\!\left(-\frac{2t_N^2}{Nm^2}\right)
=\exp\!\left(-\frac{\left(\sqrt{mN} \gamma\,\left[ 1-\mathcal{I}(\sigma) - \mathcal{I} (\frac{\sigma}{\delta})\right]-q_\alpha\right)^2}{2m}\right).
$$
Imposing $\mathbb{P}_{H_1}\{Z\le q_\alpha\}\le\beta$ gives a lower bound:
$$
\sqrt{mN} \gamma\,\left[ 1-\mathcal{I}(\sigma) - \mathcal{I} \left(\frac{\sigma}{\delta}\right)\right]\ \ge\ q_\alpha+\sqrt{2m\ln(1/\beta)},
$$
which implies the nonasymptotic sample-size lower bound to achieve a test of power $1-\beta$ at level $\alpha$:
$$
N_{\alpha,\beta}(\gamma, \delta, \sigma) \geq \frac{\left[ q_\alpha + \sqrt{2m \ln{\left(\frac{1}{\beta}\right)}}\right]^2}{m \gamma^2 \left[ 1-\mathcal{I}(\sigma) - \mathcal{I} (\frac{\sigma}{\delta})\right]^2}.
$$
\end{proof}

\subsection{Proof of Theorem~\ref{thm:sub-gaussian}} \label{app:proof_theo3}
\begin{proof}[Proof of Theorem~\ref{thm:sub-gaussian}]
Let $\bm{x}=(x_0,\ldots,x_{p-1})$ be a standardized row and $\bm{y}=\texttt{DFT}(\bm x)$.
For the $t$-th effective frequency, we have
\[
\Im(y_t)\ =\ -\frac{1}{\sqrt{p}}\sum_{n=0}^{p-1} x_n\sin\!\Big(\frac{2\pi tn}{p}\Big)
\ =\ -\frac{1}{\sqrt{p}}\,\bm{s_t}^\top \bm{x},
\qquad
\|\bm{s_t}\|_2^2=\sum_{n=0}^{p-1}\sin^2\!\Big(\frac{2\pi tn}{p}\Big)=\frac{p}{2}.
\]
By the $\mathbf{\Sigma}$–sub-Gaussian setting and linearity, $\Im(y_t)$ is sub-Gaussian with
\[
v_t:=\mathrm{Var}[\Im(y_t)]=\frac{1}{p}\,\bm{s_t}^\top\mathbf{\Sigma} \bm{s_t}\ \in\ \Big[\frac{\lambda_{\min}}{2},\frac{\lambda_{\max}}{2}\Big],
\qquad
\|\Im(y_t)\|_{\psi_2} = \left\| -\frac{1}{\sqrt{p}}\,\bm{s_t}^\top \bm{x} \right\|_{\psi_2} \le\ \kappa\,\sqrt{v_t}.
\]

Let $\alpha_t:=|\Im(y_t^{\mathrm{wm}})|$ and
$z_t:=\Im(\widehat{\varepsilon}_t)\sim N(0,\sigma^2/2)$ be the imaginary-part noise
(Lemma~\ref{lem:noise}). Following the analysis in Theorem~\ref{thm:robustness}, for each effective
entry the alignment probability with its bit satisfies
\begin{equation} \label{eq:thm_4}
p_{i,j}\ =\ \frac{1+\gamma}{2}\ -\ \frac{\gamma}{2}\Big(\mathbb{P}_{\mathrm{flip}}(\sigma)+\mathbb{P}_{\mathrm{flip}}^{(\delta)}(\sigma)\Big),
\end{equation}
where $\mathbb{P}_{\mathrm{flip}}(\sigma)=\mathbb{E}\big[\,\mathbb{I}\{z_t>\alpha_t\}\,\big]
=\mathbb{E}\big[Q(\sqrt{2}\,\alpha_t/\sigma)\big]$, and $\mathbb{P}_{\mathrm{flip}}^{(\delta)}$ is the same
quantity with the amplitude scaled by $|\delta|$.
Here $Q(u)=1-\Phi(u)$.

By Lemma~\ref{lem:upbound}, $Q(u)\le \tfrac{1}{2}e^{-u^2/2}$, hence
$\mathbb{P}_{\mathrm{flip}}(\sigma)\le \tfrac12\,\mathbb{E}\exp(-\alpha_t^2/\sigma^2)$.
Write $U:=\Im(y_t)$ and $Y:=U^2$. For any $\theta\in(0,1)$,
Paley–Zygmund inequality gives
\[
\mathbb{P}\!\big(Y\ge \theta\,\mathbb{E}Y\big)
\ \ge\ \frac{(1-\theta)^2\,(\mathbb{E}Y)^2}{\mathbb{E}Y^2}
\ \ =\ \frac{(1-\theta)^2\,v_t^2}{\mathbb{E}U^4}
\ \ge\ \frac{(1-\theta)^2}{C_4\,\kappa^4}
\ \ =:\ \eta,
\]
where we used the sub-Gaussian fourth-moment bound $\mathbb{E}U^4\le C_4\,\kappa^4 v_t^2$.
Thus, by conditioning on the event $\{Y\ge \theta v_t\}$,
\[
\mathbb{E}\exp(-\alpha_t^2/\sigma^2)
\ \le\ (1-\eta)\cdot 1\ +\ \eta\cdot \exp\!\Big(-\frac{\theta v_t}{\sigma^2}\Big)
\ \le\ 1-\eta\Big(1-e^{-\theta\lambda_{\min}/(2\sigma^2)}\Big).
\]
Consequently,
\[
\mathbb{P}_{\mathrm{flip}}(\sigma)\ \le\ \frac12\Big[1-\eta\Big(1-e^{-\theta\lambda_{\min}/(2\sigma^2)}\Big)\Big],
\qquad
\mathbb{P}_{\mathrm{flip}}^{(\delta)}(\sigma)\ \le\ \frac12\Big[1-\eta\Big(1-e^{-\theta\lambda_{\min}\delta^2/(2\sigma^2)}\Big)\Big].
\]

Plugging these bounds into Eq.\eqref{eq:thm_4}, then we obtain
\[
p_{i,j}\ \ge\ \frac12\ +\ \frac{\gamma\eta}{4}\,
\bigg[2-e^{-\theta\lambda_{\min}/(2\sigma^2)}-e^{-\theta\lambda_{\min}\delta^2/(2\sigma^2)}\bigg].
\]
Using $\mathbb{E}Z=2\sqrt{mN}\,(p_{i,j}-\tfrac12)$, we arrive at
\[
\mathbb{E}[Z(\gamma,\delta,\sigma)]
\ \ge\ \sqrt{mN}\,\gamma\,\frac{\eta}{2}\,
\bigg[2-e^{-\theta\lambda_{\min}/(2\sigma^2)}-e^{-\theta\lambda_{\min}\delta^2/(2\sigma^2)}\bigg].
\]
Recalling that $\eta = (1-\theta)^2/(C_4 \kappa^4)$ gives the result for $\rho(\kappa,\theta) = (1-\theta)^2/(2C_4 \kappa^4)$. Since $\theta \in (0,1)$ is a fixed hyperparameter, it can be tuned to obtain the tightest possible lower bound
\[
\mathbb{E}\big[Z(\gamma,\delta,\sigma)\big]
\ \ge\ \sqrt{mN}\,\gamma\,\sup_{\theta \in (0,1)} \left\{ \rho(\kappa,\theta)\,
\bigg[
2-\exp\!\Big(-\frac{\theta\,\lambda_{\min}}{2\,\sigma^2}\Big)
-\exp\!\Big(-\frac{\theta\,\lambda_{\min}\,\delta^2}{2\,\sigma^2}\Big)
\bigg]\right\}.
\]
\end{proof}

\section{Experimental Details}

\subsection{Datasets} \label{app:datasets}

The datasets used for evaluation are described in Table~\ref{tab:dataset}, where \# Rows, \# Categorical, \# Numerical, \# Continuous indicate the number of rows, the number of categorical columns, the number of numerical columns, and the number of numerical columns with continuous density function, respectively. \# Train and \# Test indicate the number of samples in the training and testing set for downstream machine learning tasks, respectively. The \textbf{Adult}~\citep{1996_adult_2} dataset was extracted from the 1994 Census database, containing 9 categorical and 6 numerical columns. The \textbf{Magic}~\citep{2004_magic_gamma_telescope_159} dataset simulates registration of high energy gamma particles in a ground-based atmospheric Cherenkov gamma telescope and consists of one categorical column and 10 numerical columns. The \textbf{Shoppers}~\citep{2018_online_shoppers_purchasing_intention_dataset_468} dataset quantifies online shoppers' purchasing intentions with 10 categorical columns and 8 numerical columns. The \textbf{Default}~\citep{2009_default_of_credit_card_clients_350} dataset presents the default payments of credit card clients, including 10 categorical columns and 14 numerical columns. The \textbf{Drybean}~\citep{2020MulticlassCO} dataset provides image information of seven different registered dry beans, comprising one categorical column and 16 numerical columns.

Here, we explicitly distinguish between continuous from integer-valued (discrete) numerical features, rather than conflating ``numerical'' with ``continuous''. As shown in Table~\ref{tab:dataset}, the \textbf{Adult} and \textbf{Default} datasets contain only discrete feature (0 continuous features), while \textbf{Magic} and \textbf{Drybean} are dominated by continuous features. Therefore, we believe that the selected benchmark datasets provide a sufficiently balanced evaluation across both discrete and continuous data types, supporting the claim that \textsc{Tab-Drw} is applicable to heterogeneous tabular data.

\begin{table}[H]
\centering

\caption{Descriptions of datasets used in evaluation.}
\label{tab:dataset}
\resizebox{\textwidth}{!}{%
\begin{tabular}{c c c c c c c c c c}
\toprule
\textbf{Name} & \textbf{Domain} & \textbf{\# Rows} & \textbf{\# Categorical} & \textbf{\# Numerical} & \textbf{\# Continuous} & \textbf{\# Train}  &  \textbf{\# Test}  & \textbf{Task}  \\
\midrule
Adult    & Society & 48,842 & 9  & 6 & 0 & 32,561 & 16,281  &   Classification\\
Magic    & Physics & 19,019 & 1  & 10  & 10 & 17,117 &  1,902 &   Classification\\
Shoppers & Business & 12,330 & 8  & 10  &  3  & 11,097 & 1,233  &   Classification\\
Default   & Finance & 30,000 & 10 & 14  &  0  & 27,000 & 3,000  &   Classification\\
Drybean & Biology & 13,611    & 1  & 16   &  14  & 12,249 & 1,362  &   Classification\\
\bottomrule
\end{tabular}%
}
\end{table}

\subsection{Metrics} \label{app:metrics}

We detail our data fidelity metrics below:
\begin{enumerate}[leftmargin=*]
\item \textbf{Density} measures the distributional similarity between synthetic and real data. For each numerical column, we compute the Kolmogorov–Smirnov statistic (KST), for each categorical column, we compute the total variation distance (TVD). The per-column scores are then averaged to yield the overall Density score. Higher values indicate closer alignment of marginal distributions.

\item \textbf{Corr} evaluates preservation of inter-column relationships. We compute the Pearson correlation coefficient for every pair of columns and report their mean as the Corr score. Larger values indicate more faithful reproduction of real feature dependencies.

\item \textbf{C2ST} quantifies statistical indistinguishability between synthetic and real data. A logistic regression model is trained and evaluated on the training and validation sets, both of which contain a mix of real and synthetic data. We then report the complement of the ROC AUC averaged over all validation splits. Higher values indicate that the model cannot distinguish synthetic from real data.

\item \textbf{MLE}: assesses downstream machine learning utility on supervised tasks. We train an XGBoost model~\citep{2016XGB} on synthetic data, then evaluate it on the real testing set, using AUC for the classification task and RMSE for the regression task. Higher MLE scores reflect better machine learning utility of the synthetic data.
\end{enumerate}

Regarding the metric for watermark detectability, we introduce the one-sided \textbf{Z-score} defined in~(\ref{eq:Z-score}) and \textbf{FPR / TPR}. For \textsc{Tab-Drw}, GLW, MUSE, and TabWak$^*$, we first define a method-specific row-level statistic \(T_i\), estimate its null mean and standard deviation from unwatermarked tables, and then standardize the table-level average into the unified one-sided Z-score. Specifically, \(T_i\) is the sign-alignment count for \textsc{Tab-Drw}, the green-interval hit count for GLW, the Bernoulli row score induced by percentile-based keying for MUSE, and the valid-bit accuracy after DDIM inversion for TabWak$^*$. For TabularMark, we instead retain the detector of~\citet{2024tabularmark}, which is tailored to its cell-based watermarking on a single numerical attribute with 10\% key-cell selection. A larger Z-score indicates stronger alignment with the embedded signal and hence better watermark detectability. To calibrate the null distribution for \textsc{Tab-Drw}, we synthesize 100 unwatermarked tables with 1K rows (100K rows in total) and perform Monte Carlo simulation under \(H_0\). Specifically, for the estimation of an empirical critical value such as \(q_{0.001}\), we conduct the following procedure:
\begin{enumerate}[leftmargin=*]
\item Generate 100 unwatermarked tables with 1K rows (100K rows in total) using TabSyn.
\item Bootstrap sampling rows from 100K rows to construct 100K synthetic tables for watermark detection.
\item Set the 100-th order-statistic $Z_{(100)}$ as the threshold.
\end{enumerate}
Then we have $F_{H_0}(Z_{(100)}) \sim \text{Beta}(100,99901)$. By Clopper-Pearson interval, the estimation procedure above is sufficient to calibrate the critical value $q_{0.001}$, since the realized FPR has a $95\%$ confidence interval of roughly $0.001\pm 2\times10^{-4}$.

Table~\ref{tab:monte} reports the empirical mean and standard deviation of the \textsc{Tab-Drw} row-level statistic \(T_i\) under \(H_0\), together with the corresponding critical values \(q_\alpha\) for \(\alpha\in\{0.01,0.005,0.001\}\).

FPR / TPR denotes the true and false positive rates under the critical value $q_\alpha = 6$. The FPR refers to the probability of incorrectly identifying an unwatermarked table as containing watermark signal, while the TPR refers to the probability of correctly detecting the watermark signal in a watermarked table. Therefore, an FPR / TPR pair of $(0.00, 1.00)$ indicates an effective watermark---no false alarms and complete detection.

\begin{table}[H]
\centering
\caption{Results of Monte Carlo simulation on 100 unwatermarked synthetic tables of 1K rows. Each Entry show an empirical estimation (first value) and a theoretical value (second value) under standard assumption in Lemma~\ref{lemma:z-score}.}

\label{tab:monte}
\resizebox{0.8\textwidth}{!}{%
\begin{tabular}{c c c c c c c}
\toprule
\textbf{Dataset} & $\hat{\boldsymbol{\mu}}_{\mathbf{nwm}} / \boldsymbol{\mu_{\mathrm{nwm}}}$ & $\hat{\boldsymbol{\sigma}}_{\mathbf{nwm}} / \boldsymbol{\sigma_{\mathrm{nwm}}}$ & $\mathbf{\hat{q}_{0.01}} / \mathbf{q_{0.01}}$ & $\mathbf{\hat{q}_{0.005}} / \mathbf{q_{0.005}}$ & $\mathbf{\hat{q}_{0.001}} / \mathbf{q_{0.001}}$  \\
\midrule
Adult    & 0.86/1.00 & 0.62/0.71 & 2.34/2.32  & 2.59/2.57 & 3.11/3.09 \\
Magic    & 2.01/2.00 & 0.87/1.00 & 2.28/2.32  & 2.52/2.57  & 3.03/3.09 \\
Shoppers & 2.03/2.00 & 0.97/1.00 & 2.51/2.32  & 2.78/2.57  &  3.34/3.09  \\
Default   & 1.42/1.50 & 0.78/0.87 & 2.39/2.32 & 2.64/2.57  &  3.16/3.09  \\
Drybean & 1.77/2.00 & 0.89/1.00  & 2.52/2.32  & 2.78/2.57   &  3.35/3.09  \\
\bottomrule
\end{tabular}%
}
\end{table}

For completeness, the TabularMark detector reported in our experiments is given by
\begin{equation*}
Z \;=\;\frac{2\,(n_g - 0.5\,n_w)}{\sqrt{n_w}},
\end{equation*}
where \(n_w\) is the total number of key cells and \(n_g\) is the number of those cells that fall in the ``green'' domain.

\subsection{Implementation Details for Data Generator} \label{app:models}

\paragraph{TabSyn.} TabSyn~\citep{2024Mixed} is an architecture designed for high-quality tabular data synthesis. It addresses the challenges of mixed-type features by mapping raw tabular inputs---including numerical, categorical, and other modalities---into a continuous latent space using a customized variational autoencoder (VAE~\citep{2014auto}). The VAE employs Transformer-based encoders and decoders to effectively model inter-column dependencies and generate token-level embeddings. In the embedding space, TabSyn leverages a score-based diffusion model with a simplified linear noise schedule, which enables efficient sampling and preserving fidelity to the original data distribution. The combination of autoencoding and latent-space diffusion allows TabSyn to generate diverse and realistic synthetic tabular data with high efficiency and quality.

\paragraph{Implementation in our work.} While our experiments are conducted using the TabSyn framework as the generative backbone, we adhere to the implementation in TabWak~\citep{2025tabwak}, which is our primary baseline for comparison, to ensure fair and consistent comparison. Specifically, we replace the original score-based diffusion process in TabSyn with DDPM for training and DDIM for sampling. Therefore, our reproduced baseline results ("W/O") reported in Table~\ref{tab:results1} are closely aligned with those in Table 1 of TabWak.

The discrepancies between our "W/O" results and those reported in the original TabSyn paper~\citep{2024Mixed} stem from the two modifications mentioned above. As reported in the TabSyn paper itself, the original score-based TabSyn model outperformed the TabSyn-DDPM variant in generation quality owing to its tailored diffusion process in continuous latent space. Additionally, the deterministic nature of the DDIM sampler may reduce data diversity compared to the original score-based SDE sampler. Nevertheless, the DDIM sampler is essential for TabWak, as its watermark detection relies on the inversion process.

In Table~\ref{tab:compare_origin}, we also present the evaluation results of our methods on synthetic tabular data generated by original TabSyn implementation. See Appendix~\ref{app:ablation} for more empirical results and analysis.

\subsection{Implementation Details for Watermarking} \label{app:baselines}

\paragraph{Our method.} We sample half of the column indices using a secret key to form the index set $\mathcal{I}$ in Algorithm~\ref{alg:2}. The selection of hyperparameter $\lambda$ in YJT is implemented by the Python module \texttt{scipy.stats.yeojohnson}~\citep{2025scipy}. For the Adult, Magic and Drybean datasets, we apply Algorithm~\ref{alg:1} to all numerical columns. For the Shoppers dataset, we apply watermarking to first 9 numerical columns. For the Default dataset, we select columns $\{0,4,17,18,19,20,21,22\}$ for watermarking. {We provide explanations on this implementation details in Remark~\ref{remark:col_sel}.}

\paragraph{MUSE.} Following the experimental setting in~\citet{2025muse}, we adopt Bernoulli as the scoring function and an adaptive column selection mechanism with three columns. We also adhere to the original configuration of $m=2$, meaning that between two candidate rows, the one with the higher score is selected as the watermarked sample. Since TabSyn generates tabular data as a whole rather than row by row, we generate twice the target number of rows and then perform selection, consistent with the workflow illustrated in Algorithm~1 of~\citet{2025muse}.

\paragraph{TabWak$^*$.} We rigorously reproduce TabWak$^*$ with valid bit mechanism by following all instructions provided in the official repository~\url{https://github.com/chaoyitud/TabWak}. However, we observed a significant discrepancy between our reproduction results and those reported in the original TabWak paper~\citep{2025tabwak}. Below, we clarify the source of this discrepancy.

In TabWak's detection pipeline, the suspect tabular data $\mathbf{X}$ is first mapped into a continuous latent space via the inversion of the VAE
decoder to obtain an initial latent representation $\mathbf{z}_0$. This is then passed through DDIM inversion to recover the watermarked
representation $\mathbf{z}_T$. In practice, TabWak codebase obtains $\mathbf{z}_0$ from $\mathbf{X}$ by solving a gradient-based optimization
problem to approximate the inversion of the VAE decoder, which is formulated as:
$$
\mathbf{z} = \arg \min_{\mathbf{z}} \| \mathbf{x} - f_\theta(\mathbf{z}) \|_2,
$$
where $f_\theta$ denotes the trained VAE decoder.

However, we found that the optimization procedure often fails to converge to the true latent code $\mathbf{z}_0$, which leads to significantly
reduced detectability and robustness of the watermark signal (as reported in our paper). We also note that in the official TabWak codebase, the
ground-truth $\mathbf{z}_0$ is saved during generation. Using this saved $\mathbf{z}_0$ bypasses the inversion step and yields strong
detectability, which is comparable to that reported in the original TabWak paper. But this approach is impractical in real-world detection where
the ground-truth $\mathbf{z}_0$ is unavailable.

\paragraph{GLW.} We set the number of ``green list'' intervals to $m=5$. Since GLW was originally designed for tabular data with continuous density functions, we introduce a minor modification to extend it to mixed-type tabular data. Specifically, for entries with non-zero decimal components, we directly apply the standard method proposed in~\citep{2024watermarkinggeneraive}. For integer entries with non-zero units digits, we shift the decimal point one place to the left, apply the method to the transformed values, and then shift them back. To prevent significant distortion, we exclude entries with absolute values less than 1 from watermarking. GLW is applied to all numerical columns across all datasets.

\paragraph{TabularMark.} Following the original experimental setup, we select the first numerical column as the watermark attribute, from which 10\% of the cells are pseudorandomly chosen as key cells. The perturbation range \(p\) is set to 25, and the number of unit domains \(k\) to 500. To implement the matching algorithm in~\citep{2024tabularmark}, we extract the first five binary bits from each of five randomly selected attributes and concatenate them to form a 25-bit binary string, which serves as the primary key for each tuple.

\subsection{Implementation Details for Tabular Attacks} \label{app:attaks}
We implement the ten post-processing attacks for Table~\ref{tab:results2} as below:
\begin{enumerate}[leftmargin=*]
\item \textbf{Row Del.} removes 10\% of rows in a table.
\item \textbf{Col Del.} replaces 2 columns with unwatermarked values sampled from the same model.
\item \textbf{Cell Del.} replaces 10\% cells with unwatermarked values sampled from the same model.
\item \textbf{G(aussian)-Noise.} adds Gaussian noise with zero mean and a standard deviation equal to 10\% of each cell's value for numerical attributes.
\item \textbf{C(ategorical)-Noise.} perturbs categorical entries by randomly replacing 10\% of cells with values sampled from other rows in the same column.
\item \textbf{A(daptive)-Noise.} adds Gaussian noise with zero mean and 0.1 standard deviation to standardized attributes. Specifically, we conduct the process below for each column $j \in \{0,\dots,p-1\}$:
$$
z_{ij} = \frac{x_{ij} - \mu_j}{\sigma_j},\quad z'_{ij} = z_{ij} + \epsilon \cdot \mathcal{N}(0,1),\quad x'_{ij} = z'_{ij} \cdot \sigma_j + \mu_j,
$$
where $\epsilon = 0.1$ is the attack strength, and $\mu_j$ and $\sigma_j$ are the empirical mean and standard deviation of column $j$ in the original data. Conducting round and clip if $x'_{ij}$ is not in the valid range of column $j$ in the original data.
\item \textbf{Truncation.} truncates numerical values at the first significant digit.
\item \textbf{Quantization.} discretizes numerical columns using quantile transformation with the 10 quantile bins and maps those discrete quantile levels back to the original data domain with the inverse transform.
\item \textbf{Resample.} redistributes samples to achieve equal representation across target classes by super-sampling underrepresented classes and sub-sampling overrepresented ones.
\item \textbf{Shuffle.} randomly permutes all rows of the table.
\end{enumerate}

While additive Gaussian noise (G-noise) attack adopted in our paper may destroy semantic meanings of columns with large absolute values and very small variance, we adhere to it for two main reasons. First, this ensures a fair comparison of robustness with TabWak$^*$ by replicating the evaluation setup used in its original paper~\citep{2025tabwak}. Second, if our watermark signal remains highly detectable under such strong attacks that may significantly distort data utility, it is reasonable to expect stronger performance under milder perturbations. In fact, adaptive Gaussian noise (A-Noise) attack is implemented as a milder variant of Gaussian noise attack.

Additionally, although all the watermark methods including ours show great robustness to the shuffling attack, we believe that the shuffle attack should not be omitted. In practice, the existence of this cost-free row-level attack has important implications for our design. Without shuffle attack, we could use the record indices as hash seeds to deterministically sample
pseudorandom bit sequences for each row, enabling accurate recovery during detection and avoiding key collision. However, the bit sequence recovery becomes vulnerable when row-ordering is no longer preserved under shuffle attack.

In Appendix~\ref{app:add_robust}, we present extended robustness evaluations against above attacks with higher strength.

\section{Additional Results and Analysis} \label{app:add_results}

\subsection{Ablation Study} \label{app:ablation}

\paragraph{Model-agnostic property.}

Table~\ref{tab:compare_origin} presents the evaluation results of \textsc{Tab-Drw} on TabSyn implemented using the official codebase. The empirical results show the effectiveness of our method on high-fidelity synthetic data, further justifying our claim that \textsc{Tab-Drw} is practical and the fidelity-detectability trade-off only relies on the hyperparameters.

While our experiments are conducted using TabSyn framework as the generative backbone, we expect \textsc{Tab-Drw} to exhibit similar performance when applied to other synthetic tabular data generators, since \textsc{Tab-Drw} is a post-editing watermarking method that operates independently of the generative model's architecture or sampling procedure.

To justify this claim, we perform evaluations using two additional popular tabular data generators: TabDDPM~\citep{2023Tabddpm} and STaSy~\citep{2023stasy}. The results are presented in Table~\ref{tab:other_models}. Overall, our method achieves great fidelity-detectability trade-off across all three models (including TabSyn), demonstrating its effectiveness and model-agnostic property.

\begin{table}[!t]
\centering
\caption{Data fidelity and watermark detectability evaluated on tables generated by original TabSyn implementation. No watermarking is denoted as ``W/O''. Our proposed \textsc{Tab-Drw} uses $(\gamma,\delta) = (0.5, 0.5)$. Fidelity metrics are averaged over 10 trials while Z-score is averaged over 100 trials.}
\label{tab:compare_origin}
\resizebox{\textwidth}{!}{%
\begin{tabular}{@{}cccccccc@{}}
\toprule
\multirow{2}{*}{\textbf{Datasets}} & \multirow{2}{*}{\textbf{Method}}
& \multicolumn{4}{c}{\textbf{Fidelity Metric}}
& \multicolumn{2}{c}{\textbf{Z-score}}\\
\cmidrule(lr){3-6} \cmidrule(lr){7-8}
& & \textbf{Density $\uparrow$} & \textbf{Corr $\uparrow$} & \textbf{C2ST $\uparrow$} & \textbf{MLE $\uparrow$}
& \textbf{1K rows $\uparrow$} & \textbf{5K rows $\uparrow$}\\
\midrule
\multirow{2}{*}{Adult}
& W/O            & 0.993±0.001 & 0.982±0.003 & 0.994±0.001 & 0.912±0.002 & – & – \\
& \textbf{\textsc{TAB-DRW}}  & 0.981±0.004 & 0.967±0.003 & 0.988±0.006 & 0.910±0.003 &  12.57±1.16 &  28.07±0.99\\
\midrule
\multirow{2}{*}{Magic}
& W/O            & 0.990±0.001 & 0.991±0.003 & 0.993±0.002 & 0.936±0.002 & – & –\\
& \textbf{\textsc{TAB-DRW}}  & 0.983±0.003 & 0.978±0.003 & 0.991±0.004 & 0.933±0.002 & 27.11±0.77 & 61.02±0.82\\
\midrule
\multirow{2}{*}{Shoppers}
& W/O            & 0.985±0.003 & 0.973±0.002 & 0.964±0.003 & 0.920±0.005 & – & –\\
& \textbf{\textsc{TAB-DRW}}  & 0.976±0.005 & 0.955±0.003 & 0.953±0.007 & 0.919±0.006 & 17.30±1.02 & 39.31±1.06\\
\midrule
\multirow{2}{*}{Default}
& W/O            & 0.987±0.001 & 0.952±0.001 & 0.975±0.002 & 0.764±0.004 & – & – \\
& \textbf{\textsc{TAB-DRW}}  & 0.982±0.002 & 0.948±0.004 & 0.971±0.009 & 0.764±0.005 & 16.87±0.91 & 37.73±0.88\\
\midrule
\multirow{2}{*}{Drybean}
& W/O            & 0.987±0.002 & 0.992±0.003 & 0.978±0.003 & 0.911±0.006 & – & –\\
& \textbf{\textsc{TAB-DRW}}  & 0.984±0.004 & 0.977±0.007 & 0.972±0.006 & 0.908±0.009 & 41.23±0.98 & 90.33±1.06\\
\bottomrule
\end{tabular}%
}
\end{table}

\begin{table}[!t]
\centering
\caption{Data fidelity and watermark detectability evaluated on tables generated by TabDDPM and STaSy. For fiedelity metrics, the first value in each entry denotes the result without watermark, while the second value denotes the result of \textsc{Tab-Drw} with $(\gamma, \delta) = (0.5,0.5)$.}
\label{tab:other_models}
\resizebox{\textwidth}{!}{%
\begin{tabular}{@{}cccccccc@{}}
\toprule
\multirow{2}{*}{\textbf{Datasets}} & \multirow{2}{*}{\textbf{Model}}
& \multicolumn{4}{c}{\textbf{Fidelity Metric}}
& \multicolumn{2}{c}{\textbf{Z-score}}\\
\cmidrule(lr){3-6} \cmidrule(lr){7-8}
& & \textbf{Density $\uparrow$} & \textbf{Corr $\uparrow$} & \textbf{C2ST $\uparrow$} & \textbf{MLE $\uparrow$}
& \textbf{1K rows $\uparrow$} & \textbf{5K rows $\uparrow$}\\
\midrule
\multirow{2}{*}{Adult}
& TabDDPM            & 0.982/0.967 & 0.969/0.958 & 0.975/0.973 & 0.903/0.894 & 12.44±0.96 & 29.07±1.05 \\
& STaSy            & 0.887/0.883 & 0.864/0.858 & 0.408/0.423 & 0.901/0.893 &  13.07±1.22 &  29.95±1.19\\
\midrule
\multirow{2}{*}{Magic}
& TabDDPM            & 0.989/0.971 & 0.983/0.975 & 0.999/0.996 & 0.935/0.923 & 28.74±0.98 & 62.53±1.14\\
& STaSy            & 0.937/0.927 & 0.933/0.919 & 0.694/0.688 & 0.933/0.926 & 27.95±0.89 & 61.48±0.97\\
\midrule
\multirow{2}{*}{Shoppers}
& TabDDPM            & 0.972/0.959 & 0.933/0.921 & 0.834/0.832 & 0.918/0.911 & 17.94±1.24 & 39.27±1.22\\
& STaSy            & 0.906/0.898 & 0.915/0.907 & 0.548/0.553 & 0.914/0.908 & 17.49±1.06 & 38.52±1.17\\
\midrule
\multirow{2}{*}{Default}
& TabDDPM            & 0.985/0.982 & 0.951/0.948 & 0.971/0.967 & 0.756/0.755 & 15.27±0.94 & 35.29±0.92 \\
& STaSy            & 0.942/0.940 & 0.940/0.939 & 0.681/0.677 & 0.752/0.749 & 16.65±1.00 & 37.02±1.14\\
\midrule
\multirow{2}{*}{Drybean}
& TabDDPM            & 0.987/0.984 & 0.971/0.960 & 0.967/0.954 & 0.892/0.894 & 40.22±1.16 & 88.59±0.94\\
& STaSy            & 0.949/0.947 & 0.919/0.912 & 0.582/0.596 & 0.890/0.891 & 39.47±1.05 & 86.98±0.91\\
\bottomrule
\end{tabular}%
}
\end{table}

\paragraph{YJT selection.} In \textsc{Tab-Drw}, the Yeo-Johnson transformation (YJT) serves as a pre-conditioner for constructing the frequency-domain representation. By mapping each feature toward a more Gaussian-like distribution, YJT helps to standardize heterogeneous feature scales and reduce skewness, which is essential for enabling a stable, low-distortion watermarking process in the frequency domain.

We further emphasize that YJT also improves the robustness of rank-based statistics used in our pseudorandom bit generation procedure. Specifically, transforming the feature distributions makes ranks more evenly spread and less sensitive to local density variations, which in turn improves bit consistency under post-processing perturbations.

To support this claim empirically, we include an ablation study comparing   \textsc{Tab-Drw} with and without YJT. As shown in the Table~\ref{tab:YJT}, the YJT consistently yields a better trade-off between fidelity and watermark detectability, confirming its importance in our design.

\begin{table}[!t]
\centering
\caption{Ablation study on YJT. Both methods are applied to \textsc{Tab-Drw} with $(\gamma,\delta) = (0.5, 0.5)$. Fidelity metrics are averaged over 10 trials while Z-score is averaged over 100 trials.}
\label{tab:YJT}
\resizebox{\textwidth}{!}{%
\begin{tabular}{@{}cccccccc@{}}
\toprule
\multirow{2}{*}{\textbf{Datasets}} & \multirow{2}{*}{\textbf{Method}}
& \multicolumn{4}{c}{\textbf{Fidelity Metric}}
& \multicolumn{2}{c}{\textbf{Z-score}}\\
\cmidrule(lr){3-6} \cmidrule(lr){7-8}
& & \textbf{Density $\uparrow$} & \textbf{Corr $\uparrow$} & \textbf{C2ST $\uparrow$} & \textbf{MLE $\uparrow$}
& \textbf{1K rows $\uparrow$} & \textbf{5K rows $\uparrow$}\\
\midrule
\multirow{2}{*}{Adult}
& W/O YJT & 0.906±0.004 & 0.862±0.003 & 0.601±0.006 & 0.812±0.007 & \textbf{13.74±0.87} & \textbf{31.62±0.95} \\
& W/ YJT  & \textbf{0.915±0.005} & \textbf{0.864±0.004} & \textbf{0.604±0.008} & \textbf{0.816±0.009} & 12.81±1.17 &  29.55±1.12\\
\midrule
\multirow{2}{*}{Magic}
& W/O YJT & 0.907±0.004 & \textbf{0.936±0.003} & 0.666±0.002 & 0.817±0.011 & 27.17±0.95 & 60.91±0.93\\
& W/ YJT  & \textbf{0.910±0.005} & 0.935±0.003 & \textbf{0.676±0.009} & \textbf{0.818±0.014} & \textbf{27.34±0.93} & \textbf{61.42±1.02}\\
\midrule
\multirow{2}{*}{Shoppers}
& W/O YJT & 0.896±0.005 & 0.900±0.002 & 0.704±0.009 & 0.888±0.012 & 12.79±0.96 & 28.50±0.98\\
& W/ YJT  & \textbf{0.909±0.006} & \textbf{0.902±0.003} & \textbf{0.712±0.013} & \textbf{0.891±0.014} & \textbf{18.18±1.28} & \textbf{40.74±1.26}\\
\midrule
\multirow{2}{*}{Default}
& W/O YJT & 0.921±0.007 & 0.906±0.008 & 0.689±0.014 & 0.789±0.011 & 10.05±0.98 &  22.73±1.08\\
& W/ YJT  & \textbf{0.929±0.010} & \textbf{0.907±0.011} & \textbf{0.713±0.018} & \textbf{0.791±0.013} & \textbf{15.98±0.92} & \textbf{35.84±0.91}\\
\midrule
\multirow{2}{*}{Drybean}
& W/O YJT & 0.923±0.009 & 0.911±0.004 & 0.527±0.016 & 0.877±0.014 & 28.12±0.87 & 62.80±0.79\\
& W/ YJT  & \textbf{0.931±0.013} & \textbf{0.928±0.007} & \textbf{0.655±0.029} & \textbf{0.880±0.019} & \textbf{38.03±1.03} & \textbf{85.05±0.67}\\
\bottomrule
\end{tabular}%
}
\end{table}

{
\paragraph{Gray code selection.}
We provide additional experimental results using a 1-Gray code and compare it to our adopted variant of 2-Gray code. Specifically, we modify lines 7 and 9 in Algorithm~\ref{alg:2} as follows: We change line 7 to ``\textbf{for} $j \leftarrow 1$ \textbf{to} $m$ \textbf{do}'' and line 9 to ``Append $1$ to $\mathbf{S}$ if $k \% 4 = 0$ or $3$, otherwise append $0$''. We set the attack strengths for robustness evaluation the same as the strengthened version adopted in Appendix~\ref{app:add_robust}.

From the results, we observe that using a 1-Gray code for bit generation yields a slight improvement in data fidelity, as it more closely mimics an ideal bit sampled from a Bernoulli distribution. However, its detectability and robustness decrease on several datasets, especially those dominated by continuous variables such as \textbf{Magic} and \textbf{Drybean}. We attribute this to the finer-grained rank-bin partition induced by the 1-Gray code and to the greater instability of the sum-based score for continuous features under perturbations, which makes cross-bin rank shifts more likely to happen.
}

\begin{table}[!t]
\centering
\caption{{ Data fidelity and watermark detectability evaluated on \textsc{Tab-Drw} using different Gray codes for bit generation. Fidelity metrics are averaged over 10 trials while Z-scores are averaged over 100 trials.}}
\label{tab:gray-code}
\resizebox{\textwidth}{!}{%
\begin{tabular}{@{}cccccccc@{}}
\toprule
\multirow{2}{*}{\textbf{Datasets}} & \multirow{2}{*}{\textbf{Bit Gen.}}
& \multicolumn{4}{c}{\textbf{Fidelity Metric}}
& \multicolumn{2}{c}{\textbf{Z-score}}\\
\cmidrule(lr){3-6} \cmidrule(lr){7-8}
& & \textbf{Density $\uparrow$} & \textbf{Corr $\uparrow$} & \textbf{C2ST $\uparrow$} & \textbf{MLE $\uparrow$}
& \textbf{1K rows $\uparrow$} & \textbf{5K rows $\uparrow$}\\
\midrule
\multirow{2}{*}{Adult}
& 1-Gray code & \textbf{0.916±0.005} & 0.863±0.005 & 0.600±0.009 & \textbf{0.816±0.008} & 11.06±0.99 & 24.95±0.86 \\
& 2-Gray code & 0.915±0.005 & \textbf{0.864±0.004} & \textbf{0.604±0.008} & 0.816±0.009 & \textbf{12.81±1.17} & \textbf{29.55±1.12} \\
\midrule
\multirow{2}{*}{Magic}
& 1-Gray code & 0.917±0.006 & 0.936±0.003 & \textbf{0.676±0.008} & 0.818±0.014 & 21.74±0.96 & 48.78±0.97 \\
& 2-Gray code & \textbf{0.917±0.005} & \textbf{0.937±0.003} & 0.676±0.009 & \textbf{0.818±0.014} & \textbf{27.34±0.93} & \textbf{61.42±1.02} \\
\midrule
\multirow{2}{*}{Shoppers}
& 1-Gray code & 0.909±0.006 & \textbf{0.909±0.004} & \textbf{0.715±0.011} & \textbf{0.892±0.014} & 17.30±1.14 & 38.48±1.14 \\
& 2-Gray code & \textbf{0.909±0.006} & 0.902±0.003 & 0.712±0.013 & 0.891±0.014 & \textbf{18.18±1.28} & \textbf{40.74±1.26} \\
\midrule
\multirow{2}{*}{Default}
& 1-Gray code & 0.927±0.010 & \textbf{0.918±0.009} & 0.715±0.013 & \textbf{0.793±0.013} & 15.07±1.02 & 33.72±0.97 \\
& 2-Gray code & \textbf{0.929±0.010} & 0.907±0.011 & \textbf{0.717±0.018} & 0.791±0.013 & \textbf{15.98±0.92} & \textbf{35.84±0.91} \\
\midrule
\multirow{2}{*}{Drybean}
& 1-Gray code & \textbf{0.933±0.010} & \textbf{0.932±0.007} & \textbf{0.659±0.022} & \textbf{0.880±0.019} & 25.40±0.92 & 57.74±1.03 \\
& 2-Gray code & 0.931±0.013 & 0.928±0.007 & 0.655±0.029 & 0.880±0.019 & \textbf{38.03±1.03} & \textbf{85.05±0.67} \\
\bottomrule
\end{tabular}%
}
\end{table}

\begin{table}[!t]
\centering
\caption{Robustness evaluation of \textsc{Tab-Drw} using different Gray codes for bit generation under the strengthened attack setting. Z-scores are averaged over 100 trials on tables with 5K rows.}
\resizebox{\textwidth}{!}{%
\begin{tabular}{@{}c c c c c c c c c c c c@{}}
\toprule
\multicolumn{1}{c}{\multirow{2}{*}{\textbf{Datasets}}}
& \multicolumn{1}{c}{\multirow{2}{*}{\textbf{Bit Gen.}}}
& \multicolumn{10}{c}{\textbf{Attacks}} \\
\cmidrule(lr){3-12}
& & \textbf{Row Del.}
& \textbf{Col Del.}
& \textbf{Cell Del.}
& \textbf{G-Noise}
& \textbf{C-Noise}
& \textbf{A-Noise}
& \textbf{Truncation}
& \textbf{Quantization}
& \textbf{Resample}
& \textbf{Shuffle}
\\
\midrule
\multirow{2}{*}{Adult}
& 1-Gray code
& 22.30 & 10.10 & 11.19
& 12.32 & 18.11 & 21.24
& 24.95 & 16.04 & 21.49
& 24.95 \\
& 2-Gray code
& \textbf{26.34} & \textbf{13.12} & \textbf{14.37}
& \textbf{14.29} & \textbf{20.10} & \textbf{21.85}
& \textbf{29.55} & \textbf{16.41} & \textbf{28.15}
& \textbf{29.55} \\
\midrule
\multirow{2}{*}{Magic}
& 1-Gray code
& 43.47 & 8.46 & 16.27
& 22.81 & 44.14 & 21.42
& 38.83 & 19.32 & 24.58
& 48.78 \\
& 2-Gray code
& \textbf{54.85} & \textbf{11.75} & \textbf{21.60}
& \textbf{33.93} & \textbf{48.38} & \textbf{29.06}
& \textbf{52.62} & \textbf{26.43} & \textbf{37.61}
& \textbf{61.42} \\
\midrule
\multirow{2}{*}{Shoppers}
& 1-Gray code
& 34.36 & 13.40 & \textbf{14.00}
& 35.54 & 25.05 & \textbf{13.12}
& \textbf{31.00} & 23.36 & 14.34
& 38.48 \\
& 2-Gray code
& \textbf{36.21} & \textbf{13.75} & 13.27
& \textbf{37.71} & \textbf{32.60} & 13.10
& 30.28 & \textbf{25.72} & \textbf{29.28}
& \textbf{40.74} \\
\midrule
\multirow{2}{*}{Default}
& 1-Gray code
& 29.84 & 19.45 & 15.05
& 21.66 & 26.14 & 12.52
& 33.72 & 11.06 & 30.08
& 33.72 \\
& 2-Gray code
& \textbf{31.92} & \textbf{20.70} & \textbf{15.44}
& \textbf{23.75} & \textbf{27.20} & \textbf{16.99}
& \textbf{35.84} & \textbf{14.10} & \textbf{32.36}
& \textbf{35.84} \\
\midrule
\multirow{2}{*}{Drybean}
& 1-Gray code
& 51.38 & 15.48 & 16.87
& 12.73 & 48.93 & 15.50
& 22.94 & 15.31 & 39.44
& 57.74 \\
& 2-Gray code
& \textbf{75.91} & \textbf{32.92} & \textbf{35.27}
& \textbf{23.57} & \textbf{77.70} & \textbf{42.48}
& \textbf{42.14} & \textbf{45.95} & \textbf{68.69}
& \textbf{85.05} \\
\bottomrule
\end{tabular}%
}
\label{tab:graycode_robust}
\end{table}

{
\paragraph{Column selection for watermarking.} In the main paper, our empirical evaluation focuses on numerical columns (including mixed continuous and discrete types) to enable a fair comparison with the other post-editing watermarking methods, GLW and TabularMark, which both suffer substantial fidelity degradation when applied to all columns. Here we also report results obtained by applying \textsc{Tab-Drw} to all columns, showing how different selection strategies influence the tradeoff between fidelity and detectability. In general, using more columns for watermarking improves robustness (Theorem~\ref{thm:robustness} shows that the lower bound of the Z-score scales with the number of selected columns), while incurring slightly higher distortion.

The results in Table~\ref{tab:col_selection} show that watermarking more columns improves detectability while reducing fidelity, consistent with our theoretical analysis.
}

\begin{table}[!t]
\centering
\caption{Data fidelity and watermark detectability evaluated on \textsc{Tab-Drw} with different columns selected for watermarking. Fidelity metrics are averaged over 10 trials, and Z-scores are averaged over 100 trials.}
\label{tab:col_selection}
\resizebox{\textwidth}{!}{%
\begin{tabular}{@{}c c c c c c c c@{}}
\toprule
\multirow{2}{*}{\textbf{Datasets}} & \multirow{2}{*}{\textbf{Col. Selection}}
& \multicolumn{4}{c}{\textbf{Fidelity Metric}}
& \multicolumn{2}{c}{\textbf{Z-score}} \\
\cmidrule(lr){3-6} \cmidrule(lr){7-8}
& & \textbf{Density $\uparrow$} & \textbf{Corr $\uparrow$} & \textbf{C2ST $\uparrow$} & \textbf{MLE $\uparrow$}
& \textbf{1K rows $\uparrow$} & \textbf{5K rows $\uparrow$} \\
\midrule
\multirow{2}{*}{Adult}
& All Col.
& 0.909 ± 0.005 & 0.859 ± 0.005 & 0.597 ± 0.009 & 0.808 ± 0.008
& \textbf{14.56 ± 0.99} & \textbf{32.89 ± 1.06} \\
& Original
& \textbf{0.915 ± 0.005} & \textbf{0.864 ± 0.004} & \textbf{0.604 ± 0.008} & \textbf{0.816 ± 0.009}
& 12.81 ± 1.17 & 29.55 ± 1.12 \\
\midrule
\multirow{2}{*}{Magic}
& All Col.
& 0.914 ± 0.006 & 0.936 ± 0.003 & 0.674 ± 0.008 & 0.818 ± 0.014
& 24.66 ± 1.08 & 55.47 ± 1.09 \\
& Original
& \textbf{0.917 ± 0.005} & \textbf{0.937 ± 0.003} & \textbf{0.676 ± 0.009} & \textbf{0.818 ± 0.014}
& \textbf{27.34 ± 0.93} & \textbf{61.42 ± 1.02} \\
\midrule
\multirow{2}{*}{Shoppers}
& All Col.
& 0.901 ± 0.005 & 0.897 ± 0.003 & 0.704 ± 0.009 & 0.887 ± 0.012
& \textbf{19.59 ± 1.08} & \textbf{43.84 ± 1.14} \\
& Original
& \textbf{0.909 ± 0.006} & \textbf{0.902 ± 0.003} & \textbf{0.712 ± 0.013} & \textbf{0.891 ± 0.014}
& 18.18 ± 1.28 & 40.74 ± 1.26 \\
\midrule
\multirow{2}{*}{Default}
& All Col.
& 0.919 ± 0.009 & 0.902 ± 0.013 & 0.705 ± 0.019 & 0.787 ± 0.011
& \textbf{22.21 ± 1.03} & \textbf{49.96 ± 0.99} \\
& Original
& \textbf{0.929 ± 0.010} & \textbf{0.907 ± 0.011} & \textbf{0.717 ± 0.018} & \textbf{0.791 ± 0.013}
& 15.98 ± 0.92 & 35.84 ± 0.91 \\
\midrule
\multirow{2}{*}{Drybean}
& All Col.
& 0.929 ± 0.010 & 0.924 ± 0.007 & 0.649 ± 0.022 & 0.878 ± 0.019
& \textbf{38.35 ± 0.89} & \textbf{85.47 ± 0.73} \\
& Original
& \textbf{0.931 ± 0.013} & \textbf{0.928 ± 0.007} & \textbf{0.655 ± 0.029} & \textbf{0.880 ± 0.019}
& 38.03 ± 1.03 & 85.05 ± 0.67 \\
\bottomrule
\end{tabular}%
}
\end{table}

\paragraph{Impact of rounding and clipping on watermark detectability.} Since the outputs of the inverse DFT and YJT are real-valued, rounding and clipping are necessary for discrete features to preserve semantic validity. However, these operations may also perturb the sign-bit alignment in the frequency domain, potentially weakening the watermark signal. Fortunately, the sign-bit alignment of \textsc{Tab-Drw} is highly insensitive to such mild nonlinear perturbations. In addition, because our method preserves fidelity well under appropriate choices of $(\gamma,\delta)$, clipping occurs only rarely and rounding magnitudes remain minimal.

Table~\ref{tab:round_clip} shows the results of an ablation study comparing Z-scores with and without rounding and clipping across five datasets, together with the frequency and magnitude of these operations. For \textbf{Magic} dataset there are no rounding or clipping happening since all the columns are continuous. For other datasets, the impact of these two post-processing operations on watermark detectability is negligible.

\begin{table}[!t]
\centering
\caption{Detection performance of \textsc{Tab-Drw} with or without the rounding and clipping operations. Z-scores are averaged over 100 trials on tables with 1K rows. ``Rounding magnitude'' denotes the average rounding magnitude of discrete entries, and ``Clipping ratio'' denotes the fraction of discrete entries that are clipped.}
\label{tab:round_clip}
\resizebox{0.9\textwidth}{!}{%
\begin{tabular}{@{}c c c c c@{}}
\toprule
\textbf{Dataset}
& \textbf{W/O round and clip}
& \textbf{W/ round and clip}
& \textbf{Rounding magnitude}
& \textbf{Clipping ratio} \\
\midrule
Adult
& $15.21 \pm 1.00$
& $12.81 \pm 1.17$
& $0.0911 \pm 0.0015$
& $0.0008 \pm 0.0004$ \\
Magic
& $27.34 \pm 0.93$
& $27.34 \pm 0.93$
& $0.0000 \pm 0.0000$
& $0.0000 \pm 0.0000$ \\
Shoppers
& $21.00 \pm 1.15$
& $18.18 \pm 1.28$
& $0.0969 \pm 0.0042$
& $0.0244 \pm 0.0036$ \\
Default
& $17.94 \pm 0.95$
& $15.98 \pm 0.92$
& $0.0542 \pm 0.0018$
& $0.0151 \pm 0.0013$ \\
Drybean
& $37.79 \pm 1.02$
& $38.03 \pm 1.03$
& $0.1285 \pm 0.0027$
& $0.0145 \pm 0.0022$ \\
\bottomrule
\end{tabular}%
}
\end{table}

\subsection{Additional Robustness Evaluation} \label{app:add_robust}

\paragraph{Attacks with high strength.}In this section, we benchmark the robustness of \textsc{Tab-Drw} and other watermarking methods using attacks with higher strength. Specifically, we use the setting below:
\begin{enumerate}[leftmargin=*]
\item \textbf{Row Del.} removes 20\% of rows in a table.
\item \textbf{Col Del.} replaces 3 columns with unwatermarked values sampled from the same model.
\item \textbf{Cell Del.} replaces 20\% cells with unwatermarked values sampled from the same model.
\item \textbf{G(aussian)-Noise.} adds Gaussian noise with zero mean and a standard deviation equal to 20\% of each cell's value for numerical attributes.
\item \textbf{C(ategorical)-Noise.} perturbs categorical entries by randomly replacing 20\% of cells with values sampled from other rows in the same column.
\item \textbf{A(daptive)-Noise.} adds Gaussian noise with zero mean and 0.2 standard deviation to standardized attributes.
\item \textbf{Quantization.} discretizes numerical columns using quantile transformation with the 10 quantile bins and maps those discrete quantile levels back to the original data domain with the inverse transform.
\end{enumerate}

\begin{table}[!t]
\centering
\caption{Watermark robustness against attacks with higher strength. Average Z-score on 5K rows under seven variable-strength attacks.
Each value is obtained by repeating the attacks 100 times (10 times for ``TabWak$^*$'') and averaging the results. Our proposed \textsc{Tab-Drw} is evaluated with the hyperparameter $(\gamma,\delta) = (0.5, 0.5)$. Best performances are shown in \textbf{bold}, and second-best are \underline{underlined}.}
\resizebox{\textwidth}{!}{%
\begin{tabular}{@{}c c c c c c c c c@{}}
\toprule
\multicolumn{1}{c}{\multirow{3}{*}{\textbf{Datasets}}}
& \multicolumn{1}{c}{\multirow{3}{*}{\textbf{Method}}}
& \multicolumn{7}{c}{\textbf{Attacks}} \\
\cmidrule(lr){3-9}
& & \textbf{Row Del.}
& \textbf{Col Del.}
& \textbf{Cell Del.}
& \textbf{G-Noise}
& \textbf{C-Noise}
& \textbf{A-Noise}
& \textbf{Quantization}
\\
\cmidrule(lr){3-9}
& & \textbf{20\%}
& \textbf{3 col}
& \textbf{20\%}
& \textbf{20\%}
& \textbf{20\%}
& \textbf{20\%}
& \textbf{20\%}
\\
\midrule
\multirow{5}{*}{Adult}
& \multicolumn{1}{c}{GLW}
& 14.76 & 13.10 & \underline{13.19}
& 0.00  & \underline{16.54} & 2.77
& 3.03       \\
& \multicolumn{1}{c}{MUSE}
& 13.31  &  4.96  & 6.17
& \underline{11.84}  &  8.05  & 3.99
& \underline{10.99} \\
& \multicolumn{1}{c}{TabWak$^*$}
& 14.44 & 8.05 & 7.87
& 0.02  & 15.67& \underline{10.23}
& 5.56   \\
& \multicolumn{1}{c}{TabularMark}
& \underline{20.29} & \textbf{13.92} & 12.99
& 3.31  & 5.54 &  0.62
& 0.00 \\
& \cellcolor{gray!20}\textbf{\textsc{TAB-DRW}}
& \cellcolor{gray!20}\textbf{26.34} & \cellcolor{gray!20}\underline{13.12} & \cellcolor{gray!20}\textbf{14.37}
& \cellcolor{gray!20}\textbf{14.29} & \cellcolor{gray!20}\textbf{20.10} & \cellcolor{gray!20}\textbf{21.85}
& \cellcolor{gray!20}\textbf{16.41} \\
\midrule
\multirow{5}{*}{Magic}
& \multicolumn{1}{c}{GLW}
& \textbf{153.98}& \textbf{123.60}     &  \textbf{137.64}
& 0.10  & \textbf{172.20}& 0.31
& \underline{14.08} \\
& \multicolumn{1}{c}{MUSE}
& 31.56 & 3.70  & 9.30
& 8.39  & 33.34 & 4.06
& 0.39 \\
& \multicolumn{1}{c}{TabWak$^*$}
& 17.27 & 7.47 & 13.45
& \underline{16.44} & 19.76 & \underline{13.39}
& 12.86 \\
& \multicolumn{1}{c}{TabularMark}
& 16.09 & 10.68 & 11.74
& 0.00  & 19.39 & 0.68
& 0.00  \\
& \cellcolor{gray!20}\textbf{\textsc{TAB-DRW}}
& \cellcolor{gray!20}\underline{54.85} & \cellcolor{gray!20}\underline{11.75} & \cellcolor{gray!20}\underline{21.60}
& \cellcolor{gray!20}\textbf{33.93} & \cellcolor{gray!20}\underline{48.38} & \cellcolor{gray!20}\textbf{29.06}
& \cellcolor{gray!20}\textbf{26.43} \\
\midrule
\multirow{5}{*}{Shoppers}
& \multicolumn{1}{c}{GLW}
& \textbf{36.82} & \textbf{34.46} & \textbf{32.33}
& 0.00  & \textbf{39.08} & 1.13
& 0.00 \\
& \multicolumn{1}{c}{MUSE}
& 25.85 & 8.64 & 9.05
& \underline{21.58} & 16.20& \textbf{16.26}
& \underline{13.61} \\
& \multicolumn{1}{c}{TabWak$^*$}
& 8.93  & 2.26  & 0.97
& 0.00  & 10.47 & 1.22
& 0.69 \\
& \multicolumn{1}{c}{TabularMark}
& 13.68 & 8.74 & 10.13
& 0.98  & 13.29 & 0.00
& 1.42 \\
& \cellcolor{gray!20}\textbf{\textsc{TAB-DRW}}
& \cellcolor{gray!20}\underline{36.21} & \cellcolor{gray!20}\underline{13.75} & \cellcolor{gray!20}\underline{13.27}
& \cellcolor{gray!20}\textbf{37.71} & \cellcolor{gray!20}\underline{32.60} & \cellcolor{gray!20}\underline{13.10}
& \cellcolor{gray!20}\textbf{25.72} \\
\midrule
\multirow{5}{*}{Default}
& \multicolumn{1}{c}{GLW}
& 25.67 & \underline{19.89} & \textbf{19.88}
& 0.00  & \underline{27.08} & 6.49
& 9.60  \\
& \multicolumn{1}{c}{MUSE}
& \underline{30.80} & 8.52 & 7.30
& 14.01 & 17.67 & 3.79
& 4.97 \\
& \multicolumn{1}{c}{TabWak$^*$}
& 21.96 & 12.84 & 13.49
& \textbf{23.77} & 23.70 & \textbf{18.52}
& \textbf{20.25} \\
& \multicolumn{1}{c}{TabularMark}
& 19.72 & 16.21 & 11.17
& 0.00  & 17.10 & 0.80
& 2.58 \\
& \cellcolor{gray!20}\textbf{\textsc{TAB-DRW}}
& \cellcolor{gray!20}\textbf{31.92} & \cellcolor{gray!20}\textbf{20.70} & \cellcolor{gray!20}\underline{15.44}
& \cellcolor{gray!20}\underline{23.75} & \cellcolor{gray!20}\textbf{27.20} & \cellcolor{gray!20}\underline{16.99}
& \cellcolor{gray!20}\underline{14.10} \\
\midrule
\multirow{5}{*}{Drybean}
& \multicolumn{1}{c}{GLW}
& \textbf{116.96}& \textbf{104.46}& \textbf{98.54}
& 0.18  & \textbf{123.28}& 5.05
& \underline{27.90} \\
& \multicolumn{1}{c}{MUSE}
& 28.12 & 4.58 & 6.34
& 6.13  & 27.71 & 2.85
& 0.00 \\
& \multicolumn{1}{c}{TabWak$^*$}
& 16.01 & 0.00  & 0.00
& \underline{11.21}  & 17.53 & \underline{10.42}
& 3.43  \\
& \multicolumn{1}{c}{TabularMark}
& 12.06 & 5.27  & 3.22
& 0.00  & 13.54 & 2.43
& 0.00 \\
& \cellcolor{gray!20}\textbf{\textsc{TAB-DRW}}
& \cellcolor{gray!20}\underline{75.91} & \cellcolor{gray!20}\underline{32.92} & \cellcolor{gray!20}\underline{35.27}
& \cellcolor{gray!20}\textbf{23.57} & \cellcolor{gray!20}\underline{77.70} & \cellcolor{gray!20}\textbf{42.48}
& \cellcolor{gray!20}\textbf{45.95} \\
\bottomrule
\end{tabular}%
}
\label{tab:results_add_robust}
\end{table}

\begin{figure}[!t]
\centering
\includegraphics[width=\columnwidth]{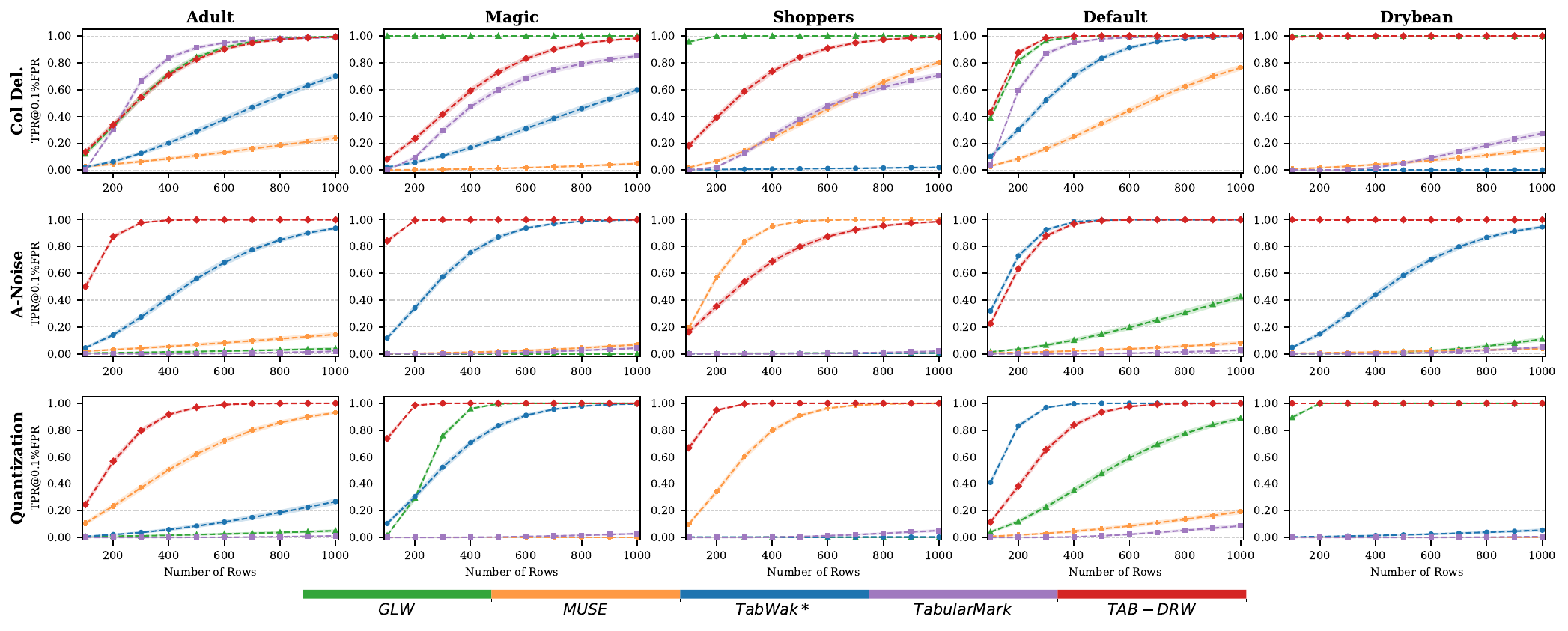}
\caption{TPR@0.1\%FPR versus row count under three representative attacks with higher strength. Dashed lines show the bootstrap mean estimate (500 resamples), and shaded regions indicate the 90\% confidence interval.}
\label{fig:bar_app}
\end{figure}

\begin{figure}[!t]
\centering
\includegraphics[width=\columnwidth]{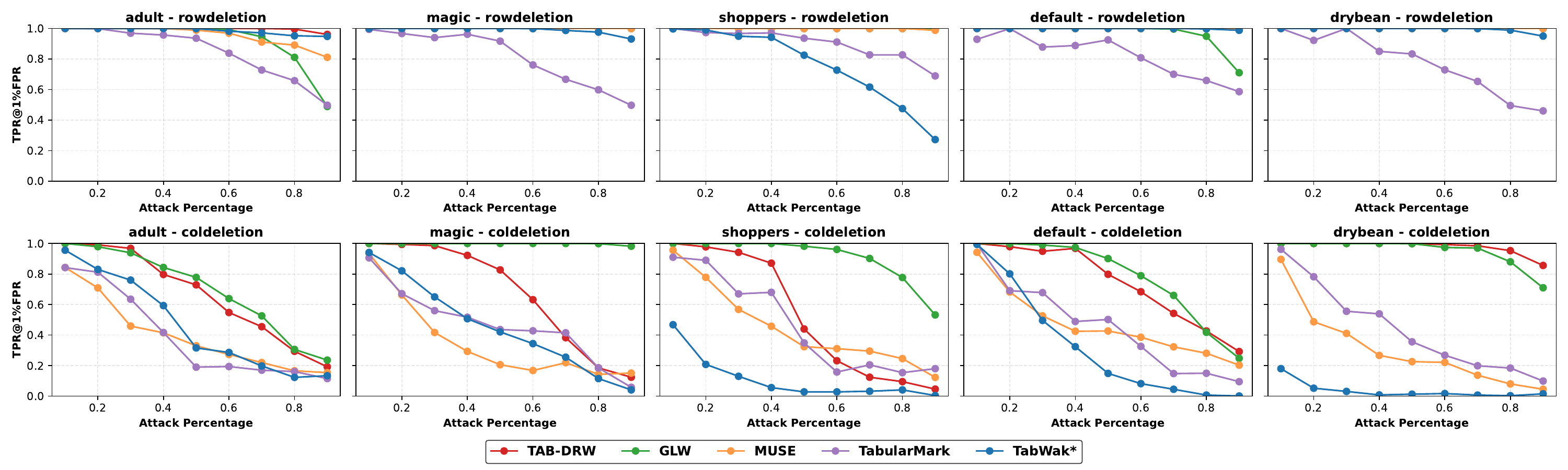}
\caption{TPR@1\%FPR versus attack strength under row and column deletion attacks. All experiments are conducted on tables with 1K rows. Each value denotes the result of 100 independent trials.}
\label{fig:attack_vary}
\end{figure}

Since the \textbf{Truncation}, \textbf{Resample}, and \textbf{Shuffle} attacks are applied with fixed strength, we omit them here.

Table~\ref{tab:results_add_robust} reports the average one-sided $Z$-score over 5K rows, evaluated under the enhanced attacks. Our watermarking method still demonstrates superior robustness across all attack types and datasets, ranking either first or second. Figure~\ref{fig:bar_app} shows TPR@0.1\%FPR versus the number of rows under three representative and strong attacks with higher strength setting. Among eight out of fifteen cases, our method reaches 1.0 TPR@0.1\%FPR using only 400 rows, with the remaining seven requiring fewer than 1K rows. In contrast, baseline methods often suffer reduced true positive rates or completely lose detectability under these conditions.

To provide a more comprehensive view of robustness, we include additional empirical results under row deletion and column deletion attacks with varying deletion strengths in Figure~\ref{fig:attack_vary}. The results show that our method ranks first or second across most attack levels, demonstrating strong resilience even under high-strength attacks.

Table~\ref{tab:rewatermark} shows the fidelity degradation under rewatermarking attacks of varying rounds, serving as an extension to Table~\ref{tab:adv_robustness}.

\begin{table}[!t]
\centering
\caption{Robustness of \textsc{Tab-Drw} against rewatermarking attacks of varying strength. Fidelity is averaged over four metrics across 10 independent trials, and the Z-scores are computed on tables with 5K rows and averaged over 100 independent trials. ``Rewatermarking@${n}$'' denotes rewatermarking the table using $n$ randomly sampled keys.}
\resizebox{\textwidth}{!}{%
\begin{tabular}{@{}c c c c c c c c c c@{}}
\toprule
\multirow{2}{*}{\textbf{Datasets}} &
& \multicolumn{2}{c}{\textbf{No-attack}}
& \multicolumn{2}{c}{\textbf{Rewatermarking@1}}
& \multicolumn{2}{c}{\textbf{Rewatermarking@3}}
& \multicolumn{2}{c}{\textbf{Rewatermarking@10}} \\
\cmidrule(lr){3-4} \cmidrule(lr){5-6} \cmidrule(lr){7-8} \cmidrule(lr){9-10}
&
& \textbf{Fidelity} & \textbf{Z-score}
& \textbf{Fidelity} & \textbf{Z-score}
& \textbf{Fidelity} & \textbf{Z-score}
& \textbf{Fidelity} & \textbf{Z-score} \\
\midrule
Adult   &
& $0.799 \pm 0.006$ & $29.55 \pm 1.12$
& $0.787 \pm 0.008$ & $23.66 \pm 1.17$
& $0.772 \pm 0.006$ & $16.26 \pm 1.09$
& $0.766 \pm 0.009$ & $17.26 \pm 1.34$ \\
Magic   &
& $0.837 \pm 0.008$ & $61.42 \pm 1.02$
& $0.822 \pm 0.008$ & $53.23 \pm 0.91$
& $0.813 \pm 0.007$ & $34.32 \pm 0.93$
& $0.799 \pm 0.008$ & $29.17 \pm 1.00$ \\
Shoppers&
& $0.854 \pm 0.009$ & $40.74 \pm 1.26$
& $0.847 \pm 0.008$ & $31.97 \pm 1.15$
& $0.829 \pm 0.009$ & $20.14 \pm 1.09$
& $0.813 \pm 0.008$ & $16.67 \pm 1.09$ \\
Default &
& $0.836 \pm 0.013$ & $35.84 \pm 0.91$
& $0.827 \pm 0.011$ & $32.85 \pm 1.00$
& $0.811 \pm 0.013$ & $19.40 \pm 1.07$
& $0.809 \pm 0.010$ & $26.28 \pm 1.18$ \\
Drybean &
& $0.849 \pm 0.017$ & $85.05 \pm 0.67$
& $0.832 \pm 0.017$ & $44.79 \pm 0.81$
& $0.801 \pm 0.017$ & $29.47 \pm 0.83$
& $0.806 \pm 0.014$ & $33.77 \pm 0.95$ \\
\bottomrule
\end{tabular}%
}
\label{tab:rewatermark}
\end{table}

\paragraph{Robustness to spoofing attacks.}
We also clarify the robustness of \textsc{Tab-Drw} under spoofing attacks mentioned in~\citet{2024adaptive}. In our setting, spoofing attack refers to making unwatermarked data showcase strong watermark signal under the detection using a specific key. \textsc{Tab-Drw} is explicitly designed to make this extremely difficult, as detailed in Appendix~\ref{app:privacy_tabdrw}. In conclusion, the privacy-enhanced variant applies a key-dependent column permutation before the YJT and DFT, creating a large key space and yielding empirically negligible cross-key collisions (i.e., a watermark embedded with one key cannot be misdetected under another). This arises because the imaginary sign-bit alignments induced by different keys (i.e., different column orders) are approximately unrelated. Since the detection key is private, an adversary without access to it cannot efficiently tune modifications to increase the Z-score, and naive or heuristic modifications either fail to spoof the watermark or noticeably degrade data fidelity. Even if attackers imitate the imaginary sign pattern of the frequency-domain representation from a watermarked table, they still cannot produce a detectable watermark signal as long as their keys differ from the specific detection key. We refer the readers to Appendix~\ref{app:privacy_tabdrw} and Appendix~\ref{app:multi-key_eva} for additional analysis and empirical justification.

However, even without access to the watermark key, an adversary could still attempt a model-level spoofing attack by distilling a new generator from watermarked outputs, similar to the spirit of distilling from watermarked LLMs~\citep{2024gu}. To evaluate this threat, we simulate such an attacker as follows: for each of the five datasets, we first sample a large corpus of synthetic tables from a TabSyn model equipped with \textsc{Tab-Drw}, and then train a fresh TabSyn model only on these watermarked samples, using the same architecture and training protocol as the original generator. We then generate tables from the distilled model and test them with our standard detector.

The results in Table~\ref{tab:spoofing} indicate that this distilled generator generally fails to successfully spoof the \textsc{Tab-Drw} watermark. For \textbf{Adult}, \textbf{Shoppers}, and \textbf{Default}, tables with 1K rows produced by the distilled model are statistically indistinguishable from unwatermarked data. In contrast, for \textbf{Magic} and \textbf{Drybean}, we observe a moderate watermark signal. Empirically, datasets dominated by continuous attributes appear more vulnerable to this kind of spoofing by distillation, which we hypothesize is due to the stronger and more smoothly distributed watermark signal embedded in their continuous feature distributions.

\begin{table}[!t]
\centering
\caption{Robustness of \textsc{Tab-Drw} against distilling-based spoofing attack. Each entry denotes the result over 100 independent trials.}
\label{tab:spoofing}
\resizebox{0.7\textwidth}{!}{%
\begin{tabular}{c c c c c}
\toprule
\multirow{2}{*}{\textbf{Datasets}} & \multicolumn{2}{c}{\textbf{1K rows}} & \multicolumn{2}{c}{\textbf{5K rows}}  \\
\cmidrule(lr){2-3} \cmidrule(lr){4-5}
& \textbf{Z-score} & \textbf{TPR@0.1\%FPR} & \textbf{Z-score} & \textbf{TPR@0.1\%FPR} \\
\midrule
Adult    & 0.88±1.03 & 0.01 & 1.62±1.04 & 0.05\\
Magic    & 1.87±1.12 & 0.15 & 3.75±1.08 & 0.74\\
Shoppers & 1.21±0.86 & 0.01 & 2.19±0.97 & 0.13\\
Default  & 0.62±0.94 & 0.00 & 1.28±1.06 & 0.03\\
Drybean & 2.02±0.98 & 0.12  & 4.13±1.25 & 0.79\\
\bottomrule
\end{tabular}%
}
\end{table}

\subsection{Privacy-Enhanced TAB-DRW Evaluation} \label{app:multi-key_eva}

\begin{table}[H]
\centering
\caption{Data fidelity and watermark detectability of privacy-enhanced \textsc{Tab-Drw} under varying watermark keys. All experiments use $(\gamma,\delta) = (0.5, 0.5)$. Fidelity metrics are averaged over 10 trials, and the $Z$-score is averaged over 100 trials.}
\label{tab:multi-key_evaluation}
\resizebox{\textwidth}{!}{%
\begin{tabular}{@{}cccccccc@{}}
\toprule
\multirow{2}{*}{\textbf{Datasets}} & \multirow{2}{*}{\textbf{Key}}
& \multicolumn{4}{c}{\textbf{Fidelity Metric}}
& \multicolumn{2}{c}{\textbf{Z-score}}\\
\cmidrule(lr){3-6} \cmidrule(lr){7-8}
& & \textbf{Density $\uparrow$} & \textbf{Corr $\uparrow$} & \textbf{C2ST $\uparrow$} & \textbf{MLE $\uparrow$}
& \textbf{1K rows $\uparrow$} & \textbf{5K rows $\uparrow$}\\
\midrule
\multirow{4}{*}{Adult}
& W/O      & 0.922±0.001 & 0.872±0.001 & 0.611±0.004 & 0.824±0.005 & – & –\\
& Key 48   & 0.912±0.003 & 0.862±0.003 & 0.598±0.008 & 0.814±0.009 & 11.98±0.97 & 26.48±1.07 \\
& Key 496  & \textbf{0.916±0.003} & \textbf{0.869±0.004} & 0.601±0.006 & \textbf{0.819±0.009} & \textbf{16.69±1.24} &  \textbf{37.39±1.37}\\
& Key 928  & 0.915±0.005 & 0.864±0.004 & \textbf{0.604±0.008} & 0.816±0.009 & 12.81±1.17 &  29.55±1.12\\
\midrule
\multirow{3}{*}{Magic}
& W/O            & 0.917±0.001 & 0.945±0.003 & 0.672±0.004 & 0.823±0.007 & – & – \\
& Key 48 & \textbf{0.915±0.004} & 0.934±0.005 & \textbf{0.676±0.009} & \textbf{0.821±0.009} & 24.06±0.75 & 53.19±0.86\\
& Key 496  & 0.915±0.004 & \textbf{0.939±0.005} & 0.666±0.007 & 0.819±0.012 & 27.33±1.04 & 61.17±1.08\\
& Key 928  & 0.910±0.005 & 0.935±0.003 & 0.676±0.009 & 0.818±0.014 & \textbf{27.34±0.93} & \textbf{61.42±1.02}\\
\midrule
\multirow{3}{*}{Shoppers}
& W/O            & 0.919±0.002 & 0.910±0.001 & 0.704±0.005 & 0.902±0.012 & – & – \\
& Key 48     & \textbf{0.912±0.004} & \textbf{0.907±0.002} & 0.706±0.008 & \textbf{0.893±0.015} & 16.15±1.11 & 36.11±1.16\\
& Key 496  & 0.904±0.005 & 0.902±0.003 & 0.698±0.011 & 0.889±0.015 & 14.84±1.10 & 34.28±1.16\\
& Key 928  & 0.909±0.006 & 0.902±0.003 & \textbf{0.712±0.013} & 0.891±0.014 & \textbf{18.18±1.28} & \textbf{40.74±1.26}\\
\midrule
\multirow{3}{*}{Default}
& W/O            & 0.930±0.001 & 0.907±0.001 & 0.717±0.003 & 0.797±0.009 & – & – \\
& Key 48 & \textbf{0.929±0.010} & 0.906±0.007 & \textbf{0.715±0.010} & \textbf{0.791±0.013} & 13.56±1.02 &  30.32±0.98\\
& Key 496  & 0.929±0.010 & \textbf{0.907±0.010} & 0.714±0.012 & 0.791±0.013 & 13.82±0.96 & 30.90±0.99\\
& Key 928  & 0.929±0.010 & 0.907±0.011 & 0.713±0.018 & 0.791±0.013 & \textbf{15.98±0.92} & \textbf{35.84±0.91}\\
\midrule
\multirow{3}{*}{Drybean}
& W/O            & 0.932±0.001 & 0.935±0.001 & 0.640±0.003 & 0.878±0.009 & – & – \\
& Key 48 & 0.930±0.007 & 0.926±0.005 & 0.649±0.014 & \textbf{0.881±0.017} & 37.21±0.84 & 83.14±0.91\\
& Key 496  & 0.930±0.007 & \textbf{0.929±0.006} & 0.631±0.025 & 0.875±0.011 & 30.98±0.91 & 70.27±0.83\\
& Key 928  & \textbf{0.931±0.013} & 0.928±0.007 & \textbf{0.655±0.029} & 0.880±0.019 & \textbf{38.03±1.03} & \textbf{85.05±0.67}\\
\bottomrule
\end{tabular}%
}
\end{table}

\paragraph{Data fidelity vs. watermark detectability.}
Table~\ref{tab:multi-key_evaluation} shows that privacy-enhanced \textsc{Tab-Drw} achieves consistently high data fidelity and detectability across three randomly sampled keys. Although minor variations exist, they remain within an acceptable range, indicating that users need not devote much effort to tuning the key. Additionally, the empirical results further strengthen our claim that the key-dependent variability in the frequency-domain representation does not substantially affect watermark distortion or detectability.

\paragraph{Multi-key scenarios.} Under a deployment scenario with multiple watermark key holders, we evaluate potential key collision, i.e., how many different keys $\kappa$ in Algorithm~\ref{alg:1_privacy}~\&~\ref{alg:2} can be used for a dataset without leading to false positives during detection. In practice, there exists an upper bound on the number of watermark keys that can be supported without introducing elevated false positives. And this capacity is influenced by the number of dataset columns. The cross-key confusion matrices in Table~\ref{tab:multi-key} present empirical results on the ability to detect and distinguish between different watermark keys, demonstrating the superiority of our method in avoiding potential key collisions in multi-user scenarios.

\begin{table}[!t]
\centering
\caption{Multi-key evaluation on five benchmarks. The randomly selected keys along the horizontal axis are used for sampling, while those along the vertical axis are used for detection: FPR/TPR(diagonal) of 1K independent trials under threshold $q_\alpha=6$ on 1K rows.}
\label{tab:multi-key}
\begin{tabular}{c c c c c c c}
\toprule
\multirow{2}{*}{\textbf{Dataset}} & \multirow{2}{*}{\textbf{Detection key}} & \multicolumn{5}{c}{\textbf{Sampling keys}} \\
\cmidrule(lr){3-7}
& & Key 48 & Key 275 & Key 496 & Key 643 & Key 928 \\
\midrule
\multirow{5}{*}{Adult}
& Key 48  & 1.000 & 0.000 & 0.000 & 0.000 & 0.000 \\
& Key 275 & 0.000 & 0.998 & 0.000 & 0.000 & 0.000 \\
& Key 496 & 0.000 & 0.000 & 1.000 & 0.001 & 0.000 \\
& Key 643 & 0.000 & 0.007 & 0.000 & 0.996 & 0.000 \\
& Key 928 & 0.000 & 0.000 & 0.000 & 0.000 & 1.000 \\
\midrule
\multirow{5}{*}{Magic}
& Key 48  & 1.000 & 0.000 & 0.000 & 0.000 & 0.000 \\
& Key 275 & 0.000 & 1.000 & 0.000 & 0.000 & 0.000 \\
& Key 496 & 0.000 & 0.000 & 1.000 & 0.001 & 0.000 \\
& Key 643 & 0.000 & 0.000 & 0.000 & 1.000 & 0.000 \\
& Key 928 & 0.000 & 0.000 & 0.000 & 0.000 & 1.000 \\
\midrule
\multirow{5}{*}{Shoppers}
& Key 48  & 1.000 & 0.000 & 0.000 & 0.000 & 0.000 \\
& Key 275 & 0.000 & 1.000 & 0.000 & 0.000 & 0.000 \\
& Key 496 & 0.000 & 0.000 & 1.000 & 0.000 & 0.000 \\
& Key 643 & 0.000 & 0.000 & 0.000 & 1.000 & 0.000 \\
& Key 928 & 0.000 & 0.000 & 0.000 & 0.000 & 1.000 \\
\midrule
\multirow{5}{*}{Default}
& Key 48  & 1.000 & 0.000 & 0.000 & 0.000 & 0.000 \\
& Key 275 & 0.002 & 1.000 & 0.000 & 0.000 & 0.000 \\
& Key 496 & 0.000 & 0.000 & 1.000 & 0.000 & 0.000 \\
& Key 643 & 0.000 & 0.000 & 0.000 & 1.000 & 0.000 \\
& Key 928 & 0.000 & 0.000 & 0.000 & 0.000 & 1.000 \\
\midrule
\multirow{5}{*}{Drybean}
& Key 48  & 1.000 & 0.000 & 0.000 & 0.000 & 0.000 \\
& Key 275 & 0.000 & 1.000 & 0.000 & 0.000 & 0.000 \\
& Key 496 & 0.000 & 0.000 & 1.000 & 0.000 & 0.000 \\
& Key 643 & 0.000 & 0.000 & 0.000 & 1.000 & 0.004 \\
& Key 928 & 0.000 & 0.000 & 0.000 & 0.000 & 1.000 \\
\bottomrule
\end{tabular}%
\end{table}

\section{Potential Limitations} \label{app:limitations}
Although \textsc{Tab-Drw} demonstrates strong robustness under a broad range of post-processing and adaptive attacks, it is not intended to be unconditionally robust against arbitrarily strong transformations. First, sufficiently aggressive attacks that heavily disturb the rank statistic used in Algorithm~\ref{alg:2}, or repeated rewatermarking over many rounds, can eventually reduce watermark detectability, although such attacks also cause noticeable fidelity degradation. Second, the statistical power of the detector depends on the number of effective DFT entries and the number of available rows; therefore, tables with very few usable columns or very small sample size are inherently less favorable for reliable detection. Third, our additional spoofing experiments suggest that datasets dominated by continuous attributes may be more vulnerable to distillation-based spoofing than mixed-type datasets. Finally, in multi-key deployment, the number of mutually distinguishable keys is finite and depends on the table dimension, so the practical capacity of the key space is dataset-dependent.

\section{The Use of Large Language Models (LLMs)}
We acknowledge the use of a large language model (LLM) solely for polishing writing. The LLM was not employed for developing mathematical theorems, proofs, or any part of the experimental results or analysis. All text edited with the assistance of the LLM has been carefully reviewed to ensure that it does not introduce plagiarism or scientific misconduct. We take full responsibility for all content presented in this work.

\end{document}